\title{Enumeration on Trees with Tractable Combined Complexity and Efficient Updates}
\newcommand*{\defeq}{\mathrel{\rlap{%
  \raisebox{0.3ex}{$\m@th\cdot$}}%
  \raisebox{-0.3ex}{$\m@th\cdot$}}%
  =}
\newcommand{\card}[1]{\left|{#1}\right|}
\newcommand{\query}{\Phi}
\renewcommand{\phi}{\varphi}
\newcommand{\calT}{\mathcal{T}}
\newcommand{\X}{\mathcal{X}}
\newcommand{\Y}{\mathcal{Y}}
\renewcommand{\S}{\mathrm{S}}
\newcommand{\Prov}{\mathrm{Prov}}
\newcommand{\var}{\mathrm{var}}
\newcommand{\Leaf}{\mathrm{Leaf}}
\newcommand{\fib}{\textsc{fib}}
\newcommand{\fbb}{\textsc{fbb}}
\newcommand{\node}{\mathsf{node}}
\newcommand{\hole}{\Box}
\newcommand{\boxf}{\textsc{box}}
\newcommand{\leftb}{\textsc{left}}
\newcommand{\rightb}{\textsc{right}}
\newcommand{\transitive}[1]{{\downarrow}\hspace{-.1em}({#1})}
\newcommand{\depth}{\mathrm{depth}\xspace}
\newcommand{\height}{\mathrm{height}\xspace}
\newcommand{\poly}{\mathrm{poly}\xspace}
\newcommand{\cuppath}{\mathrel{\raisebox{-3pt}{$\overset{\cup}{\leadsto}$}}}
\newcommand{\result}[1]{{#1}^\text{res}}
\algrenewcommand{\algorithmicindent}{1em}
\begin{abstract}
We give an algorithm to enumerate the results on trees of monadic second-order
(MSO) queries 
represented by \emph{nondeterministic} tree automata. After
linear time preprocessing (in the input tree), we can enumerate answers with
linear delay (in each answer). We allow updates on the tree to take place at
any time, and we can then restart the enumeration after logarithmic time in the
tree.
Further, all our combined complexities are polynomial in the automaton.

Our result follows our previous circuit-based enumeration algorithms based on deterministic
tree automata,
and is also inspired by our earlier result on
words and nondeterministic sequential extended variable-set
automata
in the context of document spanners. 
We extend these results and combine them with a
recent tree balancing scheme by Niewerth, so that our enumeration structure
supports updates to the underlying tree in logarithmic time (with leaf
insertions, leaf deletions, and node relabelings).
Our result implies that,
for MSO queries with free first-order variables, we can
enumerate the results with linear preprocessing and constant-delay and update
the underlying tree in logarithmic time, which improves on several known results for words and trees.

Building on lower bounds from data structure research, we also show unconditionally that up to a doubly logarithmic factor the update
time of our algorithm is optimal. Thus, unlike other settings, 
there can
be no algorithm with constant update time.

\end{abstract}
\author{Antoine Amarilli}
\affiliation{LTCI, Télécom ParisTech, Université Paris-Saclay}
\author{Pierre Bourhis}
\affiliation{CRIStAL, CNRS UMR 9189 \& Inria Lille}
\author{Stefan Mengel}
\affiliation{CNRS, CRIL UMR 8188}
\author{Matthias Niewerth}
\affiliation{University of Bayreuth}
\begin{document}
\newtheorem{claim}[theorem]{Claim}
\newtheorem{observation}[theorem]{Observation}
\maketitle

\section{Introduction}
\label{sec:intro}
Query evaluation is one of the central tasks in databases.
A prominent framework for this task in theoretical work
are so-called \emph{enumeration algorithms}. Such algorithms print out
query answers one after the other, without duplicates, and bound the time between each two answers, called the \emph{delay} of the algorithm.
Enumeration algorithms were introduced in database theory by
the pioneering work of Durand and Grandjean~\cite{durand2007first},
in which the goal was to achieve a constant bound on the delay, called
\emph{constant-delay enumeration}.
In such algorithms, the delay between different solutions only depends on the
query to be evaluated,
in particular it is independent of the database.
To ensure such a bound, the algorithm is allowed to first perform a
\emph{preprocessing phase} on the database,
which in most cases is required to run in linear time.
While these time bounds sound very restrictive, 
a surprising range of query evaluation problems allow enumeration
in this regime, as surveyed, e.g., in~\cite{Segoufin14}. Most notably,
constant-delay enumeration with linear-time preprocessing 
can be achieved for acyclic free-connex queries on arbitrary
databases~\cite{bagan2007acyclic} (with extensions for the case of functional
dependencies~\cite{carmeli2018enumeration} and for unions of conjunctive
queries~\cite{berkholz2018answering}), for first-order queries on bounded degree
databases~\cite{berkholz2017answering2} and nowhere dense
databases~\cite{schweikardt2018enumeration} (with pseudo-linear time
preprocessing), for regular path
queries~\cite{martens2018evaluation}, and for monadic second-order queries
on trees and bounded-treewidth structures
\cite{bagan2006mso,kazana2013enumeration} which is the setting that we study.

Enumeration algorithms have recently been extended to handle the
fact that practical databases are rarely static but change frequently. 
Specifically, it is important for enumeration algorithms to efficiently handle
updates to the underlying database, and maintain their precomputed index structures
without recomputing them from scratch at each update.
This approach was inspired by the work of Balmin et al.~\cite{balmin2004incremental}
that showed how to maintain queries on trees efficiently under updates, but only
for Boolean queries, i.e., without the ability to enumerate results.
The work of Losemann and Martens \cite{losemann2014mso} was the first to extend this to
an update-aware enumeration algorithm,
which handled efficient enumeration of queries in
monadic second order logic (MSO) 
queries on words and
trees.
Note that the efficiency in such update-aware enumeration algorithms is measured by three parameters: the enumeration delay, the time for an initial preprocessing, and the time necessary for each update.
Since~\cite{losemann2014mso}, several such algorithms have been designed in
several settings, namely, CQs \cite{berkholz2017answering} and UCQs
\cite{berkholz2018answering}, FO+MOD queries on bounded degree databases~\cite{berkholz2017answering2}, and MSO queries over words and trees (see
Table~\ref{tab:overview}).

In this paper, we focus on the setting where we evaluate 
MSO queries on trees~\cite{bagan2006mso}. This
task is motivated by, e.g., querying tree-shaped data such as XML or JSON
documents, and the important special case of query evaluation on words, e.g.,
for information extraction using \emph{document spanners}~\cite{FaginKRV15,florenzano2018constant}.
It is already known that for any MSO query the answers on a given tree can be enumerated
in linear time~\cite{bagan2006mso,kazana2013enumeration},
and this was extended to updates by Losemann and Martens~\cite{losemann2014mso}
but at the cost of increasing the enumeration delay to a polylogarithmic term
(in particular making it depend on the database).
This delay and update time were recently made logarithmic in
\cite{Niewerth18}, and two incomparable algorithms can achieve constant
delay in special cases: when restricting the structure to be a
word~\cite{NiewerthS18}, or when restricting updates to \emph{relabelings},
i.e., disallowing structural changes on the input tree. See 
Table~\ref{tab:overview} for an overview.
\begin{table}[t]
  {\centering
    \renewcommand{\tabcolsep}{2pt}
  \begin{tabularx}{\linewidth}{Xlll}
    \toprule
    {\bfseries Work} & {\bfseries Data} & {\bfseries Delay} & {\bfseries Updates}
    \\
    \midrule
    Bagan~\cite{bagan2006mso}, \newline Kazana and Segoufin~\cite{kazana2013enumeration} & trees & $O(1)$ & N/A
    \\
    Losemann and Martens~\cite{losemann2014mso} & words & $O(\log n)$ & $O(\log n)$ 
    \\
    Losemann and Martens~\cite{losemann2014mso} & trees & $O(\log^2 n)$ & $O(\log^2 n)$ 
    \\
    Niewerth and Segoufin~\cite{NiewerthS18} & words & $O(1)$ &
    $O(\log n)$ \\
    Niewerth~\cite{Niewerth18} & trees & $O(\log n)$ &
    $O(\log n)$ \\
    Amarilli, Bourhis, Mengel \cite{amarilli2018enumeration} &
    trees & $O(1)$ & $O(\log n)^1$\\
    \midrule
    \bfseries{this paper} & \bfseries{trees} & \bfseries{$\bm{O(1)}$} &
    \bfseries{$\bm{O(\log n)}$}\\
    \bottomrule
  \end{tabularx}}

  \begin{flushleft}$^1$: Only supports \emph{relabeling updates}\end{flushleft}
    \caption{State of the art for enumeration of MSO query results (for free
    FO variables) on trees under updates. All these algorithms have linear time
  preprocessing.\vspace{-.7cm}}\label{tab:overview}
\end{table}

Our first contribution in this paper is to show an update-aware enumeration
algorithm for MSO query evaluation on trees that achieves linear-time
preprocessing, constant-delay enumeration, and supports updates in logarithmic
time including node insertions and deletions. This strictly
outperforms all previously known algorithms and shows that updates can be
handled in this setting without worsening the delay. Further, our algorithm can also
handle the case of monadic second-order queries with free second-order
variables, for which query answers no longer have constant size: in this case
the delay is linear in the size of each produced answer, as in the static case~\cite{bagan2006mso}.

Our result is shown using the \emph{circuit-based approach} to enumeration
developed 
in~\cite{amarilli2017circuit,amarilli2017circuit_arxiv} and also used
in~\cite{amarilli2018enumeration} to show the result for relabelings.
In this approach, the input is first translated into a circuit representation,
and we impose on the circuit some properties inspired by similar concepts in knowledge compilation,
namely, that it is a complete structured DNNF whose width only depends on the
query.
Then the actual enumeration is
performed on this circuit by relying on these properties.
This approach of translating MSO queries on trees into circuits was
first proposed for probabilistic query evaluation and provenance
computation~\cite{amarilli2015provenance}, and has the advantage of being
applicable to different enumeration problems with very few adaptations, and of
being modular (i.e., in the enumeration we no longer need to refer to the database
or query). 
Our circuit based approach is also related to
factorized~\cite{olteanu2016factorized} representations
where query results are first encoded into a compact representation before
performing other tasks on it, e.g., aggregation.

The circuit-based approach to enumeration lends itself well to relabeling updates,
as we already presented in~\cite{amarilli2018enumeration}: we can toggle
the value of gates when labels change, and then propagate information upwards in
the circuit. The update time then depends on the \emph{depth} of the circuit,
which in our construction is linear in the height of the input tree.
In~\cite{amarilli2018enumeration} we side-stepped this problem by computing a
balanced tree decomposition on the input during the preprocessing phase using~\cite{bodlaender1998parallel}, so
that we applied our construction to a tree of logarithmic height to achieve
logarithmic update time when relabeling nodes of the tree. However,
we did not know how to maintain such a balanced tree decomposition under other
updates like leaf additions and deletions.

Our first contribution is shown by observing that we can combine the techniques
of~\cite{amarilli2018enumeration} with those developed in the recent independent
work~\cite{Niewerth18}. There, it was shown that one can represent trees
as balanced \emph{forest algebra} terms~\cite{BojanczykW07,Bojanczyk10} and
efficiently maintain them under updates. We show that this technique can also be used
in the circuit-based approach, formalizing the forest algebra updates via
\emph{tree hollowings} (in Section~\ref{sec:updates}) and giving an entirely
bottom-up presentation of circuit construction and enumeration (in
Sections~\ref{sec:circuits}--\ref{sec:enumboxes}). This allows us to show our
main result that generalizes the incomparable results
of~\cite{bagan2006mso,kazana2013enumeration}, \cite{NiewerthS18},
\cite{Niewerth18}, and \cite{amarilli2018enumeration}.

The second contribution of this paper is an enumeration
algorithm which is tractable \emph{in combined complexity}. Indeed, the
complexities given in Table~\ref{tab:overview} only apply to data complexity,
i.e., complexity in the database when assuming that the query is fixed. However,
if we want our enumeration algorithms to be applicable, it is important that the
complexity remains tractable, e.g., polynomial, when the query is given as part
of the input. This is an unreasonable hope when the query is
written in MSO, as the combined complexity of Boolean MSO query evaluation (even
without enumeration) is generally nonelementary~\cite{meyer1975weak}. However,
if the query is given as a \emph{tree automaton}, then Boolean evaluation has
tractable combined complexity. But achieving combined tractability for
enumeration is challenging: all results in Table~\ref{tab:overview}
are either intractable in the query or are only tractable when assuming that the
input query is represented as a \emph{deterministic} tree automaton. This
assumption is problematic already in the case of words, as we cannot, e.g.,
convert an input regular expression query on words to a deterministic automaton
in polynomial time. In our recent work~\cite{amarilli2019constant}, we showed
that enumeration \emph{on words} (and without updates) could be performed with
linear-preprocessing and constant delay while being polynomial in an input
\emph{nondeterministic} automaton (presented in the context of information
extraction and \emph{document spanners}~\cite{FaginKRV15}). In the present
paper, we extend these techniques
to make our enumeration algorithm polynomial in
an input nondeterministic tree automaton representation of the query; in other
words we perform enumeration on circuits that are not
\emph{deterministic}~\cite{darwiche2001tractable}. This is performed in
Sections~\ref{sec:nodupes}--\ref{sec:enumboxes} by extending the techniques
of~\cite{amarilli2019constant} to trees. Our results also imply efficient
enumeration algorithms to evaluate the matches of automata on words and
efficiently maintain these results under updates, generalizing some results
of~\cite{amarilli2019constant} to the case of dynamic words.

Our third contribution is to show that our algorithm is close to optimal. We
leverage known lower bounds from data structure research~\cite{AlstrupHR98} to show unconditionally that the update time for an enumeration algorithm for MSO queries has to be $\Omega(\log n/ \log\log n)$ when the delay is constant, or even when it is allowed to be in $o(\log n / \log\log n)$.
Thus, our result is optimal up to a doubly logarithmic factor.

\paragraph*{Paper Structure.}
We give preliminaries in Section~\ref{sec:prelim}. 
We present in Section~\ref{sec:circuits} how to construct from the input tree and
nondeterministic automaton a circuit
representation of the results. We present our enumeration algorithm on such
circuits in Sections~\ref{sec:simple}--\ref{sec:enumboxes}, starting with a
simple scheme in Section~\ref{sec:simple} which enumerates the results with
duplicates and large delays, refining it in
Section~\ref{sec:nodupes} to avoid duplicates, and showing how to achieve the right delay
bound in Section~\ref{sec:enumboxes}. In Section~\ref{sec:updates}, we show
how the algorithm can support updates using~\cite{Niewerth18}, leading to our main results in
Section~\ref{sec:main}. We present our
lower bounds in Section~\ref{sec:lower}, before concluding in
Section~\ref{sec:conc}.
Full proofs are deferred
to the full version~\cite{amarilli2019enumeration_arxiv}.

\section{Preliminaries}
\label{sec:prelim}
We define words, trees, and valuations, present our
automata and a homogenization lemma, and state our problem.

\paragraph*{Trees and Valuations.}
In this paper, we work with trees that are all \emph{rooted} and \emph{ordered}, i.e., there is an order on
the children of each node.
Given a set $\Lambda$ of tree labels, a \mbox{\emph{$\Lambda$-tree}}~$T$
(or \emph{tree} when $\Lambda$ is clear from context)
is a
pair of a rooted tree (also written $T$) and of a \emph{labeling
  function}~$\lambda$
  that maps each node $n$ of~$T$ to a \emph{label}
  $\lambda(n)\in \Lambda$.
  We write $\Leaf(T)$ for the set of leaves of~$T$.
  We abuse notation and identify $T$ with its set of nodes, i.e., we can write
  that $\Leaf(T) \subseteq T$. An \emph{internal node} is a node of $T \setminus
  \Leaf(T)$.
  Until Section~\ref{sec:updates}, all trees in this paper will be \emph{binary},
  i.e., every internal node has
exactly two children, which we refer to as \emph{left} and \emph{right} child.

When evaluating a query with \emph{variables} $\X$ on a
\mbox{$\Lambda$-tree}~$T$,
we will see its possible results as
\emph{valuations}: an \mbox{\emph{$\X$-valuation}} of~$T$
is a function  $\nu:\Leaf(T)\to 2^{\X}$
that assigns to every leaf~$n$ of~$T$ a set of
variables $\nu(n) \subseteq \X$ called the \emph{annotation} of~$n$.
Note that our variables are second-order, i.e., each variable can be
interpreted as a set of nodes of~$T$.
We represent valuations concisely as \emph{assignments}: 
an \emph{$\X$-assignment} is a set $S$ of
\emph{singletons} which are pairs of the form $\langle Z: n\rangle$, where $Z \in
\X$ and
$n \in \Leaf(T)$. The \emph{size} $\card{S}$ of~$S$ is simply the number of
singletons that it contains. There is a clear bijection between $\X$-valuations and
$\X$-assignments, so we write $\alpha(\nu)$ for an $\X$-valuation $\nu$ to
mean the assignment $\{\langle Z: n\rangle \mid Z \in \nu(n)\}$, and
write $\card{\nu} \colonequals \card{\alpha(\nu)}$.
We often 
write $\langle \Y: n\rangle$
for some $n \in T$ and some \emph{non-empty} set 
$\Y \subseteq \X$,
to mean the non-empty set of
singletons $\{\langle Z: n\rangle \mid Z \in \Y\}$.

\paragraph*{Tree Variable Automata}
We will write our query on trees using \emph{tree automata} that
can express, e.g., queries in \emph{monadic second-order logic}
(MSO): see~\cite{thatcher1968generalized} and \cite[Appendix
E.1]{amarilli2017circuit_arxiv}.
Formally, a \emph{tree variable automaton} TVA on binary \mbox{$\Lambda$-trees} for variable
set~$\X$ (or $\Lambda,\X$-TVA) is a tuple $A = (Q, \iota, \delta, F)$,
where $Q$ is a finite set of \emph{states},
$\iota \subseteq \Lambda \times 2^{\X} \times Q$ is the \emph{initial
  relation},
  $\delta \subseteq \Lambda \times Q \times Q \times Q$ is the \emph{transition
  relation}, and $F \subseteq Q$ is the set of
\emph{final states}.
The \emph{size} $\card{A}$ of~$A$ is $\card{Q} + \card{\iota} +
\card{\delta}$. This definition only applies to \emph{binary}
\mbox{$\Lambda$-trees}; the analogous automata for unranked trees will be
introduced in Section~\ref{sec:updates}.

To simplify notation we often see $\delta$ as a tuple of functions, i.e.,
for each $l \in \Lambda$ we have a function $\delta_l: Q\times Q \to 2^Q$ defined
by $\delta_l(q_1, q_2) = \{q \in Q \mid (l, q_1, q_2, q) \in \delta\}$: this
intuitively tells us to which states the automaton can transition on an
internal node with label~$l$ when the states of the two children are
respectively~$q_1$ and~$q_2$. Note that, following our definition of a
valuation and of~$\iota$, the automaton is only reading annotations on leaf nodes.

Having fixed $\Lambda$ and~$\X$,
given a \mbox{$\Lambda$-tree} $T$ and an \mbox{$\X$-valuation} $\nu$ of~$T$,
given a $\Lambda,\X$-TVA $A = (Q, \iota, \delta, F)$,
a \emph{run} of~$A$ on~$T$ under~$\nu$
is a function $\rho: T \to Q$ satisfying the following:
\begin{itemize}
  \item For every $n \in \Leaf(T)$, we have $(\lambda(n), \nu(n), \rho(n))
    \in \iota$;
  \item For every internal node $n$ with label~$l$ and children $n_1, n_2$, 
    we have $\rho(n) \in
    \delta_l(\rho(n_1), \rho(n_2))$.
\end{itemize}
The run is \emph{accepting} if it maps the root of~$T$ to a state in~$F$,
and we say that $A$ \emph{accepts} $T$ under~$\nu$ if there is an accepting
run of~$A$ on~$T$ under~$\nu$.
The \emph{satisfying valuations} of $A$ on $T$ is the set of the
$\X$-valuations $\nu$ of~$T$ such that $A$ accepts~$T$ under~$\nu$, and the
\emph{satisfying assignments} are the corresponding assignments
$\alpha(\nu)$.
Thus, the automaton~$A$ defines a query
on~$\Lambda$-trees with second-order variables $\X$, and its
results on a $\Lambda$-tree~$T$ are the satisfying assignments
of~$A$ on~$T$.

\paragraph*{Homogenization.}
It will be useful to assume a \emph{homogenization} property on automata.
Given a $\Lambda,\X$-TVA $A = (Q, \iota, \delta, F)$, we call $q \in Q$ a
\emph{$0$-state} if there is some $\Lambda$-tree~$T$ and run~$\rho$ of~$A$ on~$T$
that maps the root of~$T$ to~$q$ under the \emph{empty $\X$-valuation} $\nu_\emptyset$
of~$T$ defined as $\nu_\emptyset(n) \colonequals \emptyset$ for each $n \in
\Leaf(T)$. We call $q$ a \emph{$1$-state} if there is some
$\Lambda$-tree~$T$ and run~$\rho$ of~$A$ on~$T$ mapping the root of~$T$ to~$q$
under some \emph{non-empty} $\X$-valuation, i.e., a valuation $\nu$ different
from the empty valuation. Intuitively, a 0-state is a state that~$A$
can reach by reading a tree annotated by the empty valuation, and a 1-state can
be reached by reading a tree with at least one non-empty annotation.
In general a state can be both a 0-state and a 1-state, or it can be neither 
if there is no way to reach it. 
We say that $A$ is \emph{homogenized} if every state is either a $0$-state or
a $1$-state and no state is both a $0$-state and a $1$-state.
We can easily make automata homogenized, by duplicating the states to
remember if we have already seen a non-empty annotation:
\begin{lemmarep}
  \label{lem:homogenize}
  Given a $\Lambda,\X$-TVA $A$, we can compute in linear time a $\Lambda,\X$-TVA
  $A'$ which is homogenized and equivalent to~$A$.
\end{lemmarep}

\begin{proof}
  Let $A = (Q, \iota, \delta, F)$. Intuitively, we build~$A'$ as a product
  of~$A$ with an automaton with two states that remember whether some non-empty
  annotation
  has been seen. 
  Formally, let $A'$ be $(Q', \iota',
  \delta', F')$ where $Q' \colonequals Q \times \{0, 1\}$,
  where $F' \colonequals F \times \{0, 1\}$,
  where $\iota'
  \colonequals \{(l, \emptyset, (q, 0)) \mid (l, \emptyset, q) \in \iota\} \cup
  \{(l, \Y, (q, 1)) \mid (l, \Y, q) \in \iota, \Y \neq \emptyset\}$, and where $\delta' \colonequals
  \{((q_1, i_1), (q_2, i_2), (q, i_1 \lor i_2)) \mid (q_1, q_2, q) \in
  \delta, (i_1, i_2) \in \{0, 1\}^2\}$. We can clearly construct~$A'$ from~$A$ in linear time. 
  We then modify~$A'$ to \emph{trim} it, i.e., removing states that cannot be
  reached by any run, which is clearly doable in linear time by a simple
  reachability test. It is clear from the inductive definition of $0$- and $1$-states
  that $A'$ is homogenized, i.e., each $(q, i)$ for $q \in Q$ and $i \in \{0,
  1\}$ is an $i$-state and it is not an $(1-i)$-state. Further, it is immediate
  by induction that, for any $\Lambda$-tree $T$ and $\X$-valuation $\nu$ of~$T$,
  there is a bijection between the runs of~$A$ on~$T$ under~$\nu$ and the runs
  of~$A'$ on~$T$ under~$\nu$, that maps a run $\rho: T\to Q$ to $\rho': T\to Q'$
  defined by keeping the first component as-is and filling the second component
  on each node~$n$ by~$1$ or~$0$ depending on whether some descendant of~$n$
  has a non-empty annotation or not.
\end{proof}

\paragraph*{Problem statement}
Our goal in this paper is to efficiently enumerate the results of queries on
trees. The inputs to the problem are the $\Lambda$-tree~$T$ and the query given as a
$\Lambda,\X$-TVA~$A$,
with $\Lambda$ the tree alphabet and $\X$ the variables.
The output is the set of the satisfying assignments
of~$A$ and~$T$. We present an 
\emph{enumeration algorithm} to produce them, which first runs a \emph{preprocessing phase} on~$A$
and~$T$: we compute a concise representation of the output as an
\emph{assignment circuit} (Section~\ref{sec:circuits}), and compute an
\emph{index structure} on it (Section~\ref{sec:enumboxes}). Second, 
the \emph{enumeration phase} produces
each result to the query, with no duplicates, while bounding the maximal 
\emph{delay} between two successive answers
(Sections~\ref{sec:simple}--\ref{sec:enumboxes}). 
Third, our algorithm must handle \emph{updates} to~$T$, i.e.,
given an edit operation on~$T$, efficiently update the assignment
circuit and index and restart the enumeration on the updated tree
(Section~\ref{sec:updates}).

Our main result shows how to solve this problem
with preprocessing linear in~$T$ and polynomial in~$A$; with delay
independent from~$T$, polynomial in~$A$, and linear in each
produced assignment; and with update time logarithmic in~$T$ and
polynomial in~$A$. This result is formally stated in Section~\ref{sec:main}.

\section{Building Assignment Circuits}
\label{sec:circuits}
In this section, we start describing the preprocessing phase of our enumeration
algorithm. Given the input TVA and binary tree, we will build an 
\emph{assignment circuit} that concisely represents the set of satisfying
assignments: we do this by adapting the circuit constructions in
our earlier work \cite{amarilli2017circuit,amarilli2018enumeration}
to give them better support for updates and handle non-deterministic automata.

We first define our circuit formalism,
which we call \emph{set circuits}, and their semantics. Second, we define when a set
circuit can serve as an \emph{assignment circuit} for a TVA on a binary tree.
Third, we present properties that our circuits will satisfy, namely, they
are \emph{complete structured DNNFs}, and we can bound a \emph{width} parameter
for them. Last, we state our main circuit construction result at the end of
the section (Lemma~\ref{lem:buildcircuit}): it shows that, given a homogenized
TVA $A$ and binary tree $T$, we can construct an assignment circuit of~$A$
on~$T$ in time $O(\card{A} \times \card{T})$ while respecting our properties and
controlling the width parameter.

The circuit produced by Lemma~\ref{lem:buildcircuit} will then be fed to the
enumeration algorithms presented in
Sections~\ref{sec:simple}--\ref{sec:enumboxes}. To extend enumeration to
updates in Section~\ref{sec:updates}, we will leverage the fact that all our
constructions work by processing~$T$ bottom-up.

\paragraph*{Set Circuits}
A \emph{set circuit} (or just \emph{circuit}) 
$C = (G, W, \mu)$ consists of a directed acyclic graph
$(G, W)$ where $G$ are
the \emph{gates} and $W \subseteq G\times G$ are the \emph{wires}, and of a
function $\mu$ mapping each gate to a type among $\top$, $\bot$, $\var$, $\times$, $\cup$: we will
accordingly call $g$ a $\top$-gate, $\bot$-gate, $\var$-gate, $\times$-gate, or $\cup$-gate depending
on~$\mu(g)$. There is also an injective function $\S_\var$ mapping each
$\var$-gate $g$ to some set
$\S_\var(g)$ of \emph{variables}.
We write $C_\var$ to refer to the variables occurring in~$C$, i.e., 
$C_\var \colonequals \bigcup_{g \in G \mid \mu(g) = \var} \S_\var(g)$.
The set of
\emph{inputs} of a gate~$g \in G$ is $\{g' \in G \mid (g', g) \in W\}$. 
We require that:
\begin{itemize}
  \item $\top$-, $\bot$-, and $\var$-gates have no inputs;
  \item $\times$-gates have exactly $2$ inputs;
  \item $\cup$-gates have at least $1$ input;
  \item $\top$-gates and $\bot$-gates are never used as inputs.
\end{itemize}
The \emph{maximal fan-in} of~$C$ is the maximum number of inputs of a gate
and its \emph{depth} is the maximum length of a directed path of wires.

The goal of set circuits is to concisely represent sets.
Formally,
each gate 
\emph{captures} a 
\emph{set}, in the following sense:
\begin{definition}
  \label{def:capture}
Given a set circuit $C = (G, W, \mu)$, we define for each gate $g \in G$ 
a \emph{captured set}
$\S(g)$ which is a set of subsets of~$C_\var$
defined inductively as follows:
\begin{itemize}
  \item If $g$ is a $\var$-gate then $\S(g) \colonequals \S_\var(g)$.
  \item If $g$ is a $\bot$-gate then $\S(g) \colonequals \emptyset$.
  \item If $g$ is a $\top$-gate then $\S(g) \colonequals
    \{\emptyset\}$.
  \item If $g$ is a $\times$-gate with inputs $g_1$ and $g_2$ then
    $S(g) \colonequals \{S_1 \cup S_2 \mid S_1 \in \S(g_1), S_2 \in \S(g_2)\}$.
  \item If $g$ is a $\cup$-gate then, letting $g_1, \ldots, g_m$ be the inputs
    of~$g$, we define $\S(g) \colonequals \bigcup_{1 \leq i \leq m} \S(g_i)$.
\end{itemize}
\end{definition}

\begin{example}
  \label{exa:circuit}
  Consider the circuit $C$ featuring a $\times$-gate $g$ with one input~$g_1'$
  being a $\var$-gate with $\S_\var(g_1') = \{x\}$, and 
  one input $g_2$ being a $\cup$-gate with two input $\var$-gates $g_2'$ and $g_2''$
  with $\S_\var(g_2') = \{y\}$ and $\S_\var(g_2'') = \{y, z\}$.
  We have $C_\var = \{x, y, z\}$ and $\S(g) = \{\{x, y\}, \{x, y, z\}\}$.
\end{example}

\paragraph*{Assignment Circuits}
An \emph{assignment circuit} for a TVA on a tree is a set circuit where, for each
automaton state~$q$ and tree node~$n$, there is a gate $\gamma(q, n)$ capturing the assignments
for which we reach state~$q$ on node~$n$. Formally:
\begin{definition}
Given a binary $\Lambda$-tree $T$ and a $\Lambda,\X$-TVA $A = (Q, \iota, \delta, F)$, an \emph{assignment circuit}
of~$T$ on~$A$ consists of a circuit~$C$ where $C_\var$ is the set of singletons
$\{\langle Z : n \rangle \mid Z \in \X, n \in \Leaf(T)\}$, and a mapping $\gamma: T \times Q \to C$ such that
for
any $n \in T$ and $q \in Q$,
  the gate $\gamma(n, q)$ is a $\cup$-gate, $\top$-gate, or $\bot$-gate, and
  for any $\X$-valuation~$\nu$ of the subtree $T_n$
of~$T$ rooted at~$n$, 
  we have $\alpha(\nu) \in \S(\gamma(n, q))$ iff there is a run of~$A$ on~$T_n$ under $\nu$
that maps the root node $n$ to state~$q$.
\end{definition}

Note that an assignment circuit concisely represents the satisfying assignments of~$A$
on~$T$: they are $\bigcup_{q \in F} \S(\gamma(n, q))$ where $n$ is the root
of~$T$. Thus, our goal is to use assignment circuits to efficiently enumerate
satisfying assignments. To be able to do this, we will impose certain properties
on assignment circuits, which we now define.

\paragraph*{Complete Structured DNNFs}
The circuits that we build will be \emph{complete structured DNNFs}, and
we will control their \emph{width}. The notion of structured DNNF is inspired by
knowledge compilation~\cite{PipatsrisawatD08}, with DNNF meaning
\emph{decomposable negation normal form}, and completeness is also inspired by
that field~\cite{capelli2019knowledge}; we adapt it here to set circuits
rather than Boolean circuits (as we also did
in~\cite{amarilli2017circuit,amarilli2018enumeration}), with no negations, and
with \emph{decomposability} intuitively implying that no variable can occur in
assignments in both the left and the right input gate of some $\times$-gate.
Formally:

\begin{definition}
  A \emph{v-tree} $\calT$ for a set circuit~$C$ is a binary tree whose leaves
  are labeled by sets of variables that form a partition of~$C_\var$.
  A \emph{structuring function}~$\sigma$ from~$C$ to~$T$ maps each gate $g$ of~$C$ to a
  node $\sigma(g)$ of~$\calT$
  such that:
  \begin{itemize}
    \item For every $\var$-gate $g$, the node $\sigma(g)$ is a leaf of~$\calT$,
      and the variables that label $g$ in~$C$ are a subset of the variables
      of~$\sigma(g)$ in~$\calT$; formally,
      letting $\Y$ be the set of variables that labels~$\sigma(g)$ in~$\calT$, we have
      $\S_\var(g) \subseteq \Y$.
    \item Whenever a gate $g'$ is an input gate to a gate~$g$, then either $g$ and $g'$
      are mapped to the same v-tree node, or the input $g'$ must be a
      $\cup$-gate that is mapped to the
      child of the node of~$g$.
      Formally, for every wire $(g', g) \in W$ of~$C$, either $\sigma(g) =
      \sigma(g')$,
      or $g'$ is a $\cup$-gate and $\sigma(g')$ is a child of~$\sigma(g)$
      in~$\calT$.
    \item Each $\times$-gate $g$ has one input gate $g_1$ such that
      $\sigma(g_1)$ is the left child of~$\sigma(g)$ and one input gate $g_2$
      such that $\sigma(g_2)$ is the right child of~$\sigma(g)$; we call $g_1$
      and $g_2$ the \emph{left} and \emph{right} inputs. Note that, by
      the previous point, 
      $g_1$ and $g_2$ must be $\cup$-gates.
  \end{itemize}
Note that these points
  ensure that $C$ is \emph{decomposable}, namely, for any $\times$-gate $g$ with inputs
$g_1$ and $g_2$, no variable gate in~$C$ has a path both to~$g_1$ and to~$g_2$.
In particular, when defining $\S(g)$ according to
Definition~\ref{def:capture}, there can never be any duplicate in the union that
defines the relational product.

A \emph{complete DNNF structured by~$\calT$} is a set circuit $C$ 
together
  with a v-tree $\calT$ and a 
  structuring function $\sigma$ from~$C$ to~$\calT$.
\end{definition}

\begin{example}
  The circuit $C$ in Example~\ref{exa:circuit} is not a complete structured
  DNNF. However, consider $C'$ built from $C$ where the first input 
  of the 
  $\times$-gate $g$ is now a $\cup$-gate $g_1$ having the $\var$-gate $g_1'$ as
  its only input. Then $C'$ is a complete structured DNNF for the v-tree
  $\calT'$ whose root has a left child $n_1$ labeled $\{x\}$ and a right
  child $n_2$ 
  labeled $\{y, z\}$; the structuring function $\sigma$ maps $g$ to the root
  of~$\calT$, maps $g_1$ and $g_1'$ to~$n_1$, and $g_2$, $g_2'$, and $g_2''$
  to~$n_2$.
\end{example}

When we have a complete structured DNNF $C$ with a structuring function $\sigma$
to a v-tree $\calT$, we see the gates of~$C$ as partitioned into \emph{boxes},
with each box being the preimage of some node of~$\calT$ by~$\sigma$. 
We use $\boxf(g)$ to denote the box of some gate $g$
(formally, $\boxf(g)
\colonequals \sigma^{-1}(\sigma(g))$),
We talk
about the \emph{tree of boxes} to mean the structure on boxes that
follows~$\calT$.
In particular, given a box $B$, letting $n$ be the node of~$\calT$ such that $B =
  \sigma^{-1}(n)$, if $n$ is an internal node then we call $B$ a \emph{non-leaf box}
  and denote by $\leftb(B)$ and $\rightb(B)$ its \emph{left child
  box} and \emph{right child box} in the tree of boxes, i.e., $\sigma^{-1}(n_1)$ and
  $\sigma^{-1}(n_2)$ respectively, where $n_1$ and $n_2$ are the children of~$n$
  in~$\calT$.
We will use boxes to
define a structural parameter of complete structured DNNFs, called \emph{width},
and similar to width in~\cite{capelli2019knowledge}.

\begin{definition}
The \emph{width} of a structured complete DNNF is the maximal
  number of $\cup$-gates in a box, i.e., \[\max_B \card{\{g \in B \mid
  \mu(g) = \cup\}}\;.\]
\end{definition}
While this notion of width only bounds the number of $\cup$-gates in each box, we can always rewrite a
structured complete DNNF of width~$w$ in linear time to ensure that the number of
$\times$-gates in each box is also bounded by 
$w^2$~\cite[Observation~3]{capelli2019knowledge}. Intuitively, each $\times$-gate
in a box~$B$ has two $\cup$-gates as input, one in $\leftb(B)$ and one in
$\rightb(B)$, so there are at most $w^2$ non-equivalent combinations.
Hence, we will assume that this bound holds on all circuits that we manipulate
(and in fact our circuit construction obeys it directly, with no rewriting 
needed).

\paragraph*{Building Assignment Circuits}
We have defined the assignment circuits that we want to compute, defined the
notion of a structured complete DNNF and a width parameter for them.
We can now state our main result for this section, namely,
that we can efficiently construct assignment circuits. Observe that, while the
depth of the circuit depends on the input tree, the width only
depends on~$\card{Q}$, which will be crucial for our delay bounds.

\begin{lemmarep}\label{lem:buildcircuit}
  Given any binary $\Lambda$-tree $T$
  and homogenized \mbox{$\Lambda,\X$-TVA}
  $A=(Q,\iota,\delta,F)$, we can construct in time $O(\card{T} \times \card{A})$
  a structured complete DNNF $C$ which is an assignment
  circuit of~$A$ and~$T$,
  a v-tree~$\calT$,
  and a structuring function
  from~$C$ to~$\calT$, such that $C$ has 
  width~$\card{Q}$
  and depth $O(\height(T))$.
\end{lemmarep}
\begin{proofsketch}
  We construct $\calT$ by taking $T$, removing all node labels, and labeling
  each leaf node $n$ by the set of singletons $\langle \X: n\rangle$:
  thus, $\calT$ is a v-tree for the set of variables
  $C_\var = \{\langle Z: n \rangle \mid Z \in \X, n \in T\}$ of~$C$
  given by the definition of assignment circuits.

  We now build~$C$ bottom up.
  For a leaf node $n$ of~$T$ with label $l \in
  \Lambda$, we build the box $B_n$ for~$n$ by setting $\gamma(n, q)$ for all $q
  \in Q$ as:
  \begin{itemize}
  \item a $\bot$-gate, if there are no tuples of the form  $(l,\Y,q)  \in \iota$,
  \item a $\top$-gate, if $(l, \emptyset, q) \in \iota$, and
  \item a $\cup$-gate having as inputs one variable gate labeled by
    $\langle \Y:n\rangle$ for each non-empty $\Y \subseteq \X$ such that
    $(l,\Y,q) \in \iota$, otherwise.
  \end{itemize}
  As $A$ is homogenized, the first and last case are disjoint, i.e., we cannot
  have both $(l, \emptyset, q) \in \iota$ and $(l, \Y, q) \in \iota$ with 
  $\Y \neq \emptyset$.

  For an inner node $n$ of $T$ with label $l$ and child nodes $n_1$
  and $n_2$, we construct the box $B_n$ as follows.
  For every triple $(q_1,q_2,q) \in \delta_l$, we define a
  $\times$-gate $g^{q_1,q_2}$ with inputs $\gamma(n_1,q_1)$ and
  $\gamma(n_2,q_2)$. If there is no such triple, we let $\gamma(n,q)$
  be a $\bot$-gate. Otherwise, we let $\gamma(n,q)$ be a $\cup$-gate
  that has all such $\times$-gates $g^{q_1,q_2}$ as input.

  In terms of accounting, it is clear that 
  there are at most $|Q|$ \mbox{$\cup$-gates}
  in each $B_n$, that the depth of the circuit is as stated, and the construction of the whole circuit is in time
  $O(\card{A} \times \card{T})$ as promised.
  Now, a straightforward bottom-up induction on~$T$ shows that the gates
  $\gamma(n, q)$ capture the correct set for any~$n$, i.e., that for any leaf
  node $n$ and any $q \in Q$ we have: 
  $
    \S(\gamma(n, q)) = \{\langle \Y: n\rangle \mid
    (\lambda(n), \Y, q) \in \iota\}
  $
  and for any internal node $n$ with label~$l$ and children $n_1$ and $n_2$ and any $q \in Q$
  we clearly have:
  \[
    \S(\gamma(n, q)) \;\;= \bigcup_{(q_1, q_2, q) \in \delta_l} \S(\gamma(n_1,
    q_1)) \times \S(\gamma(n_2, q_2))
  \]

  It is easy to check that all rules of assignment circuits
  are respected with the exception of the rule that $\top$- and
  $\bot$-gates are never allowed as inputs to other gates. In
  particular, all $\cup$-gates and $\times$-gates have the right
  fan-in.
  To ensure that $\top$- and $\bot$-gates are never used as inputs,
  one can use a slightly modified construction (see appendix) that avoids adding
  outgoing wires to $\top$- and $\bot$-gates by treating these cases
  in a special way. This uses the fact that, as $A$ is homogenized,
  there is no gate $\gamma(n,q)$ that captures both the empty
  assignment and some non-empty assignment.
\end{proofsketch}
\begin{proof}
  We construct $\calT$ by taking $T$, removing all node labels, and labeling
  each leaf node $n$ by the set of singletons $\langle \X: n\rangle$:
  note that $\calT$ is indeed a v-tree for the set of variables
  $C_\var = \{\langle Z: n \rangle \mid Z \in \X, n \in T\}$ of~$C$
  given by the definition of assignment circuits.

  We now present the construction of~$C$ bottom up.
  We first describe the case of a leaf node $n$ of~$T$ with label $l \in
  \Lambda$. In this case, 
  we construct the box $B_n$ for~$n$ as follows:
  \begin{itemize}
    \item For every $0$-state $q$ of~$A$, we set $\gamma(n, q)$ to be a $\top$-gate
      if $(l, \emptyset, q) \in \iota$, and a $\bot$-gate otherwise.
    \item For every $1$-state $q$ of~$A$ with no tuples of the form $(l, \Y, q)$
      in~$\iota$, we set $\gamma(n, q)$ to be a $\bot$-gate.
    \item For every $1$-state $q$ of~$A$ with at least one tuple of
      the form $(l, \Y, q)$, we set $\gamma(n, q)$ to be a $\cup$-gate
      having as inputs one variable gate labeled by $\langle \Y:n\rangle$ for
      each $\Y \subseteq \X$ such that $(l,\Y,q) \in \iota$.
      Note that $\Y$ is then nonempty because $q$ is a 1-state.
  \end{itemize}
  It is clear that 
  $B_n$ has at most $|Q|$ $\cup$-gates
  and that
  all restrictions for structured complete DNNFs are met.

  For an inner node $n$ of $T$ with label $l$ and child nodes $n_1$
  and $n_2$, we construct the box $B_n$ as follows. First, for every $0$-state
  of $A$, we set $\gamma(n,q)$ to be a $\top$-gate if and only if
  there are states $q_1$ and $q_2$ in $A$ such that
  $(q_1,q_2,q) \in \delta_l$ and $\gamma(n_1,q_1)$ and
  $\gamma(n_2,q_2)$ are both $\top$-gates. Otherwise, we set
  $\gamma(n,q)$ to be a $\bot$-gate.

  Second, for every $1$-state $q$ of $A$ and every triple
  $(q_1,q_2,q) \in \delta_l$, let $g_1\colonequals \gamma(n_1,q_1)$
  and $g_2 \colonequals \gamma(n_2,q_2)$.
  
  We define a gate $g^{q_1,q_2}$ such that we have the
  equality:
  \[\S(g^{q_1,q_2}) = \S(g_1) \times \S(g_2) \tag{*}\]
  but
  while respecting the rule that $\top$ and $\bot$-gates can never be
  used as input to another gate.
  Specifically:
  \begin{itemize}
    \item If one of~$g_1$, $g_2$ is a $\bot$-gate, we set $g^{q_1,q_2}$ to be
      a $\bot$-gate, which clearly satisfies (*);
    \item If one of~$g_1$, $g_2$ is a $\top$-gate, we set $g^{q_1,q_2}$ to be
      the other gate; this also satisfies (*);
     \item Otherwise we set $g^{q_1,q_2}$ to be a $\times$ gate with inputs $g_1$ and $g_2$.
  \end{itemize}
  Having created the necessary gates $g^{q_1,q_2}$ for the triples
  of~$\delta_l$, 
  we now create $\gamma(n, q)$ for every $1$-state~$q$
  as a gate that satisfies:
  \[\S(\gamma(n, q)) = \bigcup_{(q_1, q_2, q) \in \delta_l} \S(g^{q_1, q_2}) \tag{**}\]
  Specifically:
  \begin{itemize}
    \item If all $g^{q_1,q_2}$ in the union are $\bot$-gates (in
      particular if the union is empty), we set
      $\gamma(n,q)$ to also be a $\bot$-gate, respecting (**);
    \item Otherwise we exclude all $\bot$-gates  from the union and 
      set $\gamma(n,q)$ to be a $\cup$-gate, which has all remaining gates
      $g^{q_1,q_2}$ as input, satisfying (**).
  \end{itemize}
  We
  can easily check that all rules of assignment circuits are respected. In
  particular,
  all $\cup$-gates and $\times$-gates have the right fan-in.
  To check that we never use $\top$ and $\bot$ as input to another
  gate, the only subtlety is that, when defining the $\cup$-gate $\gamma(n, q)$
  for a 1-state~$q$, we must check that $g^{q_1,q_2}$ can never be a
  $\top$-gate, but this is because one of $q_1$ and $q_2$ must be a 1-state,
  hence it cannot be a 0-state because $A$ is homogenized; now it can be seen by
  induction that whenever $\gamma(n', q')$ is a $\top$-gate then $q'$ is a
  0-state. It is also clear that the definition of a structured complete DNNF is
  respected, in particular the inputs to
  $\times$-gates are $\cup$-gates in the two child boxes.

  In terms of accounting, it is clear that 
  there are at most $|Q|$ $\cup$-gates
  in each $B_n$, that the depth of the circuit is as stated, and the construction of the whole circuit is in time
  $O(\card{A} \times \card{T})$ as promised.
  Last, a straightforward bottom-up induction on~$T$ shows that the gates
  $\gamma(n, q)$ capture the correct set for any~$n$, i.e., that for any leaf
  node $n$ and any $q \in Q$ we have:
  \[
    \S(\gamma(n, q)) \;\;= \;\;\{\langle \Y: n\rangle \mid
    (\lambda(n), \Y, q) \in \iota\}
  \]
  and for any internal node $n$ with label~$l$ and children $n_1$ and $n_2$ and any $q \in Q$
  we clearly have the following, by (*) and (**) (and their analogues in the case
  of 0-states):
  \[
    \S(\gamma(n, q)) \;\;= \bigcup_{(q_1, q_2, q) \in \delta_l} \S(\gamma(n_1,
    q_1)) \times \S(\gamma(n_2, q_2))
  \]
  Hence, the construction is correct, which concludes the proof.
\end{proof}

\section{Simple Enumeration Algorithm}
\label{sec:simple}
In the three following sections, we will present how to enumerate the set
of assignments captured by gates of assignment circuits. 
We start in this section by presenting an algorithm which is simple 
but has two important drawbacks. First, the worst-case delay is
$O(\depth(C))$, i.e., linear in the depth of the circuit $C$.  Second,
assignments are output multiple times.
We will refine this algorithm in Section~\ref{sec:nodupes} 
to ensure that every assignment is enumerated exactly once.
Last in Section~\ref{sec:enumboxes}, we show how to
bound the delay by the width of the circuit (instead of the depth).

To define our enumeration algorithms in this and the following sections, we
introduce some useful notation. 
For any $\cup$-gate $g$ of~$C$,
for any gate $g'$ of~$C$, 
we write
$g' \cuppath g$ if there is a path $g' = g_1, \ldots, g_n = g$ from~$g'$ to~$g$
in~$C$ where each $(g_i, g_{i+1})$ is a wire in~$W$ and where all intermediate gates
$g_2, \ldots, g_n$ are $\cup$-gates. We then write $\transitive{g}$ to mean the
set of $\var$-gates and $\times$-gates $g'$ such that $g' \cuppath g$.
The following observation shows why $\transitive{g}$ is useful for enumeration:
it is proven by an immediate induction on~$\cup$-gates:

\begin{observation}
  \label{obs:transitiveq}
  For any $\cup$-gate $g \in C$, we have that
   $\S(g) = \bigcup_{g' \in \transitive{g}}
  \S(g')$.
\end{observation}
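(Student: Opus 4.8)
The plan is to prove Observation~\ref{obs:transitiveq} by induction on the structure of the subcircuit of $\cup$-gates feeding into~$g$. The statement to establish is that for any $\cup$-gate $g \in C$, we have $\S(g) = \bigcup_{g' \in \transitive{g}} \S(g')$, where $\transitive{g}$ collects exactly the $\var$-gates and $\times$-gates reachable from~$g$ along wires passing only through $\cup$-gates. First I would observe that since $C$ is a DAG, there is a well-founded order on gates, and in particular we can induct on the $\cup$-gates of~$C$ ordered so that every $\cup$-gate comes after all $\cup$-gates it can reach by $\cuppath$.

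For the inductive step, fix a $\cup$-gate $g$ and let $g_1, \ldots, g_m$ be its inputs. By Definition~\ref{def:capture}, $\S(g) = \bigcup_{1 \le i \le m} \S(g_i)$. Now partition the inputs: each $g_i$ is either a $\var$-gate or $\times$-gate (since $\top$- and $\bot$-gates are never used as inputs), or itself a $\cup$-gate. For an input $g_i$ that is a $\var$-gate or $\times$-gate, we have $g_i \cuppath g$ via the length-one path, so $g_i \in \transitive{g}$, and $\S(g_i)$ is one of the terms in $\bigcup_{g' \in \transitive{g}} \S(g')$. For an input $g_i$ that is a $\cup$-gate, the induction hypothesis gives $\S(g_i) = \bigcup_{g' \in \transitive{g_i}} \S(g')$; and every $g' \in \transitive{g_i}$ satisfies $g' \cuppath g_i$, so prepending the wire $(g_i, g)$ yields a path witnessing $g' \cuppath g$, hence $\transitive{g_i} \subseteq \transitive{g}$. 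Conversely, any $g' \in \transitive{g}$ is reached by a path $g' = h_1, \ldots, h_k = g$ through $\cup$-gates; letting $g_i = h_{k-1}$ be the input of~$g$ on this path, either $k = 2$ and $g' = g_i$ is a $\var$/$\times$-input, or $k > 2$ and the prefix witnesses $g' \in \transitive{g_i}$ for the $\cup$-input $g_i$. This shows $\transitive{g} = \bigcup\{\transitive{g_i} \mid g_i \text{ a } \cup\text{-input}\} \cup \{g_i \mid g_i \text{ a } \var\text{/}\times\text{-input}\}$, and combining the set equations above with $\S(g) = \bigcup_i \S(g_i)$ closes the induction.

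I do not expect any real obstacle here: the only thing to be slightly careful about is the base case, namely a $\cup$-gate all of whose inputs are $\var$-gates or $\times$-gates, which is handled uniformly by the first half of the inductive step (the sets $\transitive{g_i}$ for $\cup$-inputs simply do not contribute). The reliance on the fact that $\top$- and $\bot$-gates are never inputs (so every input of a $\cup$-gate is a $\var$-gate, $\times$-gate, or $\cup$-gate) keeps the case analysis clean, and acyclicity of $(G,W)$ guarantees the induction is well-founded and that $\transitive{g}$ is finite.
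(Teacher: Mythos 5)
Your proof is correct and is exactly the ``immediate induction on $\cup$-gates'' that the paper invokes: you unfold Definition~\ref{def:capture} at a $\cup$-gate, split its inputs into $\var$/$\times$-gates versus $\cup$-gates (using the rule that $\top$- and $\bot$-gates are never inputs), and apply the inductive hypothesis to the latter, which matches the paper's intended argument. No issues.
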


  We observe that we can enumerate $\transitive{g}$ for every
  gate $g$ 
  by doing a simple preorder traversal
  of the circuit: however, doing this naively only ensures a delay of
  $O(\depth(C))$, and it enumerates each gate $g'$ as many times
  as there are paths that witness $g' \cuppath g$ in~$C$. 
  We denote by \textsc{enum}$_{\downarrow}^{\text{\textsc{dupes}}}(g)$ this naive
  procedure.

Using this procedure, we present our enumeration algorithm for
$\S(g)$ as Algorithm~\ref{alg:enumsimple}.
The algorithm applies to any decomposable set 
circuit and does not use the v-tree or the structuring function.

\begin{algorithm}[tb]
  \caption{Simple enumeration algorithm}\label{alg:enumsimple}
  \begin{algorithmic}[1]
    \Procedure{enum$_{\S}^{\text{dupes}}$}{$g$}
      \For{$g' \in \Call{enum$_{\downarrow}^{\text{\textsc{dupes}}}$}{g}$}
        \If{$g'$ is a $\var$-gate}
          \textbf{output} $\{\S_\var(g')\}$
        \Else \Comment{$g'$ is a $\times$-gate}
          \For{$S_L \in \Call{enum$_{S}^{\text{dupes}}$}{\text{left input of $g'$}}$} \label{line:enuml}
            \For{$S_R \in \Call{enum$_{S}^{\text{dupes}}$}{\text{right input of $g'$}}$} \label{line:enumr}
              \State \textbf{output} $S_L \cup S_R$
            \EndFor
          \EndFor
        \EndIf
      \EndFor  
    \EndProcedure
  \end{algorithmic}
\end{algorithm}

  Algorithm~\ref{alg:enumsimple} is presented using ``output'' statements to produce new
  results (like, e.g., Python's ``yield''). When we recursively use
  the enumeration algorithm on a subcircuit (as in
  lines~\ref{line:enuml} and~\ref{line:enumr}), we assume that this
  enumeration is started in another thread, which will run until the
  first output is produced. Afterwards the new thread pauses until the calling
  thread requests the next value. Whenever the calling thread requests
  a new value, the called thread runs until it produces the next
  output.
The following result is now not hard to see.  
\begin{proposition}
  \label{prop:enumsimple}
  Given a structured complete DNNF $C$ and
  \mbox{$\cup$-gate}~$g$, Algorithm~\ref{alg:enumsimple} enumerates
  $\S(g)$ (with duplicates) with delay $O(\depth(C) \times |S|)$, where $S$ is the produced
  assignment.
\end{proposition}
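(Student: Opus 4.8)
The plan is to prove correctness and the delay bound separately, by structural induction on the circuit, using the fact that decomposable set circuits have no ``semantic collisions'' inside $\times$-gates so that $S_L \cup S_R$ really does range over $\S(g')$ without any singleton being counted twice within one assignment.

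First I would establish \emph{correctness} (that the multiset output by $\textsc{enum}_{\S}^{\text{dupes}}(g)$, with multiplicities collapsed, equals $\S(g)$). I would argue by induction on the structure of~$C$, using a well-founded order on gates (e.g., by longest path to a sink). For a $\cup$-gate~$g$, Observation~\ref{obs:transitiveq} gives $\S(g) = \bigcup_{g' \in \transitive{g}} \S(g')$, and $\textsc{enum}_{\downarrow}^{\text{\textsc{dupes}}}(g)$ enumerates exactly $\transitive{g}$ (with repetitions, one per witnessing path), so it suffices to check the two cases for $g' \in \transitive{g}$: if $g'$ is a $\var$-gate the algorithm outputs $\{\S_\var(g')\} = \S(g')$; if $g'$ is a $\times$-gate with inputs $g_L, g_R$, the nested loops output $\{S_L \cup S_R \mid S_L \in \S(g_L), S_R \in \S(g_R)\}$ by the induction hypothesis, which is exactly $\S(g')$ by Definition~\ref{def:capture}. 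Here is where decomposability matters: since no variable gate has a path to both $g_L$ and $g_R$, the variable sets occurring in $\S(g_L)$ and $\S(g_R)$ are disjoint, so each $S_L \cup S_R$ is produced only once (no two distinct pairs collapse to the same set), though across different $\times$-gates and different paths there may still be duplicates --- which is fine, the statement only claims enumeration ``with duplicates.''

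Next I would bound the \emph{delay}. The key invariant is: a single (top-level) call to $\textsc{enum}_{\S}^{\text{dupes}}(g)$ performs $O(\depth(C))$ steps before its first output, and $O(\depth(C))$ steps between consecutive outputs, \emph{amortized}; combined with the recursion depth this gives the $O(\depth(C) \times |S|)$ bound per produced assignment~$S$. More carefully: the preorder traversal $\textsc{enum}_{\downarrow}^{\text{\textsc{dupes}}}(g)$ walks down chains of $\cup$-gates of length at most $\depth(C)$ between reaching successive $\var$/$\times$-gates, so it produces each element of $\transitive{g}$ with delay $O(\depth(C))$. When the produced gate~$g'$ is a $\var$-gate, we output a set of size~$1$ in $O(1)$ additional time, so the contribution is $O(\depth(C))$. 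When~$g'$ is a $\times$-gate, we run two recursive enumerations; using the thread/coroutine semantics described after the algorithm, getting the next pair $(S_L, S_R)$ costs the delay of one recursive call (advancing $S_R$, or, when $S_R$ is exhausted, restarting the right enumeration and advancing $S_L$). Unrolling the recursion along the v-tree path followed by the recursive calls that assemble the current output~$S$: the recursion nests to depth at most $|S|+1$ (each level past the first $\var$-gate adds at least one singleton to the assembled set, since both inputs of a $\times$-gate are $\cup$-gates whose captured sets are nonempty --- one should verify reachable gates capture nonempty sets, which holds for assignment circuits built by Lemma~\ref{lem:buildcircuit} and more generally can be assumed WLOG by trimming), and each level contributes $O(\depth(C))$ work to advance its local enumeration. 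Multiplying gives delay $O(\depth(C) \times |S|)$.

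The main obstacle I anticipate is making the amortized delay argument rigorous under the coroutine/threading model: a naive analysis of the nested \texttt{for} loops would give a bound involving the \emph{product} of the sizes of the captured sets at each level (because restarting the inner enumeration could look expensive), whereas the claimed bound is only $O(\depth(C) \times |S|)$ per output. The resolution is that restarting a recursive enumeration from scratch still only costs $O(\depth(C))$ to reach \emph{its} first output (again by the chain-length bound on the traversal), and the ``output $S_L \cup S_R$'' step itself is $O(|S|)$ --- but we should be careful to charge the set-union cost correctly. I would phrase the induction hypothesis as a statement about the \emph{cost to obtain the next output}, namely ``$O(d \cdot k)$ where $d = \depth(C)$ and $k$ is the size of that next output, plus $O(d)$ to detect exhaustion'', and then check that the $\times$-gate case composes: advancing the right child costs (by IH) $O(d \cdot |S_R|)$, occasionally we instead advance the left child ($O(d \cdot |S_L|)$) and restart the right child ($O(d \cdot |S_R|)$ to its first output), and producing the union is $O(|S_L| + |S_R|) = O(|S|)$; since $|S_L| + |S_R| = |S|$ by decomposability, all of these are $O(d \cdot |S|)$, closing the induction.
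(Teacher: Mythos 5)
Your proof is correct and follows essentially the same route as the paper's: correctness via Observation~\ref{obs:transitiveq} and the definition of captured sets, and the delay bound by charging each produced assignment $S$ for $O(|S|)$ recursive calls (using decomposability and the non-emptiness of the sets captured by inputs of $\times$-gates, which the paper derives from $\top$-/$\bot$-gates never being inputs) at cost $O(\depth(C))$ each. You are in fact somewhat more careful than the paper's one-paragraph argument about the coroutine semantics and the cost of restarting and exhausting inner enumerations, which the paper glosses over.
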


\begin{proof}
  It is clear that Algorithm~\ref{alg:enumsimple} is correct, because it
  directly follows Observation~\ref{obs:transitiveq}, and the inputs to
  $\times$-gates are always \mbox{$\cup$-gates} so we always call the algorithm on a
  $\cup$-gate.
  In terms of delay, each assignment of size $k$ produced by the algorithm
  required at most $2k-1$
  recursive calls, each of which correspond to a $\var$-gate or  $\times$-gate used when
  producing the assignment: note that this uses the decomposability of~$C$,
  and uses the fact that $\top$-gates are never used as inputs to another
  gate.
  Now, each recursive call has delay $O(\depth(C))$ 
  for the call to \textsc{enum}$_{\downarrow,\text{\textsc{dupes}}}(g)$, hence the delay of the algorithm is as
  claimed, which completes the proof.
\end{proof}

As a side remark, note that the number of times that
Algorithm~\ref{alg:enumsimple} enumerates each assignment 
is related to the number of runs of the TVA for this assignment.
Specifically, up to redefining Definition~\ref{def:capture} with multisets,
and up to small changes in Lemma~\ref{lem:buildcircuit}, we
could ensure that each assignment in $\S(\gamma(n, q))$ is enumerated exactly as
many times as there are runs on the subtree
rooted at~$n$ under the corresponding valuation such that the root node is mapped to~$q$.

\section{Eliminating Duplicates}
\label{sec:nodupes}
In this section, we adapt Algorithm~\ref{alg:enumsimple} to enumerate
satisfying assignments without duplicates.
A simple
idea would be to change Algorithm~\ref{alg:enumsimple} to enumerate
the gates of 
$\transitive{g}$ without duplicates.
Sadly, this 
would not suffice:
imagine that we enumerate $\S(g)$ for some $\cup$-gate $g$ having two inputs
$g_1$ and~$g_2$ for which
$\S(g_1) \cap \S(g_2) \neq \emptyset$,
then if we consider $g_1$ and~$g_2$ separately we will enumerate their common
assignments twice.
However, the crucial point is that $\S(g_1) \cap \S(g_2) \neq \emptyset$ implies
that 
$g_1$ and $g_2$ are in the same box, thanks to the
following property of structured complete DNNFs:

\begin{lemmarep}\label{lem:lca1}
  For any structured complete DNNF $C$,
  for any $\var$-gate or $\times$-gate $g$ of~$C$ and assignment $S$,
  if we have $S \in \S(g)$,
  then the box of $g$ is the (unique) least common ancestor of the boxes that
  contain the $\var$-gates whose variables occur in $S$.
\end{lemmarep}
\begin{proofsketch}
  This is because $\top$-gates are not
  allowed as input to any gate and $\times$-gates always use inputs
  from both subtrees.
\end{proofsketch}
\begin{proof}
  The definition of a structured complete DNNF clearly ensures that, letting $B
  \colonequals \boxf(g)$, for any var-gate $g'$ whose variables occur in $S$, the leaf box $B'$ that contains $g'$
  must be a descendant of~$B$ in the tree of boxes. Hence, $B$ must be a common ancestor of the boxes that contain the $\var$-gates with
  the variables of~$S$. Now, if $B$ is a leaf box then it is necessarily the lowest common
  ancestor. Otherwise, $g$ must be a $\times$-gate, so it has two inputs $g_1$ and~$g_2$
  which must be in the two child boxes $B_1$ and $B_2$, and there are two
  assignments $S_1 \in \S(g_1)$ and $S_2 \in \S(g_2)$ such that $S_1 \cup
  S_2 = S$; further, as $\top$-gates are never used as input to another gate,
  neither $S_1$ nor $S_2$ are empty. Hence, $B_1$ and $B_2$ are both ancestors
  of some of the boxes that contain the $\var$-gates whose variables occur in~$S$, which clearly
  implies that their parent $B$ cannot be a strict ancestor of the lowest common
  ancestor of this set of boxes.
\end{proof}

This observation leads to the idea of \emph{boxwise enumeration}, i.e., 
simultaneously considering a set of gates that are all in the same box, and
enumerate \emph{simultaneously} the assignments that they capture, without duplicates. This idea
is reminiscent of evaluating a nondeterministic automaton on a word by
determinizing the automaton on-the-fly, and it was already used
in~\cite{amarilli2018enumeration} in the case of words; we will extend it to
trees. We will accordingly call \emph{boxed set} a set~$\Gamma$ of gates that are
all $\cup$-gates and that all belong in the same box. We write $B_\cup$ for a
box~$B$ to mean the $\cup$-gates of~$B$.

Given a boxed set $\Gamma$ in some box~$B$,
let us denote by $\S(\Gamma)$ the set of assignments
$\bigcup_{g \in \Gamma} \S(g)$, which we want to enumerate without duplicates,
and let us write $\transitive{\Gamma} \colonequals \bigcup_{g\in
\Gamma}\transitive{g}$.
Our enumeration will rely on a procedure $\Call{box-enum}{\Gamma}$ that
enumerates the boxes $B'$ such that $\transitive{\Gamma} \cap B' \neq
\emptyset$ 
and produces for each such box~$B'$ the
\emph{$\cup$-reachability relation} between $B'$ and~$\Gamma$, i.e., 
the binary relation $R(B', \Gamma)$ describing which gates
of~$B'_\cup$ have a
path of $\cup$-gates to~$\Gamma$. 
Formally, we define the $\cup$-reachability
relation between any sets of gates~$G'$ and~$G$ as
$R(G', G) \colonequals \{(g', g) \in G'_\cup \times G_\cup  \mid g' \cuppath
g\}$.
Pay attention to the fact that
each call to \textsc{box-enum} returns the complete relation $R(B', B)$ for one
of the boxes~$B'$ (i.e., we do not enumerate the pairs of~$R(B', B)$), and the
relation for each box~$B'$ is returned only once (i.e., there should not be
duplicate boxes).

It is straightforward to show that
\Call{box-enum}{$\Gamma$} can be implemented with delay
$O(\depth(C)\times\poly(w))$, where $w$ is the width of the circuit, by
exploring the boxes from~$B$
(traversing only $\cup$-gates and at most one other gate per level)
and
maintaining the information $R(B', \Gamma)$ for the boxes~$B'$ that we visit.
In the next section, we show how we can implement \textsc{box-enum} more
efficiently.

The point of \textsc{box-enum} is the following
easy consequence of Observation~\ref{obs:transitiveq}:

\begin{observation}
  \label{obs:transitiveq2}
  For any boxed set $\Gamma$ in any box, we have:
\[
  \S(\Gamma) \;\;= \biguplus_{R(B',\Gamma) \in \Call{box-enum}{\Gamma}}
  \bigcup_{\substack{
    g' \in W \circ R(B',\Gamma)
    \\\text{with~} \mu(g') \in \{\var, \times\}}}
  \S(g')
\]
where $W$ is the set of wires of~$C$, where $W \circ R(B',\Gamma)$ 
denotes the composition of the two binary relations, and where the outermost
  union is without duplicates.
\end{observation}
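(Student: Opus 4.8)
The plan is to chain together the two facts we already have: Observation~\ref{obs:transitiveq}, which unfolds a $\cup$-gate into the union of the $\var$- and $\times$-gates reachable from it via $\cup$-paths, and Lemma~\ref{lem:lca1}, which pins down the box of such a gate. First I would rewrite $\S(\Gamma)$ using Observation~\ref{obs:transitiveq} applied to each $g \in \Gamma$, obtaining $\S(\Gamma) = \bigcup_{g' \in \transitive{\Gamma}} \S(g')$, where the union ranges over all $\var$-gates and $\times$-gates $g'$ with $g' \cuppath g$ for some $g \in \Gamma$. Next I would partition $\transitive{\Gamma}$ according to the box $B'$ containing each $g'$: this is exactly the partition induced by which box $B'$ satisfies $\transitive{\Gamma} \cap B' \neq \emptyset$, i.e.\ the boxes enumerated by $\Call{box-enum}{\Gamma}$. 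For a fixed such box $B'$, the gates $g' \in \transitive{\Gamma} \cap B'$ are precisely those $\var$- or $\times$-gates $g'$ in $B'$ that have a wire into some $\cup$-gate of $B'$ which in turn $\cup$-reaches $\Gamma$ — equivalently, $g' \in W \circ R(B',\Gamma)$ restricted to $\var$- and $\times$-gates. (Here I use that any $\cup$-path from $g'$ to $\Gamma$ must first leave $g'$ by a wire into a $\cup$-gate, and that $\cup$-gates only ever feed $\cup$-gates in the same box or a parent box, so the first $\cup$-gate on the path lies in $B'$.) This rewriting yields exactly the right-hand side of the claimed identity, except that I still need to argue the outermost union is disjoint.

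The disjointness is where Lemma~\ref{lem:lca1} does the real work. Suppose $S \in \S(g')$ for some $g' \in W \circ R(B',\Gamma)$ with $g'$ a $\var$- or $\times$-gate in box $B'$, and also $S \in \S(g'')$ for such a gate $g''$ in box $B''$. By Lemma~\ref{lem:lca1}, both $B'$ and $B''$ equal the unique least common ancestor, in the tree of boxes, of the boxes containing the $\var$-gates whose variables occur in $S$; hence $B' = B''$. So an assignment produced from one box $B'$ of $\Call{box-enum}{\Gamma}$ cannot be produced from a different box $B''$, which gives the $\biguplus$. Within a single box $B'$, the inner union $\bigcup_{g' \in W \circ R(B',\Gamma)} \S(g')$ is taken with duplicates removed, as the statement says, so nothing more is needed there. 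One should also note that $\S(\Gamma)$ itself may contain the empty assignment only if some $g \in \Gamma$ is a $\top$-gate — but $\Gamma$ is a boxed set of $\cup$-gates, and $\top$-gates are never inputs to any gate and never equal a $\cup$-gate, so this case does not arise and we need not worry about the empty assignment on either side.

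I expect the main obstacle to be purely bookkeeping: making precise the claim that ``the $\var$- and $\times$-gates in $\transitive{\Gamma}$ lying in box $B'$ are exactly $W \circ R(B',\Gamma)$ (restricted to those gate types),'' which requires unwinding the definitions of $\cuppath$, of $R(B',\Gamma)$, and of the box structure, and in particular using the structuring-function property that a wire out of a $\cup$-gate goes to a gate in the same box or to the parent of the $\cup$-gate's box. Everything else — the two applications of already-proved results — is routine. No genuinely hard step is involved; the content of the observation is a reorganisation of Observation~\ref{obs:transitiveq} along the box tree, made duplicate-free by Lemma~\ref{lem:lca1}.
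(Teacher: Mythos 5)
Your proof is correct and follows essentially the same route as the paper's: unfold $\S(\Gamma)$ via Observation~\ref{obs:transitiveq}, identify $\transitive{\Gamma}\cap B'$ with the $\var$- and $\times$-gates in $\pi_1(W\circ R(B',\Gamma))$ using the structuring rules (the relevant one being that a wire out of a non-$\cup$-gate stays in the same box), note that \textsc{box-enum} ranges over exactly the boxes meeting $\transitive{\Gamma}$, and get disjointness of the outer union from Lemma~\ref{lem:lca1}. You merely spell out the bookkeeping in more detail than the paper does.
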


\begin{proof}
To see why the equality holds,
first observe that for each $R(B', \Gamma)$, the inner union goes over all gates
  of $\transitive{\Gamma} \cap B'$. Indeed, the definition of the $\cup$-reachability
  relation ensures that $R(B', \Gamma)$ denotes the $\cup$-gates $g'$
  of~$B'$ such that $g' \cuppath g$ for some $g\in\Gamma$, so the inner union goes over all their
  inputs that are $\times$-gates and $\var$-gates (and which must also be in
  box~$B'$). 
  Now, by definition of \textsc{box-enum}, the outermost union goes over all gates of
  $\transitive{\Gamma}$. Thus, we can conclude thanks to
  Observation~\ref{obs:transitiveq}. The fact that the outermost union is
  disjoint follows immediately from
  Lemma~\ref{lem:lca1}.
\end{proof}

Observation~\ref{obs:transitiveq2} suggests that we can perform enumeration
without duplicates recursively, simply
by re-applying the scheme on the inputs of the gates of~$\transitive{\Gamma}
\cap B'$ to perform enumeration. The details are
subtle, however, and this is why we designed \textsc{box-enum} to return more than
just the set
$\pi_1(R(B',\Gamma))$ of the $\cup$-gates of~$B'$ having a path of
$\cup$-gates to~$\Gamma$ (with $\pi_1$ denoting projection to the first
component):
we will really need the complete relation $R(B', \Gamma)$ for the recursive
calls, in order to avoid duplicate assignments across
multiple $\times$-gates in $\transitive{\Gamma} \cap B'$.

\begin{algorithm}[tb]
  \caption{Enumeration algorithm without duplicates}\label{alg:enum}
  \begin{algorithmic}[1]
    \Procedure{enum$_{\S}$}{$\Gamma$}
      \State $B \gets \boxf(\Gamma)$
      \For{$R(B',\Gamma) \in \Call{box-enum}{\Gamma}$}
        \State $G' \gets \pi_1(W \!\circ R(B', \Gamma))$ \Comment{project to first
        component}
        \State $G_\var \gets \{ h \in G' \mid \mu(h) = \var\}$
        \For{$g'' \in G_\var$}
          \State \textbf{output} $(\S_\var(g''),\{g''\} \circ W \circ R(B', \Gamma))$\label{alg:enum:output1}
        \EndFor
        \State $G_\times \gets \{ h \in G' \mid \mu(h) = \times\}$ \label{alg:enum:recstart}
        \State $\Gamma_L \gets $ set of left inputs of $G_\times$  
        \For{$(S_L,\Gamma_L') \in \Call{enum$_S$}{\Gamma_L}$}
          \State $G_\times' \gets $ set of gates of~$G_\times$ with left input 
          in~$\Gamma_L'$
          \State $\Gamma_R \gets $ set of right inputs of~$G_\times'$
          \For{$(S_R,\Gamma_R') \in \Call{enum$_S$}{\Gamma_R}$}
            \State $G_\times'' \gets $ set of gates of~$G_\times'$\! with right input 
            in $\Gamma_R'$
            \State $\Gamma' \gets G_\times'' \circ W \circ
            R(B',\Gamma)$\label{alg:enum:circ}
            \State \textbf{output} $(S_L \cup S_R, \Gamma')$\label{alg:enum:output2}
          \EndFor
        \EndFor
      \EndFor  
    \EndProcedure
  \end{algorithmic}
\end{algorithm}

Our algorithm to enumerate~$\S(\Gamma)$ is presented as Algorithm~\ref{alg:enum}.
The semantics are changed a bit relative to Algorithm~\ref{alg:enumsimple}.
Algorithm~\ref{alg:enum} takes as input a boxed set
$\Gamma$, and the output is the
enumeration of $\S(\Gamma)$ without duplicates.
Moreover, for each
assignment $S$ in this set, the algorithm also returns its \emph{provenance
relative to~$\Gamma$}, i.e., the subset $\Prov(S, \Gamma) \colonequals \{g \in
\Gamma \mid S \in \S(g)\}$, which again is used for the recursive calls.

\begin{theoremrep}
  \label{thm:enumnodupes}
  Given a structured complete DNNF $C$ and given a boxed set $\Gamma$,
  we can enumerate $\S(\Gamma)$ (without duplicates)
  with delay $O(|S| \times (\Delta + w^3))$, where $S$ is the produced
  assignment, $\Delta$
  is the delay of \textsc{box-enum}, and $w$ is the width of~$C$. Further,
  we correctly produce for each assignment $S$ its provenance $\Prov(S, \Gamma)$
  relative to~$\Gamma$. 
\end{theoremrep}
\begin{proofsketch}
  We have to show three things to establish correctness: (1) For every output
  $(S,\Gamma')$ of the
  algorithm, we have $\Gamma' \subseteq \Gamma$ and 
  $S \in \S(g)$ for every $g \in \Gamma'$. (2) For
  every assignment $S \in \S(g)$ with $g \in \Gamma$, we have some output
  $(S,\Gamma')$ with $g \in \Gamma'$. (3) No assignment $S$ is
  outputted twice.

  Statements (1) and (2) can be shown by induction on the number of
  variable gates that contribute to the assignment. For the induction
  base case, one can verify that the assignments that only use one variable
  gate are correctly handled in Line~\ref{alg:enum:output1}.
  For the induction step, one can verify that the topmost
  $\times$-gate $g_\times$ that is involved in computing an assignment
  is handled correctly in the lines~\ref{alg:enum:recstart}
  to~\ref{alg:enum:output2}.

  Statement (3) follows from Lemma~\ref{lem:lca1} and the fact
  that for each box $\textsc{box-enum}(\Gamma)$ returns at most one
  relation.

  The proof for the runtime delay is similar to
  that of Proposition~\ref{prop:enumsimple}, with the difference that the local
  computations are more expensive as we need to compute relational
  compositions, e.g., $W \circ R(B', \Gamma)$.
\end{proofsketch}
\begin{proof}
  We first show (1). The proof is by induction over the number of $\var$-gates
  whose variables occur in the assignment.
  The base case is that of the assignments produced at Line~\ref{alg:enum:output1}, whose assignments
  are clearly correct.

  For the induction case, we consider assignments produced at
  Line~\ref{alg:enum:output2}. By the induction hypotheses
  $S_L \in \S(g_L)$ for every $g_L \in \Gamma_L'$ and $S_R \in \S(g_R)$ for every
  $g_R \in \Gamma_R'$. Note now that the gates of $G_\times''$ have their left input
  in~$\Gamma_L'$ and their right input in~$\Gamma_R'$, so indeed we have $S \in
  \S(g'')$ for every $g'' \in G_\times''$. Now, for every $g'' \in 
  G''_\times \circ W \circ R(B', \Gamma)$, we know that there is a
  $\cup$-gate $g'$ and a gate $g\in\Gamma$
  such that $(g'', g') \in W$ and $(g', g) \in R(B', \Gamma)$, i.e., by
  definition, $g''
  \cuppath g$, so this witnesses that $S \in \S(g)$ for some $g\in\Gamma$, concluding the proof of
  (1).

  Now we show (2). The proof is again by induction over the size of
  assignments. The base case is again Line~\ref{alg:enum:output1}, which
  correctly produces all assignments that only involve one variable gate,
  by definition of \Call{box-enum}{$\Gamma$}.

  Let now $S \in \S(\Gamma)$ be some assignment involving multiple $\var$-gates.
  Clearly there must be a $\times$-gate $g_\times$ in $\transitive{\Gamma}$
  witnessing that $S \in \S(\Gamma)$, i.e., such that we have
  $S \in \S(g_\times)$. Let $g_L$ and $g_R$ be the left and right inputs
  of~$g_\times$, and let $S_L' \in \S(g_L)$ and
  $S_R' \in \S(g_R)$ be the assignments witnessing that $S \in
  \S(g_\times)$, i.e., we have $S = S_L' \cup S_R'$.

  Let $B'$ be the box of~$g_\times$. As $g_\times \in \transitive{\Gamma}$, we
  know that \Call{box-enum}{$\Gamma$}
  returns some relation $R(B', \Gamma)$, and then we know that 
  $G_\times$ contains $g_\times$.
  Therefore, we can conclude that $g_L \in \Gamma_L$. By the
  induction hypothesis we know that
  $(S_L',\Gamma_L') \in \textsc{enum}_{\S}(\Gamma_L)$ and that $g_L \in
  \Gamma_L'$. Therefore, we can conclude that $g_\times \in G_\times'$, so that
  $g_R \in \Gamma_R$ and, again
  using the induction hypothesis, we have
  $(S_R',\Gamma_R') \in \textsc{enum}_{\S}(\Gamma_R)$ with $g_R \in \Gamma_R'$. It
  follows that $g_\times \in G_\times''$.
  Now, as $R(B',\Gamma)$ was correctly computed by \Call{box-enum}{$\Gamma$}, we have
  $(g_\times, g) \in R(B',\Gamma)$, so that $g \in G_\times'' \circ W \circ
  R(B', \Gamma)$. Hence, we
  indeed produce $(S, \Gamma')$ with a set $\Gamma'$ that contains~$g$.

  At last, we show (3), i.e., that no assignment is output more than
  once. First observe that, by 
  Lemma~\ref{lem:lca1}, the assignments captured by the gates of
  different boxes are disjoint, so that it suffices to show the claim for each
  $R(B',\Gamma)$. For assignments involving only one $\var$-gate, the claim is immediate as the
  assignments that are produced all involve a different variable gate, and the
  labeling function $S_\var$ of variable gates is injective.
  For assignments involving multiple $\var$-gates, we use the fact that by induction the recursive
  calls on $\Gamma_L$ and $\Gamma_R$ output each assignment once, and the
  properties of a structured complete DNNF ensures that each assignment $S \in
  \S(g)$ for a $\times$-gate $g$ has a unique partition (given following the
  v-tree) as $S_L \cup S_R$
  with $S_L \in \S(g_L)$ and $S_R \in \S(g_R)$ for $g_L$ and $g_R$ the left and
  right inputs of~$g$, respectively.

  For delay, the analysis is similar to that of
  Proposition~\ref{prop:enumsimple}: producing an assignment $S$ requires again
  $2\card{S}-1$ recursive calls, and the delay of each call includes the delay
  $\Delta$ of the call to \textsc{box-enum}, plus the delay of the operations
  performed in Algorithm~\ref{alg:enum} which are bounded by~$O(w^3)$ for~$w$
  the width of the circuit. Indeed, remember in particular that each box
  contains at most $w^2$ $\times$-gates; the number of $\var$-gates in each leaf
  box is unbounded but we make progress each time we examine one such gate.
\end{proof}

\section{Enumerating Boxes Efficiently}
\label{sec:enumboxes}
We have shown in the previous section how to enumerate the assignments captured
by an assignment circuit without duplicates.
However the delay depends on the delay of \textsc{box-enum}, which for the naive 
implementation we discussed before was linear in the depth of the circuit. This 
leads to a delay for the assignment enumeration that is also linear in the 
depth of the circuit.
In this section, we show how to speed up
\textsc{box-enum} and make its delay independent from the depth of the circuit,
using a similar idea to \emph{jump pointers}
from~\cite{amarilli2019constant}. As the delay added by
Algorithm~\ref{alg:enum} only depends on the circuit width,
this will establish our
overall delay bound, which we state at the end of the section.

\paragraph*{Interesting and Bidirectional Boxes}
To speed up \textsc{box-enum}, we will need to perform some linear-time
preprocessing on the input circuit, following the tree of boxes. Let us first
give the required definitions.
For each boxed set~$\Gamma$, the set of \emph{interesting boxes} for~$\Gamma$ is defined as
$\{ B' \mid B' \cap \transitive{\Gamma} \neq \emptyset \}$: these are the boxes
that
\Call{box-enum}{$\Gamma$} must consider. We also define
the set of \emph{bidirectional boxes} for~$\Gamma$ as
\[\{B' \mid \transitive{\Gamma} \text{ intersects boxes in both subtrees of } B'\}\]
These boxes are necessarily non-leaf boxes and have
  interesting boxes as descendants of their left and right child.
Note that a bidirectional box may also be interesting.

One key idea to optimize \textsc{box-enum} is to ``jump'' from a box $B$ to a
bidirectional descendant box $B'$, skipping boxes on the path from $B$
to~$B'$ that
are neither interesting nor bidirectional. To do so, we need to precompute
to which box we can jump from the boxed set~$\Gamma$; specifically, we need
to know the \emph{first interesting box} and the \emph{first bidirectional box}. 
We thus define $\fib(g)$ (resp., $\fbb(g)$) for a gate~$g$ as the first
interesting box (resp.,
bidirectional box) seen in the \emph{preorder traversal} of~$T$ where we first
visit the box
of~$g$, then recursively traverse its left subtree, and last traverse its right subtree.
We then extend these definitions to~$\Gamma$ by:
\begin{align}
  \fib(\Gamma) &\;\;=\;\; \min_{g \in \Gamma} \fib(g) \label{eq:fibdef}\\
  \fbb(\Gamma) &\;\;=\;\; \textsc{lca}\big(\{\fbb(g)
  \mid g \in \Gamma\}\big)\label{eq:fbbdef}
\end{align}
where the min operator is according to the preorder traversal and
$\textsc{lca}$ denotes the least common ancestor of a set of boxes in
the tree of boxes.

Intuitively, the first bidirectional box will tell us where to jump,
and the first interesting box will compensate the time spent 
visiting bidirectional boxes in the enumeration (as these boxes do not otherwise
allow us to make progress). In addition to \fib{} and
\fbb, when jumping from a box $B$ to a descendant box $B'$, we will also
need to know the $\cup$-reachability relation $R(B', B)$ from the $\cup$-gates of~$B'$
to the $\cup$-gates of~$B$, i.e., a special case of what we have defined in
Section~\ref{sec:nodupes}, where all $\cup$-gates of~$B$ appear.
Hence, our preprocessing will consider each box $B$ and compute
the $\cup$-reachability relation
$R(B', B)$ for every box~$B'$ to which we may jump from~$B$.

\paragraph*{Index Structure}
We summarize all information that needs to be computed by formally defining
the index structure:

\begin{definition}
  \label{def:index}
  The \emph{index structure} $I(C)$ of a structured complete DNNF
  $C$ consists of the following, for each box~$B$:
  \begin{itemize}
    \item For each $\cup$-gate $g \in B$, the first interesting box \fib($g$)
      of~$\Gamma$ and the reachability relation $R(\fib(g), \boxf(g))$
    \item For each boxed set $\Gamma \subseteq B$ with $1 \leq \card{\Gamma} \leq
      2$, the first bidirectional box $\fbb(\Gamma)$ of~$\Gamma$,
      and the
      reachability relation $R(\fbb(\Gamma), \boxf(\Gamma))$
    \item Letting $\mathcal{B}$ be the set of boxes of the form $\fib(g)$ or
      $\fbb(g)$ for $g$ a $\cup$-gate of~$B$, letting $\mathcal{B'} =
      \{\textsc{lca}(B_1, B_2) \mid B_1, B_2 \in \mathcal{B}\}$ 
      (hence $\mathcal{B'}
      \supseteq \mathcal{B}$), we precompute $\mathcal{B'}$ and the
      linear order implied by preorder traversal over $\mathcal{B'}$.
  \end{itemize}
\end{definition}

At first glance, the index seems weaker than what we need, because we will want
to determine $\fib(\Gamma)$ and $\fbb(\Gamma)$ for boxed sets
$\Gamma$ of arbitrary size. However,
Equation~\eqref{eq:fibdef} implies that $\fib(\Gamma)$ can be evaluated from
$\fib(g)$ for individual gates $g \in \Gamma$, using the fact that we have
precomputed $\min$.
The same is true for $\fbb$ and boxed sets~$\Gamma$ of size at most
two, thanks to Equation~\eqref{eq:fbbdef} and the following elementary fact about
least common ancestors:

\begin{observation}
  \label{obs:lca2}
  For any set $\mathcal B$ of boxes, the least common ancestor of~$\mathcal B$
  is the minimal box $B$ in the preorder traversal that is
  a least common ancestor of two (possibly equal) boxes $B_1$ and $B_2$
  of~$\mathcal B$. Formally:
$\textsc{lca}(\mathcal B)=\min \{\textsc{lca}(B_1,B_2) \mid B_i \in \mathcal B \} \label{eq:lca}
$
\end{observation}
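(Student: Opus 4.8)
The plan is to prove the equality by showing that $\textsc{lca}(\mathcal{B})$ both \emph{belongs to} the set $\{\textsc{lca}(B_1,B_2)\mid B_i\in\mathcal B\}$ and is a \emph{lower bound}, with respect to the preorder traversal of the tree of boxes, for every element of this set; together these two facts force $\textsc{lca}(\mathcal B)$ to be the minimum of the set, which is exactly the claimed identity. Note first that the set is non-empty and in fact contains all of $\mathcal B$, since $\textsc{lca}(B,B)=B$ for every $B\in\mathcal B$.

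For the membership claim, I would invoke the standard property that in any rooted ordered tree the least common ancestor of a set of nodes $\mathcal B$ equals $\textsc{lca}(B_{\min},B_{\max})$, where $B_{\min}$ and $B_{\max}$ are the first and last boxes of~$\mathcal B$ visited in the preorder traversal. This is proven by a short induction on $\card{\mathcal B}$: when we add one more box~$B$, either the preorder-extremal pair is unchanged, in which case $B$ already lies strictly between $B_{\min}$ and $B_{\max}$ in preorder and hence (by the preorder-interval characterization of the descendants of a node) is a descendant of $\textsc{lca}(B_{\min},B_{\max})$, so the LCA does not change; or $B$ becomes the new $B_{\min}$ or $B_{\max}$, and one checks directly that $\textsc{lca}$ of the new endpoints is still an ancestor of all of $\mathcal B$. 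Either way $\textsc{lca}(\mathcal B)=\textsc{lca}(B_{\min},B_{\max})$, which is of the required form.

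For the lower-bound claim, fix any $B_1,B_2\in\mathcal B$. Since $\textsc{lca}(\mathcal B)$ is an ancestor of every box in $\mathcal B$, it is in particular a common ancestor of $B_1$ and $B_2$, hence an ancestor of (or equal to) $\textsc{lca}(B_1,B_2)$, which by definition is the deepest such common ancestor. In the preorder traversal every box is visited strictly before its proper descendants, so $\textsc{lca}(\mathcal B)\le \textsc{lca}(B_1,B_2)$ in the preorder order; as $B_1,B_2$ were arbitrary, $\textsc{lca}(\mathcal B)$ is below every element of the set. Combining this with the membership claim yields that $\textsc{lca}(\mathcal B)$ is the minimum of $\{\textsc{lca}(B_1,B_2)\mid B_i\in\mathcal B\}$, establishing Observation~\ref{obs:lca2}. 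I do not anticipate any genuine difficulty here; the only point requiring a little care is to make explicit the elementary fact, immediate from the definition of the preorder traversal of~$T$ recalled just above, that an ancestor always precedes all its descendants in preorder and that the proper descendants of a box form a contiguous preorder interval.
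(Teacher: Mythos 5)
Your proof is correct. Note that the paper itself gives no proof of Observation~\ref{obs:lca2} at all---it is stated as an elementary fact about least common ancestors and used as such in the computation of $\fbb$---so there is no official argument to compare against; your write-up supplies the omitted justification, and both halves check out. The lower-bound direction is exactly the right one-liner: $\textsc{lca}(\mathcal B)$ is a common ancestor of any pair from $\mathcal B$, hence an ancestor of (or equal to) its LCA, and ancestors precede descendants in preorder. For the membership direction you go through the slightly stronger fact that $\textsc{lca}(\mathcal B)=\textsc{lca}(B_{\min},B_{\max})$ for the preorder-extremal boxes, proved by induction plus the preorder-interval characterization of subtrees; this is valid, but a shorter route avoids it: if some $B\in\mathcal B$ equals $\textsc{lca}(\mathcal B)$, then $\textsc{lca}(B,B)$ already witnesses membership; otherwise every box of $\mathcal B$ lies in the subtree of some child of $\textsc{lca}(\mathcal B)$, and they cannot all lie under a single child (that child would be a deeper common ancestor), so choosing $B_1,B_2$ under two distinct children gives $\textsc{lca}(B_1,B_2)=\textsc{lca}(\mathcal B)$ directly, with no induction on $\card{\mathcal B}$. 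Either way the two facts combine as you say to yield the claimed minimum.
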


We now show how to compute the index structure:
\begin{lemmarep}\label{lem:precompute}
  Given a structured complete DNNF 
  circuit $C$ with \mbox{$v$-tree} $\calT$, we can compute
  $I(C)$ in time $O(\card{\calT} \times w^4)$, where $w$ is the width of $C$.
\end{lemmarep}

\begin{proofsketch}
  We compute the first interesting boxes $\fib(g)$ for all $\cup$-gates $g$ by
  an easy bottom-up processing of~$C$, and we do the same for all
  $\fbb(\Gamma)$: here we rely on Observation~\ref{obs:lca2} to know that, when we recursively compute
  $\fbb(\Gamma')$ for a boxed set in a child box, we can do so from the
  $\fbb(\Gamma'')$ for $\Gamma'' \subseteq \Gamma'$ with $\card{\Gamma''} \leq
  2$.

  We compute the reachability relations by considering all boxes~$B$
  bottom-up and 
  computing $R(B', B)$ for all descendant boxes~$B'$ where this is required: we
  show that we can always do so from the relation $R(B'', B)$ for $B''$ the child
  of~$B$ in the direction of~$B'$, which is easy to compute from the wires, and from the relation $R(B', B'')$ which we
  argue must have been computed when considering~$B''$. The complexity is
  $O(w^3)$, which is bounded by the complexity of computing $R(B', B'') \circ
  R(B'', B)$ with the na\"ive join algorithm.
\end{proofsketch}

\begin{proof}
  We can compute the first interesting boxes for all $\cup$-gates in time
  $O(\card{\calT} \times w)$ by a bottom up traversal of the $\cup$-gates of the
  circuit using the following equation:
  \begin{equation}
    \fib(g)\;\;=\;\;\begin{cases}
      \boxf(g) & \text{if $g$ has a non-$\cup$ input} \\
      \min_{(g',g) \in W_\cup} \fib(g') & \text{otherwise}
    \end{cases} \label{eq:fib}
  \end{equation}
  where relation $W_\cup$ is the restriction of $W$ to $\cup$-gates. 

  Likewise, the first bidirectional boxes of at most two $\cup$-gates can be computed in time $O(\card{\calT} \times w^2)$ by
  \begin{equation}
    \fbb(\Gamma) \;\;=\;\; \begin{cases}
      \textsc{undef} & \text{if } \Gamma=\emptyset\\
      \boxf(\Gamma) & \text{if $\Gamma$ is bidirectional}\\
      \fbb(\{g' \mid (g',g) \in W_\cup, g \in \Gamma\}) \hspace{-1cm}\null& \hspace{1.2cm}\text{otherwise}
    \end{cases}\label{eq:fbb}
  \end{equation}
  where we say that $\Gamma$ is \emph{bidirectional} if some gate $g_L \in \Gamma$ has
  some input in the left child box of $\boxf(\Gamma)$ and some gate
  $g_R \in \Gamma$ has some input in the right child box of
  $\boxf(\Gamma)$ (note that we may take $g_L = g_R$).
  Observe that in the third case we call \fbb{} on a boxed set for a child box: 
  while this set may
  have cardinality $>2$, remember that we can evaluate \fbb{} from the values
  computed for the child box, simply by 
  applying Observation~\ref{obs:lca2} the definition of
  $\fbb$ in Equation~\eqref{eq:fbbdef}.

  As for reachability relations, remember that we want to compute $R(\fib(g),
  B)$ and $R(\fbb(\Gamma), B)$ for each $\cup$-gate $g \in B$ and for each boxed
  set $\Gamma \subseteq B$ such that $1 \leq \card{\Gamma} \leq 2$. Let us call
  the \emph{target boxes of~$B$} the boxes $B'$ for which we want to compute
  $R(B', B)$. First observe that:
  \begin{itemize}
    \item If $B = B'$ then $R(B', B) = \{(g, g) \mid g \in B\}$
    \item If $B'$ is a child of~$B$, then $R(B', B)$ is easily computed
    from~$W_\cup$
  \item  If $B'$ is a descendant of~$B$ but not a child, we have
  \begin{equation}
    R(B',B)\;\;=\;\;R(B',B'') \circ R(B'',B)\;, \label{eq:M}
  \end{equation}
  where $B''$ is the child of
  $B$ in the direction of $B'$.
  \end{itemize}
  The crucial observation is that, in the last case,
  we must already have precomputed $R(B', B'')$ when processing the child~$B''$
  of~$B$, i.e., $B'$ is a target box of~$B''$. But indeed:
  \begin{itemize}
    \item If $B' = \fib(g)$ for some $\cup$-gate $g$ of~$B$, then if $B' \neq
      B$, the equation for \fib{} ensures that we must have $B' = \fib(g'')$ for
      some $\cup$-gate $g'' \in B''$, so $B'$ was a target box of~$B''$.
    \item If $B' = \fbb(\Gamma)$ for some boxed set $\Gamma$, then if $B' \ne
      B$, the equation for \fbb{} ensures that we must have $B' = \fbb(\Gamma'')$
      for some boxed set $\Gamma''$ of~$B''$. Now using
      Observation~\ref{obs:lca2} we know that we have $B' = \fbb(\Gamma''_2)$ for
      some subset $\Gamma''_2 \subseteq \Gamma''$ of size at most~$2$,
      justifying that $B'$ is a target box of~$B''$.
  \end{itemize}
  
  To bound the complexity, let us first analyze how many reachability relations
  we have to compute per box: first there are  $w$ for all possible
  $\fib(g)$ and another $w$ for all possible $\fbb(g)$. For the 
  least common ancestors of all $\fbb(g)$
  at first glance it looks like we have to compute up to $w^2$ many
  reachability relations for each box. But thankfully, the set of possible least common
  ancestors is in fact of linear size. That is because the set
    $\{\;B \;\mid\; B=\textsc{lca}(\mathcal B'), \;\mathcal B' \subseteq \mathcal B \;\}$
  is of size at most $\card{\mathcal B}$. Therefore it suffices to
  overall compute at most $3w$ many relations for each box.
  Thus altogether we compute $O(\card{\calT} \times w)$ many
  relations, which takes time $O(\card{\calT} \times w^4)$ altogether, as
  each relation can be computed in time $O(w^3)$ using the na{\"i}ve
  join algorithm.
\end{proof}

\begin{algorithm}[tb]
  \caption{Box enumeration}\label{alg:boxenum}
  \begin{algorithmic}[1]
    \Procedure{box-enum}{$\Gamma$}
      \State \Call{b-enum}{$\boxf(\Gamma),\{(g,g) \mid g \in \Gamma\}$}
    \EndProcedure
    \Procedure{b-enum}{$B,R$} \label{alg:boxenum:rec}
      \State $B_1 \gets \fib(\pi_1(R))$ \Comment{first interesting box}
    \label{alg:boxenum:firstline}
      \State $R_1 \gets R(B_1,B) \circ R$ \Comment{$R(B_1,B)$ is in the
      index}
      \State \textbf{output} $R_1$\label{alg:boxenum:output} \Comment{relation
      to $B_1$}
      \State $B_L \gets \leftb(B_1)$; $R_L \gets R(B_L, B_1) \circ R_1$
      \If{$R_L \neq \emptyset$}
        \Call{b-enum}{$B_L,R_L$} \label{alg:boxenum:left}\Comment{left
        subtree of~$B_1$}
      \EndIf
      \State $B_R \gets \rightb(B_1)$; $R_R \gets R(B_R, B_1) \circ R_1$
      \If{$R_R \neq \emptyset$}
        \Call{b-enum}{$B_R,R_R$} \label{alg:boxenum:right}\Comment{right subtree of~$B_1$}
      \EndIf
      \State $B' \gets B$; $B \gets \fbb(\pi_1(R))$ \Comment{jump to the $1^\text{st}$ bidir. box}
      \While{$B$ is defined and is a strict ancestor of $B_1$}
        \State $R \gets R\big(B, B'\big) \circ R$  \Comment{$R(B,B')$ is in the index}
        \State $B_R \gets \rightb(B)$; $R_R \gets R(B_R, B) \circ R$
        \State \Call{b-enum}{$B_R,R_R$} \Comment{right subtree of $B$} \label{alg:boxenum:right2}
        \State $B' \gets \leftb(B)$; $R \gets R(B', B) \circ R$ \Comment{left child}
        \State $B \gets \fbb(\pi_1(R))$ \Comment{$1^\text{st}$ bidir. box}
      \EndWhile
    \EndProcedure
  \end{algorithmic}
\end{algorithm}

\paragraph*{Efficient Implementation of \textsc{box-enum}}
We now present the algorithm for efficient 
enumeration of \Call{box-enum}{$\Gamma$}:

\begin{lemmarep}
  \label{lem:boxenum}
  Given a 
  structured complete DNNF
  $C$ and the index structure $I(C)$,
  we can implement \textsc{box-enum} with delay $O(w^3)$, where $w$
  is the width of $C$.
\end{lemmarep}
\begin{proofsketch}
  The algorithm \Call{box-enum}{$\Gamma$} to perform the enumeration for an
  arbitrary boxed set~$\Gamma$ 
  is in
  Algorithm~\ref{alg:boxenum}. Each call of the recursive algorithm in
  Line~\ref{alg:boxenum:rec} receives the relation $R(B, \Gamma)$ for
  some box~$B$ called the \emph{current box}, and it
  is expected to output the relations $R(B', \Gamma)$ for interesting boxes~$B'$ in
  the subtree of~$B$.

   \begin{figure}
    \begin{tabular}{c@{\hspace{2em}}b{3.5cm}}
    \begin{tikzpicture}[
      snake/.style={decorate,decoration={snake,amplitude=.5mm,segment
        length=1.5mm,post length=0mm}},
      triangle/.style = { draw, isosceles triangle, isosceles triangle apex angle=60, shape border rotate=90, inner sep=0pt, minimum size=.6cm, anchor=south, isosceles triangle stretches=true},
      level distance=5mm,
      sibling distance=17mm,
      ]
      \small
      \node {$B$}
      child {
        node {$\bullet$} edge from parent [snake]
        child {
          node {$\bullet$} edge from parent [snake]
          child {
            node {$B_1$} edge from parent [snake]
            child[level distance=11mm, child anchor=north,sibling distance=13mm] {
              node[triangle] {2}
            }
            child[level distance=11mm, child anchor=north,sibling distance=13mm] {
              node[triangle] {3}
            }
          }
          child[level distance=11mm, child anchor=north,sibling distance=13mm] {
            node[triangle]{5}
          }
        }
        child[level distance=11mm, child anchor=north,sibling distance=13mm] {
          node[triangle]{4}
        }
      }
      child[missing];
    \end{tikzpicture} & $B_1$ is the first interesting box and
                        $\bullet$ indicates bidirectional boxes.

                        All interesting boxes of
    the subtrees indicated by triangles are visited in the indicated
    order. 
    \end{tabular}
    \caption{Sketch of the box tree of assignment
      circuits }\label{fig:enumorder}
  \end{figure}

  In Figure~\ref{fig:enumorder}, we sketched the order in which
  boxes are enumerated, starting with the first interesting box $B_1$
  (output in Line~\ref{alg:boxenum:output}), then all descendants of
  $B_1$ (recursive calls in lines~\ref{alg:boxenum:left}
  and~\ref{alg:boxenum:right}), and then right children of bidirectional
  boxes on the path to $B_1$ (recursive call in the loop). By the
  definition of bidirectional boxes, this enumerates all interesting
  boxes.

  It is easy to show that the delay is
  $O(w^3)$, which stems from the computation of relational
  composition using na{\"i}ve joins; the main subtlety is that we need to modify
  Algorithm~\ref{alg:boxenum} slightly to ensure that each last recursive call
  is tail-recursive, to avoid delays when unwinding the recursion
  stack.
\end{proofsketch}
\begin{proof}
  The algorithm \Call{box-enum}{$\Gamma$} to perform the enumeration for an
  arbitrary boxed set~$\Gamma$ 
  is depicted as
  Algorithm~\ref{alg:boxenum}. Each call of the recursive algorithm in
  Line~\ref{alg:boxenum:rec} receives the relation $R(B, \Gamma)$ for
  some box~$B$ called the \emph{current box} and 
  is expected to output the relations $R(B', \Gamma)$ for interesting boxes~$B'$ in
  the subtree of~$B$. 

  In Line~\ref{alg:boxenum:output}, the algorithm outputs the relation
  $R(B_1,\Gamma)$ for the first interesting box $B_1$, so as to immediately make
  progress. Afterwards it does
  recursive calls in lines~\ref{alg:boxenum:left}
  and~\ref{alg:boxenum:right} that output all interesting boxes below
  the first interesting box if there are any. Finally, the algorithm walks down all
  bidirectional boxes on the path from $B$ to~$B_1$ and does recursive calls
  for the right children of these bidirectional boxes in
  Line~\ref{alg:boxenum:right2} to enumerate all interesting boxes whose
  preorder traversal number is greater than the last interesting box enumerated
  in the subtree rooted at~$B_1$. By the definition of bidirectional
  boxes, we thus enumerate all interesting boxes. We have sketched the
  order in which the boxes are enumerated in
  Figure~\ref{fig:enumorder}.

  All relations $R(B_2,B_3)$ that are used in the algorithm are either part
  of the index structure $I(C)$ or are the identity ($B_2 = B_3$) or between a
  box and a child box ($B_2$ is a child of~$B_3$). Therefore, all relational
  compositions can be computed in time $O(w^3)$ using the na{\"i}ve join algorithm.

  We show now that Algorithm~\ref{alg:boxenum} enumerates with
  constant delay, neglecting a small issue with the call stack that we
  discuss afterwards. The most important observation is that, by definition of
  bidirectional boxes, each
  recursive call will produce some output, and will do so after time
  at most $O(w^3)$, namely, the time needed to identify the first interesting
  box and produce the corresponding output.
  Then the time until we do the next recursive call
  (which itself will produce output in time $O(w^3)$) is also bounded by
  $O(w^3)$. Hence, the delay is dominated by the time spent computing the joins of the
  relations $R$.

  The only subtlety in the delay analysis concerns the call stack.
  Indeed, its depth can be as large as the depth of $C$, so cleaning up the
  stack might take too much time. To avoid this problem, we need to ensure
  that between two outputs we do not have to clean up too many stack
  frames, which we do by modifying our code to apply tail recursion elimination
  as we now explain. We modify the procedure \textsc{b-enum} so that, during
  each call to the procedure, just before we do a recursive call, we test
  whether the rest of the current execution of the procedure will be making
  another recursive call.
  For example, before doing the
  recursive call in Line~\ref{alg:boxenum:right2}, we have to check
  whether there is another bidirectional box that we need to visit in
  the next iteration of the while loop.
  Adding these tests does not impact the delay. Now, if we notice that a
  recursive call is the last one in the current execution of the procedure, we
  perform tail recursion, i.e., we do the recursive call by setting the argument
  of \textsc{b-enum} and jumping to Line~\ref{alg:boxenum:firstline} without
  adding anything to the call stack. It is clear that this change does not
  modify what the algorithm computes, as whenever we do this we have checked
  that the rest of the current execution of the procedure will not be making any
  more recursive calls; nor does the change make the delay worse.

  To understand why the algorithm is in delay $O(w^3)$ after this change,
  observe that, with the modified algorithm, we no longer need to clean up more
  than one stack frame before we can produce the next output. Indeed, if we
  clean up a stack frame, then we know that the call to which we return will be
  making another recursive call. By our analysis, it does so after a delay of at
  most~$O(w^3)$, and then we know that the call produces an output after an
  additional delay of at most~$O(w^3)$. Hence it is indeed the case that the
  overall delay in is $O(w^3)$, which concludes the proof.
\end{proof}

\paragraph*{Putting it Together}
We now state our main result about the complexity of enumerating the set of
assignments captured by a boxed set in a complete structured DNNF circuit.
Before we do so, however, we point out a small optimization trick that allows us
to bring the complexity in the width~$w$ of the circuit from~$O(w^3)$ down to $O(w^\omega)$, where
$2 \leq \omega \leq 3$ is an exponent for Boolean matrix multiplication, i.e., a constant such that the product of two $r$-by-$r$
Boolean matrices can be computed in time $O(r^\omega)$. The best possible value
for $\omega$ is an open question, 
with the best known bound being $\omega <
2.3728639$, see~\cite{Gall14a}. Observe that, in 
Theorem~\ref{thm:enumnodupes}, Lemma~\ref{lem:precompute} and
Lemma~\ref{lem:boxenum}, the complexity bottleneck is 
to compute
expressions of the form $R \circ R'$ for relations $R$ and $R'$ over sets of
size $\leq w$, with all other operations having complexity $O(w^2)$ at most.
We have used the na\"ive join algorithm to bound this by $O(w^3)$,
but we can instead represent these relations as Boolean matrices and
use any matrix multiplication algorithm to compute the product in 
$O(w^\omega)$.

This leads to our final enumeration result 
on set circuits:

\begin{theorem}
  \label{thm:masterenum}
  Let $\omega$ be an exponent for the Boolean matrix multiplication problem.
  Given any complete structured DNNF~$C$ of width~$w$ with its v-tree $\calT$ and
  structuring function, we can preprocess~$C$ 
  in $O(\card{\calT} \times w^{\omega+1})$  to be able, given any boxed set $\Gamma$, to
  enumerate the assignments of $\S(\Gamma)$ with delay $O(\card{S} \times
  w^\omega)$ for each produced assignment~$S$.
\end{theorem}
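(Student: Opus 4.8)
The plan is to assemble the pieces developed in this section --- Lemma~\ref{lem:precompute} for the preprocessing, Theorem~\ref{thm:enumnodupes} for the enumeration loop, and Lemma~\ref{lem:boxenum} for the efficient implementation of \textsc{box-enum} --- while applying everywhere the matrix-multiplication optimization sketched just above. For the preprocessing, I would run the construction of the index structure $I(C)$ from the proof of Lemma~\ref{lem:precompute}, but each time it composes two $\cup$-reachability relations over sets of size at most $w$ (the compositions in Equation~\eqref{eq:M}), I would store the relations as $w\times w$ Boolean matrices and compute the composition in time $O(w^\omega)$ instead of $O(w^3)$. That proof computes $O(\card{\calT}\times w)$ such relations, and all remaining work (evaluating the recurrences for $\fib$ and $\fbb$, maintaining the preorder-induced linear order on $\mathcal{B}'$, reading off wires) is $O(\card{\calT}\times w^2)$, which is dominated since $\omega\ge 2$; hence the preprocessing runs in $O(\card{\calT}\times w^{\omega+1})$ and produces exactly the index structure of Definition~\ref{def:index}, which is independent of~$\Gamma$.

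For the enumeration phase, given a boxed set $\Gamma$ I would run Algorithm~\ref{alg:enum}, whose correctness, absence of duplicates, and production of the provenance are exactly Theorem~\ref{thm:enumnodupes}; its calls to \textsc{box-enum} are served by Algorithm~\ref{alg:boxenum} using the index structure, as in Lemma~\ref{lem:boxenum}. The key observation is that the per-step work of both algorithms consists of a bounded number of compositions of $\cup$-reachability relations over sets of size at most $w$ (e.g.\ $R(B_1,B)\circ R$ and $R(B_L,B_1)\circ R_1$ in Algorithm~\ref{alg:boxenum}, and $W\circ R(B',\Gamma)$ and $G_\times''\circ W\circ R(B',\Gamma)$ in Algorithm~\ref{alg:enum}), plus $O(w^2)$ bookkeeping (projections, restrictions, aggregating $\fib$ by $\min$ and $\fbb$ by $\textsc{lca}$ from the precomputed per-gate values). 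Representing all relations as $w\times w$ Boolean matrices, each composition costs $O(w^\omega)$, so \textsc{box-enum} runs with delay $\Delta=O(w^\omega)$; plugging this into the bound $O(\card{S}\times(\Delta+w^3))$ of Theorem~\ref{thm:enumnodupes}, and again replacing the $w^3$ term there (itself the relational-composition bottleneck) by $w^\omega$, yields the claimed delay $O(\card{S}\times w^\omega)$. The tail-recursion elimination of Lemma~\ref{lem:boxenum} and the thread-based recursion underlying Algorithm~\ref{alg:enum} still apply verbatim and are needed to keep call-stack cleanup from inflating the delay; the Boolean-matrix representation changes nothing about what either algorithm computes, so correctness is inherited unchanged.

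The main thing to verify carefully --- and the only real obstacle --- is the bookkeeping claim: one must check that in all three earlier proofs every $\Theta(w^3)$ step is genuinely a composition $R\circ R'$ of relations over sets of size $\le w$, and that every other operation costs $O(w^2)$ and is therefore absorbed (using $w^2\le w^\omega\le w^3$). In particular I would double-check that we never recompute $\fib(\Gamma)$ or $\fbb(\Gamma)$ for a boxed set $\Gamma$ of size $>2$ from scratch, but only aggregate them from the per-gate values $\fib(g)$ and from the at-most-two-element-subset values $\fbb(\Gamma'')$, which is valid by Equations~\eqref{eq:fibdef} and~\eqref{eq:fbbdef} together with Observation~\ref{obs:lca2}, and which costs only $O(w^2)$; and that the stored linear order on $\mathcal{B}'$ lets us evaluate these $\min$ and $\textsc{lca}$ queries within that budget. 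Once this is confirmed, the theorem follows immediately by chaining the three results with $\Delta=O(w^\omega)$.
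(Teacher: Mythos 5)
Your proposal is correct and follows exactly the paper's own proof: preprocess with Lemma~\ref{lem:precompute}, enumerate via Theorem~\ref{thm:enumnodupes} using the \textsc{box-enum} implementation of Lemma~\ref{lem:boxenum}, and replace every na\"ive relational composition by an $O(w^\omega)$ Boolean matrix product. The bookkeeping check you flag is precisely the observation the paper makes just before the theorem statement (all non-composition operations cost $O(w^2)$), so nothing further is needed.
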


\begin{proof}
  We preprocess $C$ using Lemma~\ref{lem:precompute}, and we then perform the
  enumeration using Theorem~\ref{thm:enumnodupes} with the efficient
  implementation of \textsc{box-enum} given in Lemma~\ref{lem:boxenum};
  modifying the algorithms to perform matrix multiplication in time $O(w^\omega)$ instead
  of using the na\"ive join algorithm.
\end{proof}

\section{Updates and Balancing}
\label{sec:updates}
We have shown our circuit construction result (Lemma~\ref{lem:buildcircuit})
and enumeration result (Theorem~\ref{thm:masterenum}). We will put them together
to show our enumeration results for automata and queries over trees. However, before this we need to explain how
we can handle updates efficiently, i.e., how we can recompute the circuit
(Lemma~\ref{lem:buildcircuit}) and the index of the enumeration structure
(Lemma~\ref{lem:precompute}) whenever the underlying tree is modified.

The crucial insight is that the circuit in
Lemma~\ref{lem:buildcircuit} is computed \emph{bottom-up} on the input tree~$T$, 
and the precomputation in
Lemma~\ref{lem:precompute} is also performed bottom-up on the tree of boxes
whose structure is isomorphic to~$T$. Hence, whenever $T$ is updated at some
node~$n$, we can modify the circuit and the index accordingly by recomputing
everything bottom-up starting at node~$n$. The complexity of this
process will be linear in the \emph{height} of~$T$. This is why, in this
section, we will want to work on trees that are \emph{balanced}, i.e., whose
height is logarithmic in their size: this is what will guarantee that updates
can be handled in logarithmic time.

As trees are in general not balanced, our technique will be to code the input
tree as binary
balanced trees. We will also use this as a way to allow arbitrary unranked trees
as input (not just binary trees), which is more convenient because binary trees do not behave well under edit
operations (e.g., adding or deleting a single leaf).
Specifically, given the input unranked $\Lambda$-tree $T$ and $\Lambda,\X$-TVA~$A$ running on
unranked trees,
we will encode~$T$ to a balanced binary tree~$T'$ over a different alphabet
$\Lambda'$, and we will translate
the automaton~$A$ in polynomial time to a $\Lambda',\X$-TVA~$A'$, while ensuring
that
$A$ and $A'$ have the same satisfying assignments. Our balanced binary tree
formalism will further ensure that, whenever an update is performed on~$T$, we
can update $T'$ (and keep it balanced) to reflect the change, and we can
efficiently update the circuit~$C$ and the index for this update.

In this section, we first present our model for the input unranked tree~$T$ and the
edit operations that we allow on it. We then present our formalism for the
automaton~$A$, which runs on unranked trees, and the notion of
\emph{tree hollowings} to describe which kinds of updates can happen on the
balanced binary tree~$T'$: intuitively, whenever we apply an edit operation on
$T$, then we will be able to update~$T'$ in logarithmic time with a tree
hollowing; and we will show that we can update the circuit and index in the same
time bound. Then, we formalize the notion of encoding unranked trees to binary
trees and of faithfully translating automata, and we state the result
from~\cite{Niewerth18} which explains how this can be efficiently performed. 
This allows us state our main enumeration results for automata and queries in the next section.

\paragraph*{Edit Operations on Unranked Trees.}
We first present the language of edit operations that we allow on the
\emph{unranked} tree which is the input to our enumeration scheme.
Fixing a set $\X$ of variables, we will define
an $\X$-valuation of an unranked $\Lambda$ tree~$T$ as a function $\nu$ mapping
each node $n$ of~$T$ to a set $\nu(n) \subseteq \X$. Note that a valuation of an unranked
tree annotates all its nodes, not just the leaf nodes.

The update operations that we allow on unranked trees are leaf insertions,
leaf deletions, and relabelings. More precisely:
\begin{definition}
  \label{def:edits}
  Given an unranked $\Lambda$-tree $T$, a node $n$ of~$T$ and a label $l \in \Lambda$, we allow the following edit operations:
  \begin{itemize}
  \item $\textsc{delete}(n)$: remove $n$ from $T$  (only if $n$ is a leaf)
  \item $\textsc{insert}(n,l)$: insert an $l$-node as first child of~$n$
  \item $\textsc{insert$_R$}(n,l)$: insert an $l$-node as right sibling of $n$
  \item $\textsc{relabel}(n,l)$: change the label of $n$ to $l$
  \end{itemize}
  For any edit operation $\tau$, we call $\tau(T)$ the
  resulting tree.
\end{definition}

\paragraph*{Automata on Unranked Trees.}
Following our use of unranked binary trees, we must also extend the definition
of TVAs to work on unranked trees. Our automaton model for unranked trees are stepwise tree automata extended with variables. 
Stepwise tree automata where introduced in~\cite{CarmeNT-rta04}. We use the formalism from~\cite{MartensN-jcss07a}, as it is closer to our tree model.
A \emph{$\Lambda,\X$-TVA on unranked $\Lambda$-trees} for the variable set~$\X$ 
is a tuple $A = (Q, \iota, \delta, F)$,
where $Q$ are the states,
$\iota \subseteq \Lambda \times 2^{\X} \times Q$ is the \emph{initial
  relation}, $\delta \subseteq Q \times Q \times Q$ is the
\emph{transition relation}, and $F \subseteq Q$ are the 
\emph{final states}.
Given a $\Lambda$-tree $T$ and an $\X$-valuation $\nu$ of~$T$, we again
define a \emph{run} of~$A$ on~$T$ as a function $\rho \colon T\to Q$ satisfying some
conditions. However, compared to binary trees, the use of $\iota$ and $\delta$ is
different: $\iota$ assigns a set of possible initial states to every state (not only to
leaves). For inner nodes, $\delta$ then consumes the states of children state by state ,
just as a word automaton reads its input symbol by symbol. 
The assigned state of a node is the state after having read the states of all children.

To
formalize this, let us see~$\delta$ as a function $\delta\colon 2^Q \times Q \to 2^Q$
by setting $\delta(Q',q')=\cup_{q \in Q'} \{q'' \mid (q,q',q'')
\in \delta\}$, and let us inductively define the function 
$\delta^* \colon 2^Q \times Q^* \to 2^Q$ 
by $\delta^*(Q',\epsilon)\colonequals Q'$ and
$\delta^*(Q',q_1\dots q_n)\colonequals \delta^*(\delta(Q',q_1),q_2\dots q_n)$.
A \emph{run} $\rho \colon T \to Q$ of~$A$ on~$T$ must then satisfy
the following: For every node $n$ with children $n_1,\dots,n_m$,
    we have $\rho(n) \in
    \delta^*(\iota(\lambda(n),\nu(n)),\rho(n_1)\rho(n_2)\dots\rho(n))$.
    Note that TVAs on unranked trees now read annotations at all nodes (and not only at leaves),
    as per the definition of valuations.

The definitions of accepting runs, satisfying valuations and assignments on a
$\Lambda,\X$-TVA on unranked trees is the same as for binary trees: note that
assignments now consist of singletons of the form $\langle Z: n\rangle$ where
$n$ can be any node of~$T$, not just a leaf.
Incidentally, observe that any $\Lambda,\X$-TVA on binary trees can clearly be converted to a
$\Lambda,\X$-TVA on unranked trees which accepts exactly the same trees.

\paragraph*{Tree Hollowings.}
We now formalize our language of updates on \emph{binary trees}, which we call tree
hollowings. Intuitively, when performing an update on the input unranked
tree~$T$, we want to reflect it on its balanced binary representation~$T'$.
Our encoding scheme will ensure that the edit operations of
Definition~\ref{def:edits} can be performed as \emph{tree hollowings}
whose \emph{trunks} have logarithmic size. Here are the relevant definitions:

\begin{definition}
  Let $T'$ be a binary $\Lambda'$-tree. Two nodes of~$T'$ are \emph{incomparable} if
  neither is a descendant of the other (in particular, they must be different).
  An \emph{antichain} in~$T'$ is a set of nodes in~$T'$ that are pairwise
  incomparable.

  A \emph{tree hollowing} $H=(T'',\eta)$ of~$T'$ consists of a
  $(\Lambda' \cup \{\square\})$-tree~$T''$ called the \emph{trunk}, in which all
  internal nodes must have a label in~$\Lambda'$, and an injective function
  $\eta$ from the leaves of~$T''$ with label~$\square$ to~$T'$, such
  that the image of~$\eta$ is a antichain of~$T'$. The \emph{result}
  $\result{H}$ of the tree hollowing $H$ is the $\Lambda'$-tree obtained
  by taking $T''$ and replacing each $\square$-labeled leaf $n$ of~$T''$
  by the subtree of~$T'$ rooted at~$\eta(n)$.
\end{definition}

Intuitively, a tree hollowing describes how to build a new tree while reusing
disjoint subtrees of the original tree. We will require that each update to the
original tree~$T$ should be translatable in logarithmic time to a tree
hollowing of the binary balanced representation~$T'$---so in particular the trunk will have logarithmic size, even
though the result of the hollowing will not.

The reason why hollowings are a good update language is because our
constructions are strictly bottom up. Thus,
given a binary $\Lambda'$-tree~$T'$ for which we have computed an assignment
circuit~$C$ for some $\Lambda',\X$-TVA $A$, and given the index $I(C)$, we can 
follow a tree hollowing $H=(T'',\eta)$ of~$T'$ to update the circuit to an
assignment circuit $\result{C}$ of~$A$ on~$\result{H}$ and to update the index to
$I(\result{C})$. Formally:

\begin{lemma}
  \label{lem:hollowing}
  Given any $\Lambda'$-tree $T'$ and $\Lambda',\X$-TVA $A'$ with state space~$Q'$, given an assignment
  circuit $C$ of~$A'$ on~$T'$ which is a structured complete DNNF of width
  $\card{Q'}$ and given the index structure
  $I(C)$, given any tree hollowing $H' = (T'', \eta)$ of $T'$, we can compute in time
  $O(\card{T''} \times \card{Q'}^{\omega+1})$
  a circuit $\result{C}$ and the index structure $I(\result{C})$
  such that $\result{C}$ is a structured complete DNNF of width $\card{Q'}$ which
  is an assignment circuit of~$A'$ on~$\result{T'}$, and $I(\result{C})$ is the
  index structure for~$\result{C}$.
\end{lemma}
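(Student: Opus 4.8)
The plan is to exploit the observation stressed at the start of this section, that the circuit construction of Lemma~\ref{lem:buildcircuit} and the index precomputation of Lemma~\ref{lem:precompute} are both carried out \emph{strictly bottom-up} along the tree of boxes, which is isomorphic to the input tree. So I will reuse the parts of $C$ and $I(C)$ that live over the subtrees of $T'$ that survive in $\result{H'}$, the result of the hollowing (written $\result{T'}$ in the statement, obtained from $T''$ by replacing each $\square$-leaf $n$ with the subtree $T'_{\eta(n)}$ rooted at $\eta(n)$), and recompute only the $O(\card{T''})$ boxes over the trunk~$T''$; the subtrees of $T'$ that do not survive are simply discarded, their boxes becoming unreferenced, so we never traverse them.

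For the circuit: for every node $m$ lying inside some surviving subtree $T'_{\eta(n)}$ I keep the box $B_m$ with its gates $\gamma(m,q)$ and the v-tree restricted to that subtree unchanged, and I merely repoint the parent link of the box $B_{\eta(n)}$ to its new parent in the box tree of $\result{C}$. I then traverse $T''$ bottom-up: at a $\square$-leaf $n$ I do nothing except record that its box is the reused $B_{\eta(n)}$; at a non-$\square$ leaf of $T''$ I apply the leaf case of Lemma~\ref{lem:buildcircuit}; at an internal node of $T''$ with label $l\in\Lambda'$ and children $n_1,n_2$ I apply the inner-node case of Lemma~\ref{lem:buildcircuit}, using as $\gamma(n_i,\cdot)$ either the freshly built trunk gates or, when $n_i$ is a $\square$-leaf, the reused gates $\gamma(\eta(n_i),\cdot)$. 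That $\result{C}$ is an assignment circuit of $A'$ on $\result{H'}$, that it is a structured complete DNNF of width $\card{Q'}$ with at most $\card{Q'}^{2}$ $\times$-gates per box, and that the v-tree and structuring function are as required, then follows verbatim from the bottom-up induction of Lemma~\ref{lem:buildcircuit}: over a surviving subtree nothing has changed, and each newly built box satisfies exactly the same invariants (fan-in, decomposability, no $\top$- or $\bot$-gate used as an input, the last using that $A'$ is homogenised).

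For the index I rerun the bottom-up procedure of Lemma~\ref{lem:precompute} \emph{only on the trunk boxes}, keeping the stored entries of all surviving boxes, and two facts have to be checked. First, the kept entries remain valid for $\result{C}$: for a $\cup$-gate $g$ in a surviving box $B_m$, the boxes $\fib(g)$ and $\fbb(g)$, the relations $R(\fib(g),B_m)$ and $R(\fbb(\Gamma),B_m)$ for boxed sets $\Gamma\subseteq B_m$, and the auxiliary box set with its local preorder order, all depend only on the subtree of $B_m$ in the box tree --- since every wire $(g',g)$ of $C$ keeps $g'$ in $\boxf(g)$ or in a child box of it, $\transitive{g}$ lies entirely in the subtree of $B_m$, and ``interesting''/``bidirectional'' status and $\cup$-reachability between boxes of that subtree are determined solely by its gates and wires --- and that subtree is identical in $C$ and $\result{C}$. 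Second, recomputing a trunk box $B$ still goes through: exactly as in the proof of Lemma~\ref{lem:precompute} we need $R(B',B)=R(B',B'')\circ R(B'',B)$ for each target box $B'$, where $B''$ is the child of $B$ towards $B'$, and Equations~\eqref{eq:fib} and~\eqref{eq:fbb} together with Observation~\ref{obs:lca2} guarantee that $B'$ is also a target box of $B''$, so $R(B',B'')$ is available --- either because $B''$ is a trunk box we have just processed, or because $B''$ is a surviving root $B_{\eta(n)}$ whose unchanged entry in $I(C)$ already stores it.

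The complexity is then immediate: all work is confined to the $O(\card{T''})$ trunk nodes, and at each of them both the circuit step and the index step cost $O(\card{Q'}^{\omega+1})$ by the accounting of Lemmas~\ref{lem:buildcircuit} and~\ref{lem:precompute}, using the Boolean matrix-multiplication speedup of Theorem~\ref{thm:masterenum} for the relational compositions; this gives the claimed bound $O(\card{T''}\times\card{Q'}^{\omega+1})$. The step I expect to be the main obstacle --- and the one I would write out in full --- is the invariance claim for the kept index entries together with the ``$B'$ is already a target box of $B''$'' argument across the trunk/surviving-subtree boundary; everything else is a direct reuse of Lemmas~\ref{lem:buildcircuit} and~\ref{lem:precompute}.
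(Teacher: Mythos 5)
Your proposal is correct and follows exactly the paper's approach: rebuild only the $O(\card{T''})$ trunk boxes bottom-up via the construction of Lemma~\ref{lem:buildcircuit}, keep the boxes and index entries of the surviving subtrees, and rerun the bottom-up index computation of Lemma~\ref{lem:precompute} (with the matrix-multiplication speedup) on the new boxes. In fact you spell out two points the paper's two-sentence proof leaves implicit --- that the kept index entries stay valid because $\fib$, $\fbb$, and the reachability relations depend only on the subtree below each box, and that the target boxes needed when processing a trunk box whose child is a surviving root are already stored in $I(C)$ --- so your write-up is a strictly more detailed version of the same argument.
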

\begin{proof}
  We use the circuit construction from Lemma~\ref{lem:buildcircuit} to
  compute a box $B_n$ for each node $n$ in $T''$ that is not labeled
  $\square$ (total time $O(\card{T''}\times\card{A'})$) and we thus compute $\result{C}$,
  which has size $O(\card{T''} \times\card{A'})$. Afterwards we use the
  computation from the proof of
  Lemma~\ref{lem:precompute} (using efficient matrix multiplication as explained
  in the proof of Theorem~\ref{thm:masterenum})
  to compute the index structure for the new boxes in total time $O(\card{T'}
  \times w^{\omega+1})$.
\end{proof}

\paragraph*{Encoding Unranked Trees in Balanced Binary Trees.}
We will now explain how we can encode an unranked tree~$T$ into a balanced binary
tree~$T'$
on which update operations can be reflected, i.e., every update
on~$T$ in the language of Definition~\ref{def:edits} translates to an update
of~$T'$ that can be represented by a tree hollowing whose trunk has logarithmic
size. It will be more convenient to formalize the encoding by describing the
\emph{decoding function} which decodes a binary tree to an
unranked tree. Specifically,
a \emph{tree encoding scheme} for the tree alphabet~$\Lambda$
consists of a tree alphabet $\Lambda'$ and of a function $\omega$ defined on some
subset of the binary $\Lambda'$-trees (called the \emph{well-formed trees})
and which decodes
any
such binary $\Lambda'$-tree $T'$ to an unranked $\Lambda$-tree $T$ and to a bijection
$\phi_{T'}$ from the
leaves of~$T'$ to the nodes of~$T$. We require $\omega$ to be surjective, i.e.,
every unranked $\Lambda$-tree has some preimage in~$\omega$. Now, given a set
$\X$
of variables, a $\Lambda,\X$-TVA $A$ and a binary $\Lambda,\X'$-TVA $A'$, we say that $\omega$
is \emph{$A,A'$-faithful} if for any $\Lambda$-tree $T$, for any preimage $T'$
of~$T$ in~$\omega$, letting $\phi_{T'}$ be the bijection from the leaves of~$T'$
to the nodes of~$T$, for any $\X$-valuation $\nu$ of~$T$, we have that $A$
accepts $T$ under $\nu$ iff $A'$ accepts $T'$ under the valuation $\nu \circ
\phi_{T'}$.

\begin{toappendix}
  In this appendix, we give some more explanations about the proof of
  Lemma~\ref{lem:treebal} using~\cite{Niewerth18}. We start with some
  prerequisites about forest algebra terms.

\paragraph*{Forest Algebra Terms}
A \emph{forest algebra pre-term} is a term in the free forest algebra as
defined in~\cite{Niewerth18} that does not use the empty forest and empty context.
We repeat the definition for self-containedness.

A \emph{$\Lambda$-forest} is an ordered list of~$\Lambda$-trees.
A \emph{$\Lambda$-context} is a $(\Lambda \cup \{\hole\})$-forest, where the
special ``hole'' label $\hole$ is applied to no internal node and to exactly
one leaf.

  A \emph{forest algebra pre-term} on an alphabet $\Lambda$ is a binary tree whose internal nodes are labeled
  $\oplus_{HH}$ (for forest concatenation), $\oplus_{HV}$, $\oplus_{VH}$ (for concatenation of forest and context or vice versa), $\odot_{VV}$ (for context application between contexts) and $\odot_{VH}$ (for context application on a forest), whose
  leaves are labeled with $a_t$ for $a \in \Lambda$ (for a node labeled
  $a$), or $a_\hole$ for $a \in \Lambda$ (for a context node labeled $a$), and
  where we require that some typing constraints are respected. Specifically,
  each node of the tree is typed as a forest or as a context, with $a_t$ being a
  forest and $a_\hole$ being a context, and the type of inner nodes is defined by
  induction:
  \begin{itemize}
  \item $\oplus_{HH}$ both inputs must be forests, and the result is a forest;
  \item $\oplus_{HV}$ the left input must be a forest, the right input
    must be a context, and the result is a context;
  \item $\oplus_{VH}$ the right input must be a forest, the left input must be a
    context, and the result
    is a context;
  \item $\odot_{VV}$ both inputs must be contexts, and the result is a context;
  \item $\odot_{VH}$ the left input must be a context, the right input must be a
    forest, and the result is a forest.
  \end{itemize}
  We say that a forest algebra term \emph{represents} an unranked forest or context
  on the alphabet~$\Lambda$, which is defined by induction,
  preserving the invariant that a node typed as a forest represents a
  $\Lambda$-forest and a node typed as a context represents a $\Lambda$-context:
  \begin{itemize}
    \item $a_t$ represents the forest with a singleton root labeled~$a$;
    \item $a_\hole$ represents the context with a singleton root labeled~$a$ and
      having a single child labeled~$\hole$;
    \item for a $\oplus$-node, given the contexts or forests $F_1$ and $F_2$ represented by
      the first and second input, the result is~$F_1 \cup F_2$;
    \item for a $\odot$-node, given the context $F_1$ and the forest or
      context $F_2$ represented respectively by
      the first and second input, the result is obtained by
      replacing the one $\hole$-labeled node of~$F_1$ by $F_2$,
      i.e., the $\hole$-labeled node $n$ of~$F_1$ is removed and the roots of
      the trees in~$F_2$ are inserted in the list of children of the parent
      of~$n$ at the point where $n$ was, in their
      order according to~$F_2$;
    \end{itemize}
    
  A \emph{forest algebra term} is a forest algebra pre-term that represents a $\Lambda$-forest (i.e., the root has type forest), and where this forest contains
  exactly one tree.
\end{toappendix}

Our tree encoding method can then be formalized as the following result, which easily follows from~\cite{Niewerth18}:
\begin{lemmarep}
  \label{lem:treebal}
  For any tree alphabet $\Lambda$ and set~$\X$ of variables, there is an
  encoding scheme $\omega$ for~$\Lambda$ such that:
  \begin{itemize}
    \item The encoding is linear-time computable, i.e., given any
      unranked $\Lambda$-tree $T$, we can compute in linear time some
      $\Lambda'$-tree~$T'$ with $\omega(T')=T$, as well as the bijection $\phi_{T'}$.
    \item The encoded trees have logarithmic height, i.e., each
      well-formed $\Lambda'$-tree $T'$ has height in
      $O(\log \card{T'})$.
    \item We can efficiently perform updates, i.e., given any binary
      $\Lambda$-tree~$T$, preimage $T'$ of~$T$, and update $\tau$ on~$T$ in the
      language of Definition~\ref{def:edits}, we can compute in
      time $O(\log \card{T})$ a tree hollowing $H=(T'', \eta)$ of\/~$T'$
      such that $\omega(\result{H})=\tau(\omega(T'))$.
    \item We can efficiently translate automata, i.e., given any
      unranked $\Lambda,\X$-TVA $A$ with state space~$Q$, we can build in time
      $O(\card{Q}^6)$ a binary \mbox{$\Lambda',\X$-TVA} $A'$ with
      $O(\card{Q}^4)$ states and $O(\card{Q}^6)$ transitions such that
      $\omega$ is $A,A'$-faithful. Furthermore, $A'$ has a single
      accepting state.
  \end{itemize}
\end{lemmarep}

\begin{proofsketch}
  The idea is to convert unranked trees to binary trees
  that represent terms in the free
  forest algebra. Intuitively, a forest algebra term $T'$ is a binary
  $\Lambda'$-tree that describes an unranked $\Lambda$-tree:
  the leaves of~$T'$ correspond to the nodes of~$T$, and each internal node
  corresponds to a forest computed from its children 
  by one of two operations: forest concatenation or context application.
  In~\cite[Section~3]{Niewerth18}, it is described
  how we can convert an unranked tree~$T$ in linear time to a balanced term~$T'$ that
  represents~$T$, and
  how updates on~$T$ can be reflected on~$T'$.
  Although this is not stated explicitly
  in~\cite{Niewerth18}, the resulting updates on~$T'$ can be
  described as tree hollowings of logarithmic size. We can also easily show that
  an automaton $A$ on unranked $\Lambda$-trees can be converted in PTIME to an
  automaton $A'$ on binary $\Lambda'$-trees.
\end{proofsketch}
\begin{proof}
  We let $\Lambda'$ be the alphabet of forest algebra terms over
  alphabet~$\Lambda$, and
  we define $\omega$ to be the function that maps forest algebra terms
  over alphabet $\Lambda$ to the unranked $\Lambda$-trees that they represent.
  We artificially
  restrict $\omega$ to be only defined on those forest algebra terms
  that have logarithmic height in order to satisfy the second
  condition.

  It is easy to see that every tree can be encoded as a forest algebra
  term. In~\cite{Niewerth18}, it is shown that for each tree there
  exists a forest algebra term of logarithmic height and that we can
  efficiently perform updates. Although this is not stated explicitly
  in~\cite{Niewerth18}, all updates of forest algebra terms can be
  described as tree hollowings of logarithmic size. The initial update
  on the forest algebra term is of constant size and all successive
  rotations done for rebalancing are either performed on the path from
  the updated node to the root or performed on a direct child of a
  node from this path, so the result of the update on the forest
  algebra is indeed representable as a trunk of logarithmic size to
  which we connect subtrees of the original forest algebra term.

  There is a natural bijection between leaves of forest algebra terms
  and nodes in the represented tree.
  
  The last point is to explain how we convert an unranked
  $\Lambda,\X$-TVA $A=(Q,\delta,F)$ to a binary $\Lambda',\X$-TVA
  $A'=(Q',\delta',F')$. We do so as follows, where we assume
  \mbox{w.l.o.g.} that we have added to~$Q$ some special states
  $q_0,q_f$ such that
  $\delta \cap (\{q_0\} \times Q \times \{q_f\}) = \{q_0\} \times F
  \times \{q_f\}$. This will help us later to identify the accepting runs of
  $A$.
  \begin{align*}
    \Lambda' \quad=&\quad \{a_t \mid a \in \Lambda\} \;\; \cup \;\; \{a_\square \mid a \in \Lambda\} \;\;\cup\;\; \{\oplus_{HH},\oplus_{HV},\oplus_{VH},\odot_{VV}, \odot_{VH}\} \\
    Q' \quad=&\quad Q^2 \;\; \cup \;\; (Q^2)^2 \\
    F' \quad=&\quad \{(q_0,q_f)\} \\
    \iota' \quad=&\quad \big\{\; \big(a_t,\Y,(q_1,q_2)\big) \;\;\big|\;\; (q_1,p,q_2) \in
    \delta \text{ for some } p \in \iota(a,\Y) \;\big\} \;\;\cup\\
                  &\quad \big\{\;\big(a_\hole,\Y,((q_1,q_2),(q_3,q_4))\big) \;\;\big|\;\;
                  (q_1,q_4,q_2) \in \delta, q_3 \in \iota(a,\Y) \;\big\} \\
    \delta'_{\oplus_{HH}} \quad=&\quad \Big\{\;\Big((q_1,q_2),\; (q_2,q_3), \;(q_1,q_3)\Big) \;\;\Big|\;\; q_1.q_2,q_3 \in Q \;\Big\} \\
    \delta'_{\oplus_{HV}} \quad=&\quad \Big\{\;\Big((q_1,q_2),\;\big((q_2,q_3),(q_4,q_5)\big),\big((q_1,q_3),(q_4,q_5)\big)\Big) \;\;\Big|\;\; q_1,\dots,q_5 \in Q\;\Big\}\\
    \delta'_{\oplus_{VH}} \quad=&\quad \Big\{\;\Big(\big((q_1,q_2),(q_3,q_4)\big),\;(q_2,q_5),\;\big((q_1,q_5),(q_3,q_4)\big)\Big)\;\;\Big|\;\; q_1,\dots,q_5 \in Q \;\Big\} \\
    \delta'_{\odot_{VV}} \quad=&\quad \Big\{\;\Big(\big((q_1,q_2),(q_3,q_4)\big),\;\big((q_1,q_2),(q_3,q_4)\big),\;\big((q_1,q_2),(q_3,q_4)\big)\Big) \;\;\Big|\;\; q_1,\dots,q_6 \in Q\;\Big\}\\
    \delta'_{\odot_{VH}}\quad=&\quad
    \Big\{\;\Big(\big((q_1,q_2),(q_3,q_4),\;(q_3,q_4),\;(q_1,q_2)\big)\Big) \;\;\Big|\;\; q_1,\dots,q_4 \in Q \;\Big\}
  \end{align*}

  The construction of $A'$ is directly derived from the definition of
  the transition algebra as given in~\cite[Section~4]{Niewerth18}. The
  resulting automaton has $O(\card{Q}^4)$ states and $O(\card{Q^6})$
  transitions and can be computed in time $O(\card{Q}^6)$.

  We have to prove faithfulness. Rather then repeating the definition
  of transition algebra form~\cite{Niewerth18}, we give a direct
  description on how $A'$ captures runs of $A$. This is similar to the
  proof of Lemma~9 in~\cite{Niewerth18}.

  In the following, we let $T'$ be an input tree for $A'$ and
  $T=\omega(T')$ and let $\psi$ be a function mapping nodes of $T'$ to
  the forest or context represented by the forest algebra pre-term
  rooted at $n$.

  It can be verified by induction that the state $q$ assigned to a
  node $n$ in a run of $A'$ satisfies the following conditions that
  describe the partition of $Q'$ into states $Q^2$ for nodes of type
  forest and $(Q^2)^2$ for nodes of type context.
  \begin{itemize}
  \item If $\psi(n)$ is a forest then $q=(q_1,q_2) \in Q^2$ such that
    there exists a run $\rho$ on $\psi(n)$ such that we have
    $\delta(q_1,\rho(n_1)\cdots\rho(n_m))=q_2$, where $n_1,\dots,n_m$
    are the roots of $\psi(n)$.
  \item If $\psi(n)$ is a context then $q=((q_1,q_2),(q_3,q_4))$ such
    that---after replacing the hole in $\psi(n)$ with a forest that
    allows the transition from $q_1$ to $q_2$---there exists a run
    $\rho$ such that $\delta(q_3,\rho(n_1)\cdots\rho(n_m))=q_4$, where
    $n_1,\dots,n_m$ are the roots of $\psi(n)$.
  \end{itemize}

  The induction base case is at the leaves of $T'$. If $n$ is labeled
  $a_t$ with $a \in \Lambda$, $A'$ nondeterministically guesses a pair
  of states such that $\delta(q_1,p)=q_2$, where $p$ is a state that
  $A$ could assign to a leaf with label $a$, i.e., it has to be an
  initial state that is possible for label $a$ and the given variable
  assignment at the node.

  If $n$ is labeled $a_\square$, then $A'$ guesses a pair of states
  $((q_1,q_2),(q_3.q_4))$ such that the forest that will be inserted
  into the hole allows a transition from $q_3$ to $q_4$. Furthermore
  $A$ has to allow a transition from $q_1$ to $q_2$ when reading the
  state assigned to $\varphi_{T'}(n)$.

    \begin{figure}
    \begin{tikzpicture}
      \tikzstyle{sl}=[inner sep=1pt]

      \draw[thick]  (0,0) coordinate (x1l) -- ++(5mm,0mm) coordinate (x1r) -- ++(2.5mm,-5mm) -- ++(-10mm,0mm) -- cycle;
      \draw[thick]  (2,0) coordinate (x2l) -- ++(5mm,0mm) coordinate (x2r) -- ++(2.5mm,-5mm) -- ++(-10mm,0mm) -- cycle;
      \draw[thick]  (3.5,0) coordinate (x3l) -- ++(10mm,0mm) coordinate (x3r) -- ++(2.5mm,-5mm) -- ++(-15mm,0mm) -- cycle;
      
      \node at (1.25,-0.25) {$\oplus_{HH}$};
      \node at (3,-0.25) {$=$};
      
      { \scriptsize
        \node[sl,anchor=south east] at (x1l) {$q_1$};
        \node[sl,anchor=south west] at (x1r) {$q_2$};

        \node[sl,anchor=south east] at (x2l) {$q_2$};
        \node[sl,anchor=south west] at (x2r) {$q_3$};

        \node[sl,anchor=south east] at (x3l) {$q_1$};
        \node[sl,anchor=south west] at (x3r) {$q_3$};
      }

      \draw[thick]  (0,-1.25) coordinate (x1l) -- ++(5mm,0mm) coordinate (x1r) -- ++(2.5mm,-5mm) -- ++(-10mm,0mm) -- cycle;
      \draw[thick]  (2,-1.25) coordinate (x2l) -- ++(5mm,0mm) coordinate (x2r) -- ++(2.5mm,-5mm) -- ++(-3.5mm,0mm) -- ++(-.5mm,1mm) coordinate (x2br) --  ++(-2mm,0mm) coordinate (x2bl) -- ++(-.5mm,-1mm) -- ++(-3.5mm,0mm) -- cycle;
      \draw[thick]  (3.5,-1.25) coordinate (x3l) -- ++(10mm,0mm) coordinate (x3r) -- ++(2.5mm,-5mm) -- ++(-3.5mm,0mm) -- ++(-.5mm,1mm) coordinate (x3br) -- ++(-2mm,0mm) coordinate (x3bl) -- ++(-.5mm,-1mm) -- ++(-8.5mm,0mm) -- cycle;
      
      \node at (1.25,-1.5) {$\oplus_{HV}$};
      \node at (3,-1.5) {$=$};
      
      { \scriptsize
        \node[sl,anchor=south east] at (x1l) {$q_1$};
        \node[sl,anchor=south west] at (x1r) {$q_2$};

        \node[sl,anchor=south east] at (x2l) {$q_2$};
        \node[sl,anchor=south west] at (x2r) {$q_3$};
        \node[sl,anchor=south east] at (x2bl) {$q_4\!$};
        \node[sl,anchor=south west] at (x2br) {$\!q_5$};

        \node[sl,anchor=south east] at (x3l) {$q_1$};
        \node[sl,anchor=south west] at (x3r) {$q_3$};
        \node[sl,anchor=south east] at (x3bl) {$q_4\!$};
        \node[sl,anchor=south west] at (x3br) {$\!q_5$};
      }

      \draw[thick]  (0,-2.5) coordinate (x1l) -- ++(5mm,0mm) coordinate (x1r) -- ++(2.5mm,-5mm) -- ++(-3.5mm,0mm) -- ++(-.5mm,1mm) coordinate (x1br) --  ++(-2mm,0mm) coordinate (x1bl) -- ++(-.5mm,-1mm) -- ++(-3.5mm,0mm) -- cycle;
      \draw[thick]  (2,-2.5) coordinate (x2l) -- ++(5mm,0mm) coordinate (x2r) -- ++(2.5mm,-5mm) -- ++(-10mm,0mm) -- cycle;
      \draw[thick]  (3.5,-2.5) coordinate (x3l) -- ++(10mm,0mm) coordinate (x3r) -- ++(2.5mm,-5mm) -- ++(-3.5mm,0mm) -- ++(-.5mm,1mm) coordinate (x3br) -- ++(-2mm,0mm) coordinate (x3bl) -- ++(-.5mm,-1mm) -- ++(-8.5mm,0mm) -- cycle;
      
      \node at (1.25,-2.75) {$\oplus_{VH}$};
      \node at (3,-2.75) {$=$};
      
      { \scriptsize
        \node[sl,anchor=south east] at (x1l) {$q_1$};
        \node[sl,anchor=south west] at (x1r) {$q_2$};
        \node[sl,anchor=south east] at (x1bl) {$q_3\!$};
        \node[sl,anchor=south west] at (x1br) {$\!q_4$};

        \node[sl,anchor=south east] at (x2l) {$q_2$};
        \node[sl,anchor=south west] at (x2r) {$q_5$};

        \node[sl,anchor=south east] at (x3l) {$q_1$};
        \node[sl,anchor=south west] at (x3r) {$q_5$};
        \node[sl,anchor=south east] at (x3bl) {$q_3\!$};
        \node[sl,anchor=south west] at (x3br) {$\!q_4$};
      }

      \draw[thick]  (7,-.25) coordinate (x1l) -- ++(5mm,0mm) coordinate (x1r) -- ++(2.5mm,-5mm) -- ++(-3.5mm,0mm) -- ++(-.5mm,1mm) coordinate (x1br) --  ++(-2mm,0mm) coordinate (x1bl) -- ++(-.5mm,-1mm) -- ++(-3.5mm,0mm) -- cycle;
      \draw[thick]  (9,-.25) coordinate (x2l) -- ++(5mm,0mm) coordinate (x2r) -- ++(2.5mm,-5mm) -- ++(-3.5mm,0mm) -- ++(-.5mm,1mm) coordinate (x2br) --  ++(-2mm,0mm) coordinate (x2bl) -- ++(-.5mm,-1mm) -- ++(-3.5mm,0mm) -- cycle;
      \draw[thick]  (10.5,-.25) coordinate (x3l) -- ++(5mm,0mm) coordinate (x3r) -- ++(2.5mm,-5mm) -- ++(-2.5mm,0mm) -- ++(2.5mm,-5mm) -- ++(-3.5mm,0mm) -- ++(-.5mm,1mm) coordinate (x3br) -- ++(-2mm,0mm) coordinate (x3bl) -- ++(-.5mm,-1mm) -- ++(-3.5mm,0mm) -- ++(2.5mm,5mm) -- ++(-2.5mm,0mm) -- cycle;
      
      \node at (8.25,-.5) {$\oplus_{VV}$};
      \node at (10,-.5) {$=$};
      
      { \scriptsize
        \node[sl,anchor=south east] at (x1l) {$q_1$};
        \node[sl,anchor=south west] at (x1r) {$q_2$};
        \node[sl,anchor=south east] at (x1bl) {$q_3\!$};
        \node[sl,anchor=south west] at (x1br) {$\!q_4$};

        \node[sl,anchor=south east] at (x2l) {$q_3$};
        \node[sl,anchor=south west] at (x2r) {$q_4$};
        \node[sl,anchor=south east] at (x2bl) {$q_5\!$};
        \node[sl,anchor=south west] at (x2br) {$\!q_6$};

        \node[sl,anchor=south east] at (x3l) {$q_1$};
        \node[sl,anchor=south west] at (x3r) {$q_2$};
        \node[sl,anchor=south east] at (x3bl) {$q_5\!$};
        \node[sl,anchor=south west] at (x3br) {$\!q_6$};
      }

      \draw[thick]  (7,-2) coordinate (x1l) -- ++(5mm,0mm) coordinate (x1r) -- ++(2.5mm,-5mm) -- ++(-3.5mm,0mm) -- ++(-.5mm,1mm) coordinate (x1br) --  ++(-2mm,0mm) coordinate (x1bl) -- ++(-.5mm,-1mm) -- ++(-3.5mm,0mm) -- cycle;
      \draw[thick]  (9,-2) coordinate (x2l) -- ++(5mm,0mm) coordinate (x2r) -- ++(2.5mm,-5mm) -- ++(-10mm,0mm) -- cycle;
      \draw[thick]  (10.5,-2) coordinate (x3l) -- ++(5mm,0mm) coordinate (x3r) -- ++(2.5mm,-5mm) -- ++(-2.5mm,0mm) -- ++(2.5mm,-5mm) -- ++(-10mm,0mm) -- ++(2.5mm,5mm) -- ++(-2.5mm,0mm) -- cycle;
      
      \node at (8.25,-2.25) {$\oplus_{VH}$};
      \node at (10,-2.25) {$=$};
      
      { \scriptsize
        \node[sl,anchor=south east] at (x1l) {$q_1$};
        \node[sl,anchor=south west] at (x1r) {$q_2$};
        \node[sl,anchor=south east] at (x1bl) {$q_3\!$};
        \node[sl,anchor=south west] at (x1br) {$\!q_4$};

        \node[sl,anchor=south east] at (x2l) {$q_3$};
        \node[sl,anchor=south west] at (x2r) {$q_4$};

        \node[sl,anchor=south east] at (x3l) {$q_1$};
        \node[sl,anchor=south west] at (x3r) {$q_2$};
      }
  \end{tikzpicture}
  \caption{Visualisation of the monoid operations. Forests are depicted
    as trapezoids and contexts as trapezoids with a cutout. States
    from $A$ that are memorized in states of $A'$ are indicated at the
    corresponding corners.}\label{fig:monoid}
\end{figure}

  For the induction step, one has to look at the transitions allowed
  by $A'$ on inner nodes, i.e., on nodes with a label from
  $\{\oplus_{HH},\oplus_{HV},\oplus_{VH},\odot_{VV},\odot_{VH}\}$.
  There $A'$ verifies deterministically whether the nondeterministic
  choices done at the leaves are consistent. In the following $n$ will
  always be an inner node with left child $n_L$ and right child $n_R$.
  We depicted a sketch of all five monoid operations in
  Figure~\ref{fig:monoid}. There we also indicate the state names used
  in the definition of $\delta'$.  

  If $n$ has label $\oplus_{HH}$, $A'$ has to check that the state
  $q_2$ after reading the (roots of the) forest $\psi(n_L)$ has to be
  the same as the state before reading the (roots of the) forest
  $\psi(n_R)$.  Furthermore, $A'$ propagates the states $(q_1,q_2)$
  upwards, where $q_1$ is the state before reading $\psi(n_L)$ and
  $q_3$ is the state after reading $\psi(n_R)$.

  If $n$ has label $\odot_{VH}$, $A'$ checks that the forest
  $\psi(n_R)$ actually allows the transition that was guessed for the
  hole at some leaf with some label $a_\square$.

  In the two described cases, the resulting type is forest. In the
  remaining three cases, where the resulting type is context, $A'$
  additionally has to propagate the guess for the hole upwards, such
  that it can be verified later, when the automaton reaches a node
  labeled $\odot_{VH}$.

  The correctness of this verification at inner nodes can be verified
  using the definition of $\delta'$. At last, the automaton has to
  check that all the guesses are not only consistent, but belong to an
  accepting run of $A$. This is reflected by our choice of $F'$ that
  verifies that $A$ can do a transition from $q_0$ to $q_f$ when
  reading the state assigned to the root of $\omega(T)$. According to our
  assumption, this implies that the root is assigned a state from $F$
  and thus $A$ accepts $\omega(T)$.

  We note that the translation preserves runs in the sense that for
  every unique run of $A$ on $\omega(T)$, there is a unique run of
  $A'$ on $T'$. Especially, if $A$ is deterministic or unambiguous,
  then $A'$ is unambiguous. Furthermore, for nondeterministic automata
  $A$, the number of runs is preserved.
\end{proof}

\section{Main Results}
\label{sec:main}
We can now present our main results by combining our results about tree
balancing and hollowing updates (Lemma~\ref{lem:treebal} and
Lemma~\ref{lem:hollowing}) with our circuit construction and  enumeration results
(Lemma~\ref{lem:buildcircuit} and Theorem~\ref{thm:masterenum}).
The first phrasing of our main result deals with 
the enumeration of the satisfying assignments to TVAs under
updates, while ensuring the right complexity bounds, and in particular
guaranteeing tractable combined complexity:

\begin{theoremrep}
  \label{thm:tva}
  Let $\omega$ be an exponent for the Boolean matrix multiplication problem.
  Given an unranked $\Lambda,\X$-TVA $A$ with state space~$Q$ and an unranked $\Lambda$-tree $T$, we can
  enumerate the satisfying assignments of $A$ on $T$ with preprocessing time
  $O(\card{T} \times \card{Q}^{4(\omega+1)})$, update time
  $O(\log(\card{T}) \times \card{Q}^{4(\omega+1)})$, and delay
  $O(\card{Q}^{4\omega} \times \card{S})$, where $S$ is the produced assignment.
\end{theoremrep}

\begin{proof}
  Let $\omega$ be the encoding scheme of Lemma~\ref{lem:treebal}.
  In the preprocessing phase, we apply Lemma~\ref{lem:treebal} to compute a 
  $\Lambda'$-tree $T'$ such that $\omega(T') = T$, and the bijection
  $\phi_{T'}$, in time $O(\card{T})$. We also know that the height of~$T'$ is
  logarithmic. We also translate the automaton~$A$ in time $O(\card{Q}^6)$ to a
  binary $\Lambda',\X$-TVA $A'$ with state space $Q'$ such that $\card{Q'} =
  O(\card{Q}^4)$ and such that $A'$ has $O(\card{Q}^6)$-transitions and has only
  one final state.
  We then use Lemma~\ref{lem:homogenize} to process~$A'$ and ensure that it is
  homogenized; the construction clearly ensures that $A'$ then has exactly two
  final states: one final $0$-state $q_{f,0}$ and one final $1$-state $q_{f,1}$.
  Further, we know that $\omega$ is $A,A'$-faithful, meaning in
  particular that for any $\X$-valuation $\nu$ of the unranked tree~$T$,
  letting $\nu' \colonequals \nu \circ \phi_{T'})$, we have that $A$ accepts $T$
  under $\nu$ iff $A'$ accepts $T'$ under~$\nu'$.

  We now apply Lemma~\ref{lem:buildcircuit}
  to construct in time $O(\card{T'} \times \card{A'})$, i.e., $O(\card{T} \times
  \card{Q}^6)$, a structured complete DNNF $C$ which is an assignment circuit
  of~$A$ and~$T$, such that the width~$w$ of~$C$ is~$\card{Q'}$, i.e.,
  $O(\card{Q}^4)$, and its depth is $O(\height(T'))$,
  i.e., $O(\log(\height(T)))$. We will then perform enumeration using
  Theorem~\ref{thm:masterenum}, which includes a preprocessing phase in time
  $O(\card{C} \times w^\omega)$, i.e., $O(\card{T} \times w^{4\omega+6})$, to
  compute the index $I(C)$. Hence, the preprocessing time is as we claimed.
  
  After
  the preprocessing, letting $g_{f,1} \colonequals \phi(n, q_{f,1})$ where $n$ is
  the root of~$T'$, we can
  enumerate the assignments of~$\S(\{q_{f,1}\})$ with delay $O(\card{S} \times
  w^{omega})$, i.e., $O(\card{S} \times \card{Q}^{4\omega})$, where $S$ is the
  produced assignment, which is the time bound that we claimed. To see why this
  is correct, observe that by definition of an assignment circuit, this enumerates
  precisely the assignments corresponding to valuations~$\nu'$ such that $A'$
  has a run mapping the root of~$T'$ to~$q_{f,1}$, and as~$q_{f,1}$ is the only
  final $1$-state of~$A'$ it is exactly the set of assignments corresponding to
  non-empty valuations~$\nu'$ such that $A'$ accepts~$T'$ under~$\nu'$. Last, we
  handle the case of the empty valuation by considering $g_{f,0} \colonequals
  \phi(n,q_{f,0})$ for~$n$ the root of~$T'$, which must clearly be either a
  $\top$-gate or~$\bot$-gate, and we produce the empty assignment iff $g_{f,1}$
  is a $\top$-gate: by the same reasoning this produces the empty assignment iff
  $A'$ accepts $T'$ under the empty valuation. Thus, we correctly enumerate the
  set of assignments corresponding to valuations $\nu'$ such that $A'$ accepts
  $T'$ under~$\nu$, i.e., the satisfying assignments of~$A'$ on~$T'$. As $\omega$ is
  $A,A'$-faithful, this enumerates exactly the satisfying assignments of~$A$ on~$T$. Hence,
  the enumeration is correct.

  Now, whenever an update is performed on~$T$, we know by
  Lemma~\ref{lem:treebal} that we can compute in time $O(\log{T})$ a tree
  hollowing $H = (T'', \eta)$ of~$T'$ such that $\omega(\result{H}) =
  \tau(\omega(T))$. Now, we know that we can reflect this change on~$C$ and on
  the index $I(C)$ to obtain a new circuit $\result{C}$ and index
  $I(\result{C})$ such that $\result{C}$ is a structured complete DNNF of
  width $\card{Q'}$, i.e., the same as~$C$, which is an assignment circuit
  of~$A'$ on the new $\Lambda'$-tree~$\result{H}$, and $I(\result{C})$ is the
  index structure for~$\result{C}$. This update takes time
  $O(\card{T'} \times \card{Q'}^{\omega+1})$, i.e., $O(\card{T}
  \times \card{Q}^{4\omega+1})$. Thus the update time is as we claimed, which
  concludes the proof.
\end{proof}

Thanks to Theorem~\ref{thm:tva}, we can now obtain as a corollary our main
result on enumeration for monadic second-order logic (MSO). Recall that MSO
is a logic 
that extends first-order logic
on $\Lambda$-trees defined on a signature
featuring the edge relation of the tree, the order relation among siblings, and unary predicates for the node
labels. MSO extends first-order logic by adding the ability to quantify over sets. 
Given a $\Lambda$-MSO formula $\query(X_1, \ldots, X_n)$ 
and a $\Lambda$-tree $T$
where all free variables are
second-order, letting $\X = \{X_1, \ldots, X_n\}$,
the \emph{satisfying valuations} of~$\query$ on~$T$ are the $\X$-valuations $\nu$ of~$T$
such that $T$
satisfies the Boolean $\Lambda$-MSO formula $\query$ where the
predicates $X_1, \ldots, X_n$ are interpreted according to~$\nu$. Similarly to
what we did for TVAs, we represent any $\X$-valuation $\nu$ as an
\emph{assignment} $\alpha(\nu) \colonequals \{\langle Z: n\rangle \mid Z \in
\nu(n)\}$, and the 
\emph{satisfying assignments} of~$\query$ on~$T$ are the images by~$\alpha$ of the
satisfying valuations.

Our main result for MSO enumeration is then the following; it
improves on the
earlier results for MSO enumeration on trees under
updates~\cite{losemann2014mso,amarilli2018enumeration,Niewerth18}. Note that
unlike Theorem~\ref{thm:tva}, this result only considers \emph{data complexity},
i.e., complexity in the input tree (and in the produced assignments), assuming
that the MSO query is fixed.
\begin{corollary}
  \label{cor:mso1}
  For any fixed tree alphabet $\Lambda$, for any fixed \mbox{$\Lambda$-MSO} query $\query$ with
  free second-order variables,
  given an unranked \mbox{$\Lambda$-tree}~$T$, 
  after preprocessing~$T$ in linear time,
  we can enumerate the satisfying assignments of $\query$ with delay linear in each produced
  assignment, and we can handle updates to~$T$ in logarithmic time in~$T$.
\end{corollary}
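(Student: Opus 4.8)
The plan is to derive the corollary directly from Theorem~\ref{thm:tva} by compiling the fixed MSO query into a tree automaton of constant size. First I would invoke the classical correspondence between MSO over trees and tree automata: the $\Lambda$-MSO formula $\query(X_1, \dots, X_n)$ with free second-order variables can be translated into a $\Lambda,\X$-TVA $A$ on unranked trees (with $\X = \{X_1, \dots, X_n\}$) whose satisfying assignments on any $\Lambda$-tree $T$ are exactly the satisfying assignments of $\query$ on~$T$. This is the standard Thatcher--Wright/Doner construction, where the interpretation of the free variables is read as an annotation on \emph{all} nodes of~$T$, which matches our definition of $\X$-valuations of unranked trees; see~\cite{thatcher1968generalized} and~\cite[Appendix E.1]{amarilli2017circuit_arxiv}. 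The translation may incur a nonelementary blowup in~$\card{\query}$, but since $\query$ and~$\Lambda$ are fixed, the resulting automaton~$A$ has a fixed state space~$Q$, i.e., $\card{Q} = O(1)$.

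Having fixed~$A$, I would then apply Theorem~\ref{thm:tva} to~$A$ and the input tree~$T$. Since $\card{Q}$ is a constant, the factors $\card{Q}^{4(\omega+1)}$ and $\card{Q}^{4\omega}$ in the statement of Theorem~\ref{thm:tva} all become constants, so the theorem yields preprocessing time $O(\card{T})$, update time $O(\log \card{T})$, and delay $O(\card{S})$ where $S$ is the produced assignment. The set enumerated is exactly the set of satisfying assignments of~$A$ on~$T$, which by the previous paragraph coincides with the satisfying assignments of~$\query$ on~$T$; and the updates supported by Theorem~\ref{thm:tva} are precisely the leaf insertions, leaf deletions, and relabelings of Definition~\ref{def:edits}, which are the edit operations claimed in the corollary. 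This establishes all three stated bounds, noting that the corollary only concerns data complexity (complexity in~$T$ and in the produced assignments), so the nonelementary dependence on the fixed query is harmless.

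The one point requiring care is the MSO-to-automaton translation over \emph{unranked} trees with second-order free variables interpreted as annotations at \emph{every} node, rather than only at leaves. This is exactly where the unranked TVA model of Section~\ref{sec:updates} (stepwise tree automata extended with variables) is needed rather than the binary TVA model of Section~\ref{sec:prelim}, and one should state explicitly that this model captures unranked MSO with such annotations; this is classical but worth spelling out. Everything else is routine bookkeeping: the constant-size automaton makes all the combined-complexity factors of Theorem~\ref{thm:tva} collapse to constants, leaving precisely the data-complexity bounds asserted in the corollary.
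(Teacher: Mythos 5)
Your proposal is correct and follows exactly the paper's route: compile the fixed MSO query via Thatcher--Wright into an unranked TVA of constant size, then invoke Theorem~\ref{thm:tva}, whose combined-complexity factors collapse to constants. Your added remark about the unranked-tree/all-node-annotation subtlety is a fair elaboration of what the paper leaves implicit, but the argument is the same.
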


\begin{proof}
  Let $\X$ be the variables of~$\query$.
  We see $\query$ as a Boolean query on the tree signature $\Lambda \cup 2^{\X}$,
  and rewrite $\query$ to a tree automaton on $\Lambda \cup 2^{\X}$ using the well-known
  result by Thatcher and Wright~\cite{thatcher1968generalized}: see also \cite[Appendix
E.1]{amarilli2017circuit_arxiv}. The
  automaton can equivalently be seen as an unranked $\Lambda,\X$-TVA, so we can
  enumerate its assignments on~$T$ using Theorem~\ref{thm:tva}.
\end{proof}

In the common case of an MSO formula $\query(x_1, \ldots, x_n)$
with free \emph{first-order} variables,
Corollary~\ref{cor:mso1} implies that we can enumerate the satisfying assignments on a
tree~$T$,
with the delay being constant,
i.e., independent from~$T$.

\begin{corollaryrep}
  \label{cor:mso2}
  For any fixed tree alphabet $\Lambda$, for any fixed $\Lambda$-MSO query $\query$ with
  free first-order variables,
  given an unranked \mbox{$\Lambda$-tree}~$T$, 
  after preprocessing~$T$ in linear time,
  we can enumerate the satisfying assignments of $\query$ with constant delay, 
  and we can handle updates to~$T$ in logarithmic time in~$T$.
\end{corollaryrep}

\begin{proof}
  We do the standard rewriting of $\query(x_1, \ldots, x_n)$ to an MSO query
  $\query'(X_1,
  \ldots, X_n)$ with free second-order variables where we add for each~$i$ a
  conjunct asserting that $X_i$ is a singleton (e.g., $\exists x ~ X_i(x) \land
  (\forall x y ~ X_i(x)
  \land X_i(y) \rightarrow x = y)$) and we add an existential quantification
  $\exists x_i ~ X_i(x_i)$, then we reuse the body of~$\query$. This rewriting is
  independent of~$T$, so runs in constant time. Now, we use
  Corollary~\ref{cor:mso1} to enumerate the satisfying assignments of~$\query'$
  on~$T$. The definition of~$\query'$ ensures that each satisfying assignment has
  cardinality exactly~$n$, so the enumeration proceeds in constant time. This
  produces the desired result because there is a clear bijection from the
  satisfying assignments of~$\query'$ on~$T$ to the
  answer tuples of~$\query$ on~$T$, which we can apply to each satisfying assignment
  in constant time.
\end{proof}

Note that enumerating satisfying assignments also allows us
to enumerate the \emph{answer tuples} $(a_1, \ldots, a_n) \in T^n$
such that $T$ satisfies $\query(a_1, \ldots, a_n)$, as we can simply translate each
satisfying assignment in linear time to an answer tuple.

\paragraph*{Results on Words}
We conclude the section by presenting consequences of our results for
\emph{words}.
As words are a special case of trees, our results on trees imply results for the
enumeration on words of the satisfying assignments of word automata, with the
ability to efficiently handle updates to the underlying word. This can be used
in the context of \emph{document spanners}~\cite{FaginKRV15} for information
extraction, allowing us to efficiently enumerate the matches of a document
spanner represented as a sequential extended VA~\cite{florenzano2018constant},
and to update the enumeration structure when the word changes.

Given a set $\Lambda$ of labels, we call a \emph{$\Lambda$-word} a finite
sequence~$w$ of letters from~$\Lambda$.
Given a variable set~$\X$, an \emph{$\X$-valuation}
of~$w$ is a function $\nu: \{1, \ldots, \card{w}\} \to 2^{\X}$, and the
corresponding \emph{$\X$-assignment} $\alpha(\nu)$ is the set $\{\langle Z:
n\rangle \mid Z \in \nu(n)\}$.
On words we allow the usual local edits: (1) inserting a character, (2)
deleting a character, and (3) replacing a character.

We next present the formalism that we use to write queries on words,
which is analogous to \emph{extended sequential variable automata}~\cite{florenzano2018constant}.
A \emph{word variable automaton} WVA on $\Lambda$-words for variable
set~$\X$ (or $\Lambda,\X$-WVA) is a tuple $A = (Q, \delta, I, F)$,
where $Q$ is a set of \emph{states}, $I$ is the set of initial states,
$\delta \subseteq Q \times \Lambda \times 2^{\X} \times Q$ is
the \emph{transition relation}, and $F \subseteq Q$ is the set of
\emph{final states}.
Note that, like an unranked TVA, a WVA can read variables at any
position.

When working with WVAs rather than TVAs, the
translation to balanced binary trees can be done with a better 
complexity.
Specifically, the following is obvious from the details of the construction of $A'$
in Lemma~\ref{lem:treebal}.
\begin{corollaryrep}
  Lemma~\ref{lem:treebal} holds for words instead of trees and a WVA
  $A$ instead of an unranked TVA as input, with the difference that
  the binary tree automaton $A'$ can be constructed in time
  $O(\card{Q}^3)$, has $O(\card{Q}^2)$ states and $O(\card{Q}^3)$
  transitions, where $Q$ is the state space of~$A$.
\end{corollaryrep}
\begin{proof}
  We interpret strings as forests, where each tree has exactly one
  node.  We can use the same balancing schema that happens to work
  exactly like AVL trees, if the only operation is concatenation
  ($\oplus_{HH}$).

  We also reuse the automaton construction with the difference that we
  can drop everything that is related to contexts.  Especially, we can
  use $Q'=Q^2$, as there are are no nodes of type
  context. Furthermore, we only need $\delta'_{\oplus_{HH}}$ from the
  definition of $\delta'$, as there are no other operators.
  \mbox{W.l.o.g.} we assume that $A$ has only one initial state $q_0$ and one final state $q_F$.

  This can be easily achieved by adding a new initial state $q_0$ ans
  final state $q_F$, such that $q_0$ gets all outgoing transitions
  from all initial states and $q_F$ gets all incoming transitions from
  all existing final states. Afterwards we use $q_0$ and $q_F$ as
  single initial and final state, respectively.

  We use $(q_0,q_F)$  as the single final state of $A'$.
\end{proof}

We can conclude the following theorem.
\begin{theorem}
  \label{thm:wva}
  Given a $\Lambda,\X$-WVA $A$ with state space~$Q$
  and $\Lambda$-word $w$, we can
  enumerate the satisfying assignments 
  with preprocessing time
  $O(\card{w} \times \card{Q}^{2(\omega+1)})$, update time
  $O(\log(\card{w}) \times \card{Q}^{2(\omega+1)})$, and delay
  $O(\card{Q}^{2\omega} \times \card{v})$, for $v$ the current valuation.
\end{theorem}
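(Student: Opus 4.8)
The plan is to mirror the proof of Theorem~\ref{thm:tva} verbatim, merely substituting the word-specific version of the balancing construction (the preceding corollary) for the general tree version, and then tracking how the improved parameters propagate. So first I would view the $\Lambda$-word $w$ as an unranked forest of single-node trees and apply the word version of Lemma~\ref{lem:treebal}: in linear time this computes a balanced binary $\Lambda'$-tree $T'$ of logarithmic height with $\omega(T') = w$, the bijection $\phi_{T'}$ between the leaves of $T'$ and the positions of $w$, and a binary $\Lambda',\X$-TVA $A'$ with $O(\card{Q}^2)$ states and $O(\card{Q}^3)$ transitions, computable in time $O(\card{Q}^3)$, such that $\omega$ is $A,A'$-faithful. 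Note that an $\X$-valuation of $w$ annotates every position, hence every leaf of $T'$, and the binary TVA $A'$ reads annotations precisely at leaves, so faithfulness matches the word semantics and the circuit variables $\langle Z : n\rangle$ over leaves $n$ of $T'$ correspond exactly to the singletons of $\X$-assignments of $w$. I would then homogenize $A'$ with Lemma~\ref{lem:homogenize}, which stays within $O(\card{Q}^2)$ states and leaves exactly one final $0$-state $q_{f,0}$ and one final $1$-state $q_{f,1}$.

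Next I would build the assignment circuit. Applying Lemma~\ref{lem:buildcircuit} to $T'$ and the homogenized $A'$ yields, in time $O(\card{T'}\times\card{A'}) = O(\card{w}\times\card{Q}^3)$, a structured complete DNNF $C$ that is an assignment circuit of $A'$ on $T'$, of width $w_C = \card{Q'} = O(\card{Q}^2)$ and depth $O(\log\card{w})$, together with its v-tree $\calT$ (of size $O(\card{w})$) and structuring function. I would then run the preprocessing of Theorem~\ref{thm:masterenum} on $C$ with width parameter $w_C$ to compute the index $I(C)$ in time $O(\card{\calT}\times w_C^{\omega+1}) = O(\card{w}\times\card{Q}^{2(\omega+1)})$, which dominates the circuit-construction time (as $\omega \ge 2$) and gives the claimed preprocessing bound. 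For enumeration I would invoke Theorem~\ref{thm:masterenum} on the boxed set $\{\gamma(n,q_{f,1})\}$, where $n$ is the root of $T'$; together with a constant-time inspection of the $\top$/$\bot$-gate $\gamma(n,q_{f,0})$ to decide whether to additionally emit the empty assignment, this enumerates exactly the satisfying assignments of $A'$ on $T'$ with delay $O(\card{S}\times w_C^\omega) = O(\card{Q}^{2\omega}\times\card{S})$. By $A,A'$-faithfulness and the bijection $\phi_{T'}$, the satisfying assignments of $A'$ on $T'$ are in size-preserving bijection with those of $A$ on $w$, so the delay is $O(\card{Q}^{2\omega}\times\card{v})$ for $v$ the produced valuation.

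Finally, for updates, a local edit on $w$ (character insertion, deletion, or replacement) is translated by the word version of Lemma~\ref{lem:treebal} in time $O(\log\card{w})$ into a tree hollowing $H=(T'',\eta)$ of $T'$ whose trunk $T''$ has size $O(\log\card{w})$ and with $\omega(\result{H})$ the edited word. By Lemma~\ref{lem:hollowing}, following $H$ we recompute a circuit $\result{C}$ and index $I(\result{C})$ --- again a structured complete DNNF of width $\card{Q'}$ that is an assignment circuit of $A'$ on $\result{H}$ --- in time $O(\card{T''}\times\card{Q'}^{\omega+1}) = O(\log\card{w}\times\card{Q}^{2(\omega+1)})$, after which we restart the enumeration on the updated word. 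This yields the stated update bound.

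I do not expect a genuine obstacle: the whole argument is a parameter-tracking specialization of the proof of Theorem~\ref{thm:tva}. The only points that require a line of justification are (i) that the word-specific balancing really produces an $A,A'$-faithful, hollowing-friendly encoding with $O(\card{Q}^2)$-state automaton --- which is immediate from the corollary's statement, since dropping context operations leaves $Q' = Q^2$ and only $\delta'_{\oplus_{HH}}$, and the single concatenation operation is balanced exactly like an AVL tree --- and (ii) the separate treatment of the empty valuation via the unique final $0$-state, exactly as in Theorem~\ref{thm:tva}. Everything else is routine bookkeeping.
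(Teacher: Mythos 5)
Your proposal is correct and is essentially the paper's own argument: the paper derives Theorem~\ref{thm:wva} by instantiating the proof of Theorem~\ref{thm:tva} with the word-specific version of Lemma~\ref{lem:treebal}, which gives the $O(\card{Q}^2)$-state binary automaton and hence the improved exponents you track. Your parameter bookkeeping (width $O(\card{Q}^2)$, hence preprocessing/update in $\card{Q}^{2(\omega+1)}$ and delay in $\card{Q}^{2\omega}$) and the handling of the empty valuation via the final $0$-state match the intended derivation.
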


In terms of document spanners,
this theorem is the analogue of the result of~\cite{amarilli2019constant} which
showed that we can efficiently enumerate the results of an extended sequential
nondeterministic VA on an input word, and it extends this result to handle
updates to the word in logarithmic time. However, in exchange for the support
for updates, the complexity in the automaton is less favorable, with a higher
polynomial degree (due to the need to balance the word); and the memory
usage of Theorem~\ref{thm:wva} is not constant like it was
in~\cite{amarilli2019constant}. We also note that our results in this paper do
\emph{not} recapture the results for \emph{non-extended VAs}
from~\cite{amarilli2019constant}: we believe that our techniques here would
extend to such automata, but that it would require some changes to the tree
automaton model.

\section{Lower Bound}
\label{sec:lower}
In this section, we will show that the logarithmic update time of
Section~\ref{sec:main} is optimal up to a doubly logarithmic factor. In fact, we will show that either the update time or the enumeration delay of any MSO enumeration algorithm on trees must be $\Omega\left(\frac{\log(n)}{\log \log(n)}\right)$. In particular, in contrast to the settings in~\cite{berkholz2017answering2,berkholz2017answering}, there is no algorithm with constant update time for MSO queries on trees even if we allow the enumeration delay to be slightly superconstant.

To show our lower bound, we rely heavily on a result from~\cite{AlstrupHR98}, so let us introduce some notation from there: consider a tree $T$ on~$n$ nodes in which some of the nodes are \emph{marked} while the others are \emph{unmarked}. For every node $v$ of $T$, let $\pi(v)$ denote the path from~$v$ to the root of $T$. The \emph{existential marked ancestor query} is, given a node~$v$, to decide if $v$ has a marked ancestor, i.e., there is a marked node on $\pi(v)$. 
An \emph{update} in the marked ancestor problem is an operation that marks or unmarks a node of $T$. An algorithm for the marked ancestor problem is an algorithm that maintains a data structure that allows updates and marked ancestor queries. A main result of~\cite{AlstrupHR98} is then the following:

\begin{theorem}\label{thm:alstruphr}
 Consider an algorithm to solve the marked ancestor problem and let $t_u$ be a bound on the time to handle an update and
  $t_q$ be a bound on the time to answer a query.
  Then we have $t_q = \Omega\left(\frac{\log(n)}{\log(t_u\log(n))}\right)$.
\end{theorem}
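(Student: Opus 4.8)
The plan is to work in the cell-probe model with cells of $b=\Theta(\log n)$ bits, which suffices because any RAM algorithm spending time $t_u$ per update and $t_q$ per query induces a cell-probe scheme making $O(t_u)$ probes per update and $O(t_q)$ probes per query, and query time is at least the number of query probes. I would exhibit a single hard tree $T$ on $n$ nodes --- essentially a root-to-leaf path, padded so that each epoch below controls its own block of independent nodes --- together with a distribution over sequences of $m=\Theta(n)$ updates followed by one existential marked-ancestor query, and then lower-bound the expected number of probes made by the query.

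The engine is the \emph{chronogram (epoch) method} of Fredman and Saks, in the form used for dynamic tree problems. Split the $m$ updates into $k=\Theta\!\big(\tfrac{\log n}{\log(t_u b)}\big)=\Theta\!\big(\tfrac{\log n}{\log(t_u\log n)}\big)$ consecutive epochs $E_1,\dots,E_k$ of geometrically decreasing length $|E_i|\approx (c\,t_u b)^{\,k-i}$ for a large constant $c$; the choice of $k$ is exactly what makes $\sum_i|E_i|=\Theta(n)$ fit inside a tree on $n$ nodes. The updates of epoch $E_i$ flip an independent block of $\Theta(|E_i|)$ marks, so epoch $E_i$ injects $\Theta(|E_i|)$ bits of fresh randomness; the query node is drawn so that its one-bit answer, together with the adversarial freedom of \emph{which} query is asked, reveals $\Omega(1)$ bits about epoch $E_i$'s choices for each $i$.

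The core of the argument is the per-epoch lemma: for every $i$, conditioning on (and averaging over) the updates of all earlier epochs and all later epochs, the query must with constant probability probe a cell whose last write happened during epoch $E_i$. This is an encoding bound. The cells written after epoch $E_i$ hold at most $\big(\sum_{j>i}|E_j|\big)\cdot t_u\cdot b=O(|E_i|/c)$ bits in total; the cells written before epoch $E_i$ are determined by the earlier epochs, which we have also fixed, so a decoder may hard-wire them; hence, if the query never read a cell last written in epoch $E_i$, the query's answers (over all possible query choices) would be reconstructible from $O(|E_i|/c)$ bits, contradicting that they carry $\Theta(|E_i|)$ bits once $c$ is large enough. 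Because distinct epochs label their cells with distinct last-write times, the probes charged to different epochs are distinct, so $t_q\ge\sum_{i=1}^{k}\Pr[\text{query touches }E_i]=\Omega(k)$, which rearranges to $t_q=\Omega\!\big(\log n/\log(t_u\log n)\big)$.

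I expect the main obstacle to be this per-epoch lemma: one must design the hard distribution so that epoch $E_i$ really injects $\Theta(|E_i|)$ bits of entropy that survive into the distribution of query answers --- which is precisely why the query must be chosen adversarially from a large pool rather than being a single fixed node --- and then run the information-theoretic accounting carefully, making sure the decoder is correctly granted everything fixed outside epoch $E_i$, that the $t_u b$-per-update write budget is charged exactly once, and that identifying ``the'' epoch-$E_i$ cell read by the query costs only a constant factor in the union bound. The remaining ingredients --- the RAM-to-cell-probe reduction, the epoch-length bookkeeping, and the final arithmetic rearrangement --- are routine.
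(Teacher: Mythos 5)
First, a framing point: the paper does not prove this statement at all --- Theorem~\ref{thm:alstruphr} is imported verbatim from Alstrup, Husfeldt and Rauhe~\cite{AlstrupHR98}, so the only meaningful comparison is with the original proof. Your overall machinery is indeed the one used there: the chronogram/epoch method, epochs of geometrically decreasing length with ratio $\Theta(t_u b)$, the choice $k=\Theta(\log n/\log(t_u\log n))$ forced by fitting $\sum_i|E_i|$ into $n$, the per-epoch encoding lemma bounding the information in cells written after epoch $E_i$ by $O(|E_i|/c)$ bits, and the final summation $t_q\ge\sum_i\Pr[\text{query probes a cell last written in }E_i]=\Omega(k)$. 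All of that is correct and matches the known argument.

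The genuine gap is your hard instance. A tree that is ``essentially a root-to-leaf path'' cannot support the per-epoch entropy claim: for the existential marked-ancestor query, the answer at a path node of depth $d$ is monotone in $d$, so the \emph{entire} answer vector over all $n$ possible queries is determined by the depth of the shallowest marked node and carries only $O(\log n)$ bits --- nowhere near the $\Theta(|E_i|)$ bits your encoding lemma must extract from the large early epochs. The obvious repair of hanging each epoch's block off the path as its own subtree fails for the dual reason: then each query's answer depends on only one epoch, so the $\Omega(1)$ probes charged to different epochs no longer accumulate within a single query and you only get $t_q=\Omega(1)$. The construction that actually works in~\cite{AlstrupHR98} is a complete tree of arity $\ell=\Theta(t_u\log n)$ and height $h=\Theta(\log n/\log\ell)$, with epoch $i$ rerandomizing the marks of an entire level (marked with probability about $1/h$ to avoid one level's marks masking another's in the OR). Every leaf then has exactly one epoch-$i$ ancestor for each $i$, which is what lets the per-epoch charges add up inside one query, while distinct level-$i$ nodes have disjoint leaf-descendant sets, which is what makes all $\Theta(|E_i|)$ bits of that epoch recoverable from the query family. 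Note that the arity $\ell=\Theta(t_u\log n)$ is not incidental padding: the tree's branching factor is precisely where the $\log(t_u\log n)$ in the denominator of the bound comes from, so getting the shape of the hard instance right is not a detail you can defer.
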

We remark that Theorem~\ref{thm:alstruphr} is unconditional
and holds for the standard model of unit cost RAMs with logarithmic word size; for different word sizes the result is slightly different. We also remark that Theorem~\ref{thm:alstruphr} even stays true if all runtimes are assumed to be amortized, so in a non-worst case setting. Finally, Theorem~\ref{thm:alstruphr} makes no assumption whatsoever about the runtime of a potential preprocessing phase. So even with generous runtime in the preprocessing, the result holds.

We now show that we get a lower bound for MSO query enumeration easily from Theorem~\ref{thm:alstruphr}.

\begin{theoremrep}\label{thm:lower}
 There is an MSO query $\query$ on trees such that 
 any  enumeration algorithm for $\query$ under relabelings with update time $\hat{t}_u$ and enumeration delay $\hat{t}_e$ has
 \[\max(\hat{t}_u, \hat{t}_e) \;\;\ge\;\; \Omega\left(\frac{\log(n)}{\log\log(n)}\right).\]
\end{theoremrep}

\begin{proofsketch}
 We reduce from the existential marked ancestor problem. To this end, we consider trees in which nodes can have three labels: marked, unmarked, or special. Fix the query $\query(x)$ that selects all special nodes that have a marked ancestor: we can easily write it in MSO.
 
 We give an algorithm for the marked ancestor problem: We start with the input tree $T$, i.e., a tree without any node marked special. To answer the marked ancestor query for a node $v$, we label $v$ as a special node, enumerate the answer to $\query$, and make $v$ non-special again. Finally we return 'yes' if and only if we enumerated any answer to~$\query$.
 
 To see that this algorithm is correct, observe that $v$ is the only special
  node in $T$ when evaluating $\query$. So either we enumerate $v$ or nothing
  depending on if $v$ has a marked ancestor, and the answer we give is correct.
  Now,
  Theorem~\ref{thm:alstruphr} applies to the marked
  ancestor queries, i.e., to $2 \hat{t}_u + \hat{t}_e$, from which we can
  mathematically derive our claimed bound.
\end{proofsketch}
\begin{proof}
  The beginning of the proof is as explained in the sketch: we now give
 the computation for the lower bound. The runtime of the marked ancestor queries
  as we implemented them is $t_{\query} = 2 \hat{t}_u + \hat{t}_e$. From Theorem~\ref{thm:alstruphr} we get
  \begin{align} \label{eq:lower} t_{\query} \;\;=\;\; 2 \hat{t}_u + \hat{t}_e \;\;\ge\;\; c\frac{\log(n)}{\log(t_u\log(n))} \;\;=\;\; c\frac{\log(n)}{\log(\hat{t}_u\log(n))}\end{align}
  for some constant $c$. 
  
  Now, assume first that $\hat{t}_u > \hat{t}_e$. Then we get from~\eqref{eq:lower} that
  \[3 \hat{t}_u \;\;\ge\;\; c \frac{\log(n)}{\log(\hat{t}_u\log(n))}.\]
  Now assume w.l.o.g.~that $\hat{t}_u \le \log(n)$ (otherwise we are done), then by substituting on the right-hand side we get
  \[3 \hat{t}_u \;\;\ge\;\; c \frac{\log(n)}{\log(\log(n)\log(n))} \;\;\ge\;\; \frac{\log(n)}{2\log\log(n)}\]
  which completes the proof.
  
  If $\hat{t}_e \ge \hat{t}_u$, then we get from \eqref{eq:lower} that 
  \[3 \hat{t}_e \;\;\ge\;\; c \frac{\log(n)}{\log(\hat{t}_e\log(n))}\]
  and reasoning as before completes the proof.
  \end{proof}
From Theorem~\ref{thm:lower} we get easily that the update time in Theorem~\ref{thm:tva} is optimal up to a doubly logarithmic factor even if we allow the enumeration delay to be close to logarithmic.
\begin{corollary}\label{cor:lower}
 There is an MSO query $\query$ on trees such that 
 any  enumeration algorithm for $\query$ under relabelings with enumeration delay
  $o\left(\frac{\log(n)}{\log \log(n)}\right)$ has update time
  $\Omega\left(\frac{\log(n)}{\log\log(n)}\right)$.
\end{corollary}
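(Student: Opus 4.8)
The plan is to derive Corollary~\ref{cor:lower} directly from Theorem~\ref{thm:lower} by a short asymptotic manipulation; essentially all the real work has already been done. First I would take $\query$ to be exactly the MSO query exhibited in Theorem~\ref{thm:lower} (the query selecting special nodes that have a marked ancestor). Then I would fix an arbitrary enumeration algorithm for $\query$ under relabelings, write $\hat t_u$ for its update time and $\hat t_e$ for its enumeration delay, and assume as the hypothesis of the corollary that $\hat t_e \in o\!\left(\frac{\log n}{\log\log n}\right)$.

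By Theorem~\ref{thm:lower} there is a constant $c>0$ such that for all sufficiently large $n$ we have $\max(\hat t_u, \hat t_e) \ge c\,\frac{\log n}{\log\log n}$. The hypothesis $\hat t_e \in o\!\left(\frac{\log n}{\log\log n}\right)$ means that for all sufficiently large $n$ we have $\hat t_e < c\,\frac{\log n}{\log\log n}$. Combining the two, for all sufficiently large $n$ the maximum on the left must be attained by $\hat t_u$, so $\hat t_u \ge c\,\frac{\log n}{\log\log n}$, which is exactly the statement $\hat t_u = \Omega\!\left(\frac{\log n}{\log\log n}\right)$ that we wanted.

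I do not expect any genuine obstacle in this step: the difficulty is entirely located upstream, in Theorem~\ref{thm:lower} (which reduces from the existential marked ancestor problem) and ultimately in the unconditional marked-ancestor lower bound of~\cite{AlstrupHR98} recalled as Theorem~\ref{thm:alstruphr}. The only point that requires a moment of care is the interplay of the $o(\cdot)$ and $\Omega(\cdot)$ quantifiers: Theorem~\ref{thm:lower} only bounds $\max(\hat t_u, \hat t_e)$ and holds for every $n$, so to transfer the bound onto $\hat t_u$ alone one genuinely needs $\hat t_e$ to be \emph{eventually} dominated by $c\,\frac{\log n}{\log\log n}$, which is precisely what the assumption $\hat t_e \in o\!\left(\frac{\log n}{\log\log n}\right)$ supplies.
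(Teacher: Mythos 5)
Your derivation is correct and is exactly the argument the paper intends: the corollary is stated without an explicit proof precisely because it follows from Theorem~\ref{thm:lower} by the elementary observation that if $\max(\hat t_u,\hat t_e)\ge c\,\frac{\log n}{\log\log n}$ for all large $n$ while $\hat t_e$ is eventually strictly below that threshold, the maximum must be attained by $\hat t_u$. Your remark about needing the $o(\cdot)$ hypothesis to hold eventually is the right (and only) point of care.
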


As Theorem~\ref{thm:alstruphr} from~\cite{AlstrupHR98}, our Theorem~\ref{thm:lower} also works for amortized time and with arbitrary preprocessing. Note also that the query $\query$ in the proof does not use the full power of MSO but can in fact be expressed in 
first-order logic with transitive closure, so our lower bound already holds for that fragment.
Note that it is not clear if we can show an analogous result for 
FO
without transitive closure: in particular, we cannot apply the results of~\cite{berkholz2017answering2} about enumeration for 
FO on bounded-degree structures, because
the lower bound in~\cite{AlstrupHR98} uses trees of unbounded degree.

\section{Conclusion}
\label{sec:conc}
We have shown an efficient algorithm to enumerate the assignments of MSO
formulae on trees under updates (relabeling, leaf insertions, leaf deletions),
with linear preprocessing in the input tree, linear delay in
each produced assignment (so constant if all free variables are first-order),
and with logarithmic update time (in the tree). Our work is the first to
match the bounds of the original results on MSO enumeration on trees without
updates, while allowing general updates on trees (refer back to
Table~\ref{tab:overview} for a comparison).
Our algorithm
is also the first to be tractable in combined complexity when the query is
given as a (generally nondeterministic) tree variable automaton, with the
preprocessing phase, enumeration delay, and update time being polynomial in the
automaton; this extends our previous results on
words~\cite{amarilli2019constant} to trees and to efficient updates.

Our results leave several directions open for future work. One question is to
improve the complexity in terms of the automaton, i.e., lowering the polynomial
degree, as we have shown to be possible in the case of queries on words. A
related question would be to perform enumeration for a more concise
automaton model, in particular allowing the automaton to represent
possible sets of captured variables more concisely, like
non-extended VAs \cite{florenzano2018constant,amarilli2019constant}: we
believe that this should be possible. We could also aim for more expressive automata models,
e.g., alternating automata or two-way automata; or other
query languages on trees, e.g., tree pattern queries.
Another open question is the support for more expressive update operations. In
the case of words, it would be natural to support bulk updates, i.e.,
moving a part of the text to a different place (see the conclusion
of~\cite{amarilli2019constant}). We believe that our techniques 
could adapt for such updates on words. As for trees, we currently do not know
how to handle updates that
split a subtree or attach a subtree.

One issue that we have not explored in the current paper is memory usage.
Constant memory usage was achieved in~\cite{amarilli2019constant} (for
nondeterministic sequential VAs) and
in~\cite{bagan2006mso,kazana2013enumeration,amarilli2019constant} (for MSO
queries with free first-order variables), but it does not hold for our
results (the memory usage may be linear in the circuit). We do not
know if we can achieve constant memory.

Finally, there is a gap of $\log\log(n)$ between our upper bound and
the lower bound, which it would be interesting to close.
Note that the marked ancestor problem in fact has an algorithm
with update complexity $O(\log(n)/\log\log(n))$, so we cannot hope to
close the gap by improving the lower bound in
Theorem~\ref{thm:alstruphr}.
It may be the case that better lower bounds can be shown for our
enumeration problem, but, going beyond $\Omega(\log(n)/\log\log(n))$
is generally considered very challenging. Indeed, the only paper that
we are aware of that achieves a lower bound of $\Omega(\log(n))$ for
any dynamic problem is~\cite{PatrascuD06}, but it does not imply a
lower bound in our setting because they allow more powerful updates
than we do.
Alternatively, it may be possible to improve our update complexity to
$O(\log(n)/\log\log(n))$, e.g., by adapting
the techniques of~\cite{AlstrupHR98}, but we have not been able to do so.

It would also be interesting to see what happens if we allow more
generous enumeration delay, say $O(\log(n))$. Can we get an algorithm
with updates that is faster than the lower bound of Corollary~\ref{cor:lower}
in that case, maybe even constant? This might be an interesting
trade-off in applications where updates are much more frequent
than enumeration queries.

\begin{acks}
  Bourhis was partially founded by the DeLTA French
  \grantsponsor{ANR}{ANR}{http://anr.fr} project (\grantnum{ANR}{ANR-16-CE40-0007}).
  Niewerth was supported by the \grantsponsor{DFG}{Deutsche
    Forschungsgemeinschaft}{http://dfg.de} (\grantnum{DFG}{MA 4938/4-1}).
\end{acks}
\clearpage

\bibliographystyle{ACM-Reference-Format}
\bibliography{main}


\begin{thebibliography}{37}


\ifx \showCODEN    \undefined \def \showCODEN     #1{\unskip}     \fi
\ifx \showDOI      \undefined \def \showDOI       #1{#1}\fi
\ifx \showISBNx    \undefined \def \showISBNx     #1{\unskip}     \fi
\ifx \showISBNxiii \undefined \def \showISBNxiii  #1{\unskip}     \fi
\ifx \showISSN     \undefined \def \showISSN      #1{\unskip}     \fi
\ifx \showLCCN     \undefined \def \showLCCN      #1{\unskip}     \fi
\ifx \shownote     \undefined \def \shownote      #1{#1}          \fi
\ifx \showarticletitle \undefined \def \showarticletitle #1{#1}   \fi
\ifx \showURL      \undefined \def \showURL       {\relax}        \fi
\providecommand\bibfield[2]{#2}
\providecommand\bibinfo[2]{#2}
\providecommand\natexlab[1]{#1}
\providecommand\showeprint[2][]{arXiv:#2}

\bibitem[\protect\citeauthoryear{Alstrup, Husfeldt, and Rauhe}{Alstrup
  et~al\mbox{.}}{1998}]%
        {AlstrupHR98}
\bibfield{author}{\bibinfo{person}{Stephen Alstrup}, \bibinfo{person}{Thore
  Husfeldt}, {and} \bibinfo{person}{Theis Rauhe}.}
  \bibinfo{year}{1998}\natexlab{}.
\newblock
  \showarticletitle{\href{http://citeseerx.ist.psu.edu/viewdoc/download?doi=10.1.1.71.3834&rep=rep1&type=pdf}{Marked
  ancestor problems}}. In \bibinfo{booktitle}{\emph{{FOCS}}}.
\newblock


\bibitem[\protect\citeauthoryear{Amarilli, Bourhis, Jachiet, and
  Mengel}{Amarilli et~al\mbox{.}}{2017}]%
        {amarilli2017circuit}
\bibfield{author}{\bibinfo{person}{Antoine Amarilli}, \bibinfo{person}{Pierre
  Bourhis}, \bibinfo{person}{Louis Jachiet}, {and} \bibinfo{person}{Stefan
  Mengel}.} \bibinfo{year}{2017}\natexlab{}.
\newblock
  \showarticletitle{\href{http://drops.dagstuhl.de/opus/volltexte/2017/7462/pdf/LIPIcs-ICALP-2017-111.pdf}
  {A Circuit-based approach to efficient enumeration}}. In
  \bibinfo{booktitle}{\emph{\mbox{\href{http://icalp17.mimuw.edu.pl/}{ICALP}}}}.
\newblock


\bibitem[\protect\citeauthoryear{Amarilli, Bourhis, Jachiet, and
  Mengel}{Amarilli et~al\mbox{.}}{2019a}]%
        {amarilli2017circuit_arxiv}
\bibfield{author}{\bibinfo{person}{Antoine Amarilli}, \bibinfo{person}{Pierre
  Bourhis}, \bibinfo{person}{Louis Jachiet}, {and} \bibinfo{person}{Stefan
  Mengel}.} \bibinfo{year}{2019}\natexlab{a}.
\newblock \bibinfo{title}{\href{https://arxiv.org/abs/1702.05589} {A
  Circuit-based approach to efficient enumeration}}.
\newblock
\newblock
\newblock
\shownote{Extended version with proofs.
  \url{https://arxiv.org/abs/1702.05589}.}


\bibitem[\protect\citeauthoryear{Amarilli, Bourhis, and Mengel}{Amarilli
  et~al\mbox{.}}{2018}]%
        {amarilli2018enumeration}
\bibfield{author}{\bibinfo{person}{Antoine Amarilli}, \bibinfo{person}{Pierre
  Bourhis}, {and} \bibinfo{person}{Stefan Mengel}.}
  \bibinfo{year}{2018}\natexlab{}.
\newblock
  \showarticletitle{\href{https://arxiv.org/abs/1709.06185?context=cs.LO}{Enumeration
  on trees under relabelings}}. In \bibinfo{booktitle}{\emph{ICDT}}.
\newblock


\bibitem[\protect\citeauthoryear{Amarilli, Bourhis, Mengel, and
  Niewerth}{Amarilli et~al\mbox{.}}{2019b}]%
        {amarilli2019constant}
\bibfield{author}{\bibinfo{person}{Antoine Amarilli}, \bibinfo{person}{Pierre
  Bourhis}, \bibinfo{person}{Stefan Mengel}, {and} \bibinfo{person}{Matthias
  Niewerth}.} \bibinfo{year}{2019}\natexlab{b}.
\newblock \showarticletitle{\href{https://arxiv.org/abs/1807.09320}
  {Constant-delay enumeration for nondeterministic document spanners}}. In
  \bibinfo{booktitle}{\emph{ICDT}}.
\newblock


\bibitem[\protect\citeauthoryear{Amarilli, Bourhis, Mengel, and
  Niewerth}{Amarilli et~al\mbox{.}}{2019c}]%
        {amarilli2019enumeration_arxiv}
\bibfield{author}{\bibinfo{person}{Antoine Amarilli}, \bibinfo{person}{Pierre
  Bourhis}, \bibinfo{person}{Stefan Mengel}, {and} \bibinfo{person}{Matthias
  Niewerth}.} \bibinfo{year}{2019}\natexlab{c}.
\newblock \bibinfo{title}{\href{https://arxiv.org/abs/1812.09519} {Enumeration
  on trees with tractable combined complexity and efficient updates}}.
\newblock
\newblock
\newblock
\shownote{Extended version with proofs.
  \url{https://arxiv.org/abs/1812.09519}.}


\bibitem[\protect\citeauthoryear{Amarilli, Bourhis, and Senellart}{Amarilli
  et~al\mbox{.}}{2015}]%
        {amarilli2015provenance}
\bibfield{author}{\bibinfo{person}{Antoine Amarilli}, \bibinfo{person}{Pierre
  Bourhis}, {and} \bibinfo{person}{Pierre Senellart}.}
  \bibinfo{year}{2015}\natexlab{}.
\newblock \showarticletitle{\href{https://arxiv.org/abs/1511.08723}{Provenance
  circuits for trees and treelike instances}}. In
  \bibinfo{booktitle}{\emph{ICALP}}.
\newblock


\bibitem[\protect\citeauthoryear{Bagan}{Bagan}{2006}]%
        {bagan2006mso}
\bibfield{author}{\bibinfo{person}{Guillaume Bagan}.}
  \bibinfo{year}{2006}\natexlab{}.
\newblock \showarticletitle{{MSO} queries on tree decomposable structures are
  computable with linear delay}. In \bibinfo{booktitle}{\emph{CSL}}.
\newblock


\bibitem[\protect\citeauthoryear{Bagan, Durand, and Grandjean}{Bagan
  et~al\mbox{.}}{2007}]%
        {bagan2007acyclic}
\bibfield{author}{\bibinfo{person}{Guillaume Bagan}, \bibinfo{person}{Arnaud
  Durand}, {and} \bibinfo{person}{Etienne Grandjean}.}
  \bibinfo{year}{2007}\natexlab{}.
\newblock
  \showarticletitle{\href{http://www.logique.jussieu.fr/~durand/webperso/papers/BDGcsl07.pdf}{On
  Acyclic conjunctive queries and constant delay enumeration}}. In
  \bibinfo{booktitle}{\emph{CSL}}.
\newblock


\bibitem[\protect\citeauthoryear{Balmin, Papakonstantinou, and Vianu}{Balmin
  et~al\mbox{.}}{2004}]%
        {balmin2004incremental}
\bibfield{author}{\bibinfo{person}{Andrey Balmin}, \bibinfo{person}{Yannis
  Papakonstantinou}, {and} \bibinfo{person}{Victor Vianu}.}
  \bibinfo{year}{2004}\natexlab{}.
\newblock
  \showarticletitle{\href{http://db.ucsd.edu/wp-content/uploads/pdfs/212.pdf}{Incremental
  validation of {XML} documents}}.
\newblock \bibinfo{journal}{\emph{TODS}} \bibinfo{volume}{29},
  \bibinfo{number}{4} (\bibinfo{year}{2004}).
\newblock


\bibitem[\protect\citeauthoryear{Berkholz, Keppeler, and Schweikardt}{Berkholz
  et~al\mbox{.}}{2017a}]%
        {berkholz2017answering2}
\bibfield{author}{\bibinfo{person}{Christoph Berkholz}, \bibinfo{person}{Jens
  Keppeler}, {and} \bibinfo{person}{Nicole Schweikardt}.}
  \bibinfo{year}{2017}\natexlab{a}.
\newblock
  \showarticletitle{\href{http://drops.dagstuhl.de/opus/volltexte/2017/7053/pdf/LIPIcs-ICDT-2017-8.pdf}{Answering
  {FO+MOD} queries under updates on bounded degree databases}}. In
  \bibinfo{booktitle}{\emph{{ICDT}}}.
\newblock


\bibitem[\protect\citeauthoryear{Berkholz, Keppeler, and Schweikardt}{Berkholz
  et~al\mbox{.}}{2017b}]%
        {berkholz2017answering}
\bibfield{author}{\bibinfo{person}{Christoph Berkholz}, \bibinfo{person}{Jens
  Keppeler}, {and} \bibinfo{person}{Nicole Schweikardt}.}
  \bibinfo{year}{2017}\natexlab{b}.
\newblock \showarticletitle{\href{https://arxiv.org/abs/1702.06370}{Answering
  conjunctive queries under updates}}. In \bibinfo{booktitle}{\emph{{PODS}}}.
\newblock


\bibitem[\protect\citeauthoryear{Berkholz, Keppeler, and Schweikardt}{Berkholz
  et~al\mbox{.}}{2018}]%
        {berkholz2018answering}
\bibfield{author}{\bibinfo{person}{Christoph Berkholz}, \bibinfo{person}{Jens
  Keppeler}, {and} \bibinfo{person}{Nicole Schweikardt}.}
  \bibinfo{year}{2018}\natexlab{}.
\newblock
  \showarticletitle{\href{http://drops.dagstuhl.de/opus/volltexte/2018/8599/}
  {Answering UCQs under updates and in the presence of integrity constraints}}.
  In \bibinfo{booktitle}{\emph{ICDT}}.
\newblock


\bibitem[\protect\citeauthoryear{Bodlaender and Hagerup}{Bodlaender and
  Hagerup}{1998}]%
        {bodlaender1998parallel}
\bibfield{author}{\bibinfo{person}{Hans~L. Bodlaender} {and}
  \bibinfo{person}{Torben Hagerup}.} \bibinfo{year}{1998}\natexlab{}.
\newblock
  \showarticletitle{\href{http://citeseerx.ist.psu.edu/viewdoc/summary?doi=10.1.1.19.5682}{Parallel
  algorithms with optimal speedup for bounded treewidth}}.
\newblock \bibinfo{journal}{\emph{SIAM J. Comput.}} \bibinfo{volume}{27},
  \bibinfo{number}{6} (\bibinfo{year}{1998}).
\newblock


\bibitem[\protect\citeauthoryear{Boja{\'n}czyk}{Boja{\'n}czyk}{2010}]%
        {Bojanczyk10}
\bibfield{author}{\bibinfo{person}{M. Boja{\'n}czyk}.}
  \bibinfo{year}{2010}\natexlab{}.
\newblock
  \showarticletitle{\href{http://citeseerx.ist.psu.edu/viewdoc/summary?doi=10.1.1.297.2952}{Algebra
  for trees}}. In \bibinfo{booktitle}{\emph{Handbook of Automata Theory}}.
  \bibinfo{publisher}{European Mathematical Society Publishing}.
\newblock


\bibitem[\protect\citeauthoryear{Boja{\' n}czyk and Walukiewicz}{Boja{\' n}czyk
  and Walukiewicz}{2007}]%
        {BojanczykW07}
\bibfield{author}{\bibinfo{person}{M. Boja{\' n}czyk} {and} \bibinfo{person}{I.
  Walukiewicz}.} \bibinfo{year}{2007}\natexlab{}.
\newblock
  \showarticletitle{\href{https://hal.archives-ouvertes.fr/hal-00105796}{Forest
  algebras}}. In \bibinfo{booktitle}{\emph{Automata and Logic: History and
  Perspectives}}. \bibinfo{publisher}{Amsterdam University Press}.
\newblock


\bibitem[\protect\citeauthoryear{Capelli and Mengel}{Capelli and
  Mengel}{2019}]%
        {capelli2019knowledge}
\bibfield{author}{\bibinfo{person}{Florent Capelli} {and}
  \bibinfo{person}{Stefan Mengel}.} \bibinfo{year}{2019}\natexlab{}.
\newblock
  \showarticletitle{\href{http://drops.dagstuhl.de/opus/volltexte/2019/10257/}{Tractable
  QBF by knowledge compilation}}. In \bibinfo{booktitle}{\emph{{STACS}}}.
\newblock


\bibitem[\protect\citeauthoryear{Carme, Niehren, and Tommasi}{Carme
  et~al\mbox{.}}{2004}]%
        {CarmeNT-rta04}
\bibfield{author}{\bibinfo{person}{J. Carme}, \bibinfo{person}{J. Niehren},
  {and} \bibinfo{person}{M. Tommasi}.} \bibinfo{year}{2004}\natexlab{}.
\newblock \showarticletitle{\href{https://hal.inria.fr/inria-00536529}{Querying
  unranked trees with stepwise tree automata}}. In
  \bibinfo{booktitle}{\emph{RTA}}.
\newblock


\bibitem[\protect\citeauthoryear{Carmeli and Kröll}{Carmeli and
  Kröll}{2018}]%
        {carmeli2018enumeration}
\bibfield{author}{\bibinfo{person}{Nofar Carmeli} {and} \bibinfo{person}{Markus
  Kröll}.} \bibinfo{year}{2018}\natexlab{}.
\newblock
  \showarticletitle{\href{http://drops.dagstuhl.de/opus/volltexte/2018/8598/}
  {Enumeration complexity of conjunctive queries with functional
  dependencies}}. In \bibinfo{booktitle}{\emph{ICDT}}.
\newblock


\bibitem[\protect\citeauthoryear{Darwiche}{Darwiche}{2001}]%
        {darwiche2001tractable}
\bibfield{author}{\bibinfo{person}{Adnan Darwiche}.}
  \bibinfo{year}{2001}\natexlab{}.
\newblock \showarticletitle{\href{https://arxiv.org/abs/cs/0003044}{On the
  tractable counting of theory models and its application to truth maintenance
  and belief revision}}.
\newblock \bibinfo{journal}{\emph{J. Applied Non-Classical Logics}}
  \bibinfo{volume}{11}, \bibinfo{number}{1-2} (\bibinfo{year}{2001}).
\newblock


\bibitem[\protect\citeauthoryear{Durand and Grandjean}{Durand and
  Grandjean}{2007}]%
        {durand2007first}
\bibfield{author}{\bibinfo{person}{Arnaud Durand} {and}
  \bibinfo{person}{Etienne Grandjean}.} \bibinfo{year}{2007}\natexlab{}.
\newblock
  \showarticletitle{\href{https://hal.archives-ouvertes.fr/hal-00195016}{First-order
  queries on structures of bounded degree are computable with constant delay}}.
\newblock \bibinfo{journal}{\emph{TOCL}} \bibinfo{volume}{8},
  \bibinfo{number}{4} (\bibinfo{year}{2007}).
\newblock


\bibitem[\protect\citeauthoryear{Fagin, Kimelfeld, Reiss, and
  Vansummeren}{Fagin et~al\mbox{.}}{2015}]%
        {FaginKRV15}
\bibfield{author}{\bibinfo{person}{Ronald Fagin}, \bibinfo{person}{Benny
  Kimelfeld}, \bibinfo{person}{Frederick Reiss}, {and} \bibinfo{person}{Stijn
  Vansummeren}.} \bibinfo{year}{2015}\natexlab{}.
\newblock
  \showarticletitle{\href{https://pdfs.semanticscholar.org/8df0/ad1c6aa0df93e58071b8afe3371a16a3182f.pdf}{Document
  spanners: {A} formal approach to information extraction}}.
\newblock \bibinfo{journal}{\emph{JACM}} \bibinfo{volume}{62},
  \bibinfo{number}{2} (\bibinfo{year}{2015}).
\newblock


\bibitem[\protect\citeauthoryear{Florenzano, Riveros, Ugarte, Vansummeren, and
  Vrgoc}{Florenzano et~al\mbox{.}}{2018}]%
        {florenzano2018constant}
\bibfield{author}{\bibinfo{person}{Fernando Florenzano},
  \bibinfo{person}{Cristian Riveros}, \bibinfo{person}{Martin Ugarte},
  \bibinfo{person}{Stijn Vansummeren}, {and} \bibinfo{person}{Domagoj Vrgoc}.}
  \bibinfo{year}{2018}\natexlab{}.
\newblock \showarticletitle{\href{https://arxiv.org/abs/1803.05277}{Constant
  delay algorithms for regular document spanners}}. In
  \bibinfo{booktitle}{\emph{PODS}}.
\newblock


\bibitem[\protect\citeauthoryear{Gall}{Gall}{2014}]%
        {Gall14a}
\bibfield{author}{\bibinfo{person}{Fran{\c{c}}ois~Le Gall}.}
  \bibinfo{year}{2014}\natexlab{}.
\newblock \showarticletitle{\href{https://arxiv.org/abs/1401.7714}{Powers of
  tensors and fast matrix multiplication}}. In
  \bibinfo{booktitle}{\emph{{ISSAC}}}.
\newblock


\bibitem[\protect\citeauthoryear{Kazana and Segoufin}{Kazana and
  Segoufin}{2013}]%
        {kazana2013enumeration}
\bibfield{author}{\bibinfo{person}{Wojciech Kazana} {and} \bibinfo{person}{Luc
  Segoufin}.} \bibinfo{year}{2013}\natexlab{}.
\newblock
  \showarticletitle{\href{https://hal.archives-ouvertes.fr/docs/00/90/70/85/PDF/cdlin-survey.pdf}{Enumeration
  of monadic second-order queries on trees}}.
\newblock \bibinfo{journal}{\emph{TOCL}} \bibinfo{volume}{14},
  \bibinfo{number}{4} (\bibinfo{year}{2013}).
\newblock


\bibitem[\protect\citeauthoryear{Losemann and Martens}{Losemann and
  Martens}{2014}]%
        {losemann2014mso}
\bibfield{author}{\bibinfo{person}{Katja Losemann} {and} \bibinfo{person}{Wim
  Martens}.} \bibinfo{year}{2014}\natexlab{}.
\newblock
  \showarticletitle{\href{http://www.theoinf.uni-bayreuth.de/download/lics14-preprint.pdf}{{MSO}
  queries on trees: enumerating answers under updates}}. In
  \bibinfo{booktitle}{\emph{CSL-LICS}}.
\newblock


\bibitem[\protect\citeauthoryear{Martens and Niehren}{Martens and
  Niehren}{2007}]%
        {MartensN-jcss07a}
\bibfield{author}{\bibinfo{person}{W. Martens} {and} \bibinfo{person}{J.
  Niehren}.} \bibinfo{year}{2007}\natexlab{}.
\newblock \showarticletitle{\href{https://hal.inria.fr/inria-00088406v2}{On the
  Minimization of {XML} {S}chemas and tree automata for unranked trees}}.
\newblock \bibinfo{journal}{\emph{JCSS}} \bibinfo{volume}{73},
  \bibinfo{number}{4} (\bibinfo{year}{2007}).
\newblock


\bibitem[\protect\citeauthoryear{Martens and Trautner}{Martens and
  Trautner}{2018}]%
        {martens2018evaluation}
\bibfield{author}{\bibinfo{person}{Wim Martens} {and} \bibinfo{person}{Tina
  Trautner}.} \bibinfo{year}{2018}\natexlab{}.
\newblock
  \showarticletitle{\href{http://drops.dagstuhl.de/opus/volltexte/2018/8594/}
  {Evaluation and enumeration problems for regular path queries}}. In
  \bibinfo{booktitle}{\emph{ICDT}}.
\newblock


\bibitem[\protect\citeauthoryear{Meyer}{Meyer}{1975}]%
        {meyer1975weak}
\bibfield{author}{\bibinfo{person}{Albert~R Meyer}.}
  \bibinfo{year}{1975}\natexlab{}.
\newblock \showarticletitle{Weak monadic second order theory of succesor is not
  elementary-recursive}. In \bibinfo{booktitle}{\emph{Logic Colloquium}}.
\newblock


\bibitem[\protect\citeauthoryear{Niewerth}{Niewerth}{2018}]%
        {Niewerth18}
\bibfield{author}{\bibinfo{person}{Matthias Niewerth}.}
  \bibinfo{year}{2018}\natexlab{}.
\newblock \showarticletitle{{MSO} queries on trees: {E}numerating answers under
  updates using forest algebras}. In \bibinfo{booktitle}{\emph{LICS}}.
\newblock


\bibitem[\protect\citeauthoryear{Niewerth and Segoufin}{Niewerth and
  Segoufin}{2018}]%
        {NiewerthS18}
\bibfield{author}{\bibinfo{person}{Matthias Niewerth} {and}
  \bibinfo{person}{Luc Segoufin}.} \bibinfo{year}{2018}\natexlab{}.
\newblock
  \showarticletitle{\href{http://www.di.ens.fr/~segoufin/Papers/Mypapers/enum-update-words.pdf}{Enumeration
  of {MSO} queries on strings with constant delay and logarithmic updates}}. In
  \bibinfo{booktitle}{\emph{PODS}}.
\newblock


\bibitem[\protect\citeauthoryear{Olteanu and Schleich}{Olteanu and
  Schleich}{2016}]%
        {olteanu2016factorized}
\bibfield{author}{\bibinfo{person}{Dan Olteanu} {and}
  \bibinfo{person}{Maximilian Schleich}.} \bibinfo{year}{2016}\natexlab{}.
\newblock
  \showarticletitle{\href{http://www.cs.ox.ac.uk/dan.olteanu/papers/os-sigrec16.pdf}{Factorized
  databases}}.
\newblock \bibinfo{journal}{\emph{SIGMOD Record}} \bibinfo{volume}{45},
  \bibinfo{number}{2} (\bibinfo{year}{2016}).
\newblock


\bibitem[\protect\citeauthoryear{Patrascu and Demaine}{Patrascu and
  Demaine}{2006}]%
        {PatrascuD06}
\bibfield{author}{\bibinfo{person}{Mihai Patrascu} {and}
  \bibinfo{person}{Erik~D. Demaine}.} \bibinfo{year}{2006}\natexlab{}.
\newblock \showarticletitle{\href{https://arxiv.org/abs/cs/0502041}{Logarithmic
  lower bounds in the cell-probe model}}.
\newblock \bibinfo{journal}{\emph{{SIAM} J. Comput.}} \bibinfo{volume}{35},
  \bibinfo{number}{4} (\bibinfo{year}{2006}).
\newblock


\bibitem[\protect\citeauthoryear{Pipatsrisawat and Darwiche}{Pipatsrisawat and
  Darwiche}{2008}]%
        {PipatsrisawatD08}
\bibfield{author}{\bibinfo{person}{Knot Pipatsrisawat} {and}
  \bibinfo{person}{Adnan Darwiche}.} \bibinfo{year}{2008}\natexlab{}.
\newblock
  \showarticletitle{\href{http://aaai.org/Papers/AAAI/2008/AAAI08-082.pdf}{New
  compilation languages based on structured decomposability}}. In
  \bibinfo{booktitle}{\emph{{AAAI}}}.
\newblock


\bibitem[\protect\citeauthoryear{Schweikardt, Segoufin, and Vigny}{Schweikardt
  et~al\mbox{.}}{2018}]%
        {schweikardt2018enumeration}
\bibfield{author}{\bibinfo{person}{Nicole Schweikardt}, \bibinfo{person}{Luc
  Segoufin}, {and} \bibinfo{person}{Alexandre Vigny}.}
  \bibinfo{year}{2018}\natexlab{}.
\newblock
  \showarticletitle{\href{https://www.di.ens.fr/~segoufin/Papers/Mypapers/enum-nowheredense.pdf}{Enumeration
  for {FO} queries over nowhere dense graphs}}. In
  \bibinfo{booktitle}{\emph{PODS}}.
\newblock


\bibitem[\protect\citeauthoryear{Segoufin}{Segoufin}{2014}]%
        {Segoufin14}
\bibfield{author}{\bibinfo{person}{Luc Segoufin}.}
  \bibinfo{year}{2014}\natexlab{}.
\newblock \showarticletitle{\href{https://hal.inria.fr/hal-01070893/document}{A
  glimpse on constant delay enumeration (Invited talk)}}. In
  \bibinfo{booktitle}{\emph{{STACS}}}.
\newblock


\bibitem[\protect\citeauthoryear{Thatcher and Wright}{Thatcher and
  Wright}{1968}]%
        {thatcher1968generalized}
\bibfield{author}{\bibinfo{person}{James~W. Thatcher} {and}
  \bibinfo{person}{Jesse~B. Wright}.} \bibinfo{year}{1968}\natexlab{}.
\newblock \showarticletitle{Generalized finite automata theory with an
  application to a decision problem of second-order logic}.
\newblock \bibinfo{journal}{\emph{Math. Systems Theory}} \bibinfo{volume}{2},
  \bibinfo{number}{1} (\bibinfo{year}{1968}).
\newblock


\end{thebibliography}

\end{document}